\newcommand{\Ric}{\textup{Ric}}
\newcommand{\eK}{K^{(e)}}
\newcommand{\ud}{\textup{d}}
\newcommand{\NN}{\mathbb{N}}
\newcommand{\MM}{\mathcal{M}}
\newcommand{\RR}{\mathbb{R}}
\newtheorem{theorem}{Theorem}[section]
\newtheorem{defn}[theorem]{Definition}
\newtheorem{cor}[theorem]{Corollary}
\newtheorem{thm}{Theorem}[section]
\newtheorem{rem}{Remark}[section]
\newtheorem{lemma}{Lemma}
\theoremstyle{remark}
\newtheorem{assumption}[thm]{Assumption}
\acrodef{AD}{alternating diffusion}
\acrodef{SVD}{singular value decomposition}
\acrodef{EVD}{eigenvalue decomposition}
\begin{document}

\begin{frontmatter}
\title{Latent common manifold learning with alternating diffusion: analysis and applications}

\author[RT]{Ronen~Talmon}
\ead{ronen@ef.technion.ac.il}
\author[HTW,HTW2]{Hau-Tieng~Wu}
\ead{hauwu@math.duke.edu}

\address[RT]{Viterbi Faculty of Electrical Engineering, Technion - Israel Institute of Technology, Haifa, Israel}
\address[HTW]{Department of Mathematics and Department of Statistical Science, Duke University, Durham, NC, USA}
\address[HTW2]{Mathematics Division, National Center for Theoretical Sciences, Taipei, Taiwan}


\begin{abstract}

The analysis of data sets arising from multiple sensors has drawn significant research attention over the years. Traditional methods, including kernel-based methods, are typically incapable of capturing nonlinear geometric structures. 
We introduce a latent common manifold model underlying multiple sensor observations for the purpose of multimodal data fusion. A method based on alternating diffusion is presented and analyzed; we provide theoretical analysis of the method under the latent common manifold model. To exemplify the power of the proposed framework, experimental results in several applications are reported.

\end{abstract}

\begin{keyword}
common manifold, alternating diffusion, sensor fusion, multimodal sensor, data fusion, seasonality, diffusion maps, manifold learning
\end{keyword}

\end{frontmatter}

\section{Introduction}
One of the long-standing challenges in signal processing is the fusion of information acquired by multiple, multimodal sensors. The problem of information fusion has become particularly central in the wake of recent technological advances, which have led to extensive collection and storage of multimodal data. Nowadays, many devices and systems, e.g., cell-phones, laptops, and wearable-devices, incorporate more than one sensor, often of different types. Of particular interest in the context of this paper are the massive data sets of medical recordings and healthcare-related information, acquired routinely, for example, in operation rooms, intensive care units, and clinics. The availability of such distinct and complementary information calls for the development of new theories and methods, leveraging it toward achieving concrete data analysis objectives, such as filtering and prediction, in a broad range of fields.

Problems in multimodal signal processing has been studied for many years and has been approached from various
research directions \cite{lahat2015multimodal}.  A classic approach for such problems is Canonical Correlation Analysis (CCA) \cite{hotelling_relations_1936}, which recovers highly correlated linear projections from two data sets. To extend the linear setting and to address aspects of nonlinearities, CCA was applied in a kernel space (e.g., \cite{lai2000kernel,bach2003kernel}). Recently, ample work based on the optimization criterion of CCA and kernels has been presented, addressing multi-view problems, and in particular, using multi-kernel learning (e.g. \cite{bach2004multiple,lanckriet2004learning}) and a variety of manipulations and combinations of kernels, e.g. \cite{boots_two_manifold_2012,de_sa_multi_view_2010,de_sa_spectral_2005,luo_mixed_2012,luo_shape_2013,zhou2012fusion,kumar_cotraining_2011,xiang_bai_cotransduction_2012,hsinchien_huang_affinity_2012,wu_unsupervised_2013,bo_wang_unsupervised_2012,lindenbaum2015learning,lindenbaum2015multiview,eynard2012multimodal,michaeli2015nonparametric}. 

Our exposition begins by addressing a particular baseline problem. Consider multiple sensors measuring the same physical phenomenon, such that the properties of the physical phenomenon are manifested as a hidden manifold (which we would like to extract), while each sensor presents its own deformation and has its own sensor-specific effects (hidden {\em nuisance} variables, which we would like to suppress). We assume that the relations between the measurements and the {\em nuisance} variables are unknown. The goal is to uncover the common latent manifold and to suppress the sensor-specific variables, thereby extracting the essence of the data and separating the relevant information from the irrelevant information.

This baseline problem highlights an important aspect in the analysis of multimodal data sets; that is, the sensor-related variables may not be strictly related to noise and interferences. Often, such variables exhibit ``structures'', such as the position and orientation of the acquiring sensor, environmental effects, and channel characteristics. To address this, we propose an approach based on manifold learning. The power of manifold learning can be exploited in this setting, since it is designed to capture nonlinear topological and geometric structures underlying data, and it does not require prior model knowledge, which can be particularly hard to obtain in the case of multiple modalities. Manifold learning is a class of nonlinear data-driven methods, e.g. \cite{Tenenbaum2000,Roweis2000,Donoho2003,Belkin_Niyogi:2003}, often used to extract the underlying structures in a given data set. Of particular interest in the context of this paper is diffusion map (DM), \cite{Coifman_Lafon:2006}, in which discrete diffusion processes are constructed on the given data points; these diffusion processes are designed to capture the geometry of the underlying variability in a single data set. 
Multimodal data present a challenge to such a geometric analysis approach, since multiple sensors often lead to undesired geometric structures stemming from the diversity of the different acquisition techniques used in the sensors, making it more difficult to identify and extract only the ``important" variables. Nevertheless, multiple data sets from various sensors encompass more information, and therefore, enable us to recover a more reliable description of the measured (physical) phenomenon. Based on manifold learning, several methods have been proposed to analyze simultaneously multiple data sets. One approach is to concatenate the vectors representing each data set into one vector \cite{davenport2010joint}; however, the question of how each data set should be scaled and concatenated naturally arises, especially if the data sets are acquired by very different modalities. To address such scaling aspects, it has been proposed in \cite{keller_audio_visual_2010} to use DM to obtain a low-dimensional ``standardized" representation of each data set, and then to concatenate these representations. However, such methods aggregate all the variables from all the given data sets, and they neither distinguish the important information nor discard the sensor-specific variables.

Our research direction involving geometric analysis encompasses several significant advantages. First, the method we present is data-driven and ``model-free'' in the sense that in addition to the manifold assumption, it does not rely on prior knowledge. In multimodal problems, this is an important advantage, since it circumvents the need to design an appropriate model for each modality, as well as the ``hard wiring" required for the fusion of different data sets. Second, manifold learning methods are typically formulated in general settings, and therefore, do not require strong assumptions on the nature of the data or on the nature of the sensors. As a result, our method is restricted to neither certain applications nor to multi-view problems, consisting of data acquired only by a single type of sensors. Third, the combination of geometric analysis, enabling to integrate subtle patterns and structures underlying data, and the availability of multiple data sets providing complementary information, gives rise to the discovery of intrinsic structures. Furthermore, the diffusion approach has been shown to be robust to noise \cite{ElKaroui:2010a,ElKaroui_Wu:2014}, and hence allows us to devise a reliable approach for extracting the interesting information from highly noisy data.

Recently, a data-driven method to recover the common latent variable underlying multiple, multimodal sensor data based on alternating products of diffusion operators was presented \cite{lederman2015alternating,lederman2015icassp}; we refer to this method as \ac{AD}. 
It was shown both theoretically and in illustrative examples (with real recordings) that this method extracts useful information about the common source of variability in a multi-sensor experiment as if it were a typically diffusion operator applied directly to data sampled from only the interesting/common source of variability.
The formulation and analysis in \cite{lederman2015alternating} are based on a setting including only metric spaces and the common source of variability is identical in the different sensors. 

In the current work, we extend \cite{lederman2015alternating} and enhance the theoretical results by explicitly introducing a setting with \emph{a common manifold}, which can be well accessed by different sensors, while different types of deformations are introduced due to various effects of the specific sensors.
We focus on the case in which there is a common manifold underlying all the given data sets, and propose a data-driven method for recovering the common manifold and constructing its representation. More specifically, in an analogous way to the classic diffusion geometry approach \cite{Coifman_Lafon:2006}, we show that by using a product of diffusion operators in DM, we are able to approximate a modified/deformed Laplace operator on the underlying common manifold.


Our main contribution in this paper can be summarized as follows. (i) While most current manifold learning methods only address a single data set arising from a single manifold, we extend this basic setting and present a technique addressing several data sets arising from multiple manifolds. (ii) We introduce a new concept of ``nonlinear manifold filtering''{, that is, removing the influence of the nuisance variables,} with a rigorous theoretical foundations and analysis. (iii) The ability to extract the common manifold underlying several data sets enables us to propose new methods for the multimodal sensor fusion problem. These methods consist of a scheme to incorporate nonlinear geometric priors into a manifold filtering procedure and are demonstrated on illustrative (real) examples. Specifically, we show application to the analysis of sleep dynamics as well as to seasonal pattern detection (a problem which is commonly encountered in time series analysis and will be introduced in the sequel). 

The remainder of this paper is organized as follows. In Section \ref{Section:CommonManifoldAlternationDiffusion}, the common manifold model for multiple data sets collected from multimodal sensors is introduced. With the common manifold model, the \ac{AD} algorithm is formulated and its analysis is provided. To further study the behavior of \ac{AD} in the common manifold model, in Section \ref{Section:ADAnalysis}, asymptotical analysis results are provided. More details about the \ac{AD} algorithm for practical purposes are discussed in {Section \ref{Section:ADalgorithm}}. In Section \ref{Section:Seasonality}, we exploit the common manifold model and design an \ac{AD} algorithm to detect seasonal patterns in time series. We provide a seasonality index to quantify the seasonal effects. In Section \ref{Section:Sleep}, a sleep data set is studied, demonstrating the power of the common manifold model and the \ac{AD} algorithm in the medical field. We will show that among different pairs of sensors, different physiological information is obtained. In Section \ref{Section:Extension}, the prototypical extension problem of manifold representations is considered and discussed for the common manifold model and the \ac{AD} algorithm. Discussion and future directions are outlined in Section \ref{Section:Conclusion}. The proofs of the theoretical results from Section \ref{Section:ADAnalysis} are presented in the Appendix.

\section{Common Manifold Model and Alternating Diffusion}
\label{Section:CommonManifoldAlternationDiffusion}

The success of the \ac{AD} algorithm has been shown in different problems, for example, for sleep depth analysis \cite{lederman2015icassp}. 
To analyze the algorithm, in \cite{lederman2015alternating}, the common variable as well as the nuisance variables specific for each sensor are modeled by metric measure spaces{; specifically, it is shown that the diffusion distance calculated from the data via \ac{AD} is equivalent to an ``effective alternating-diffusion distance'', which is defined only on one common variable shared by the two sensors \cite[Theorem 5 and Equation 79]{lederman2015alternating}}. However, in some problems, {the data may exhibit additional structures, which could be extracted and exploited.} In this paper, we consider such a case where the common variable has a distinct geometric structure, which is modelled by a manifold. Yet, even if we assume that the system we observe remains fixed during the observation, the data collected from different sensors might depend on the observation procedure or be contaminated by different irrelevant information. {In particular, the common manifold might be deformed differently by different sensors}. As a result, special focus is given to possible dependencies between the ``common manifold'' and the sensors. These considerations are manifested in the setting presented in this section, and our focus in the remainder of the paper is on studying the geometric information that can be obtained from \ac{AD}.

\begin{figure}[th]
\begin{centering} 
\includegraphics[width=.8\textwidth]{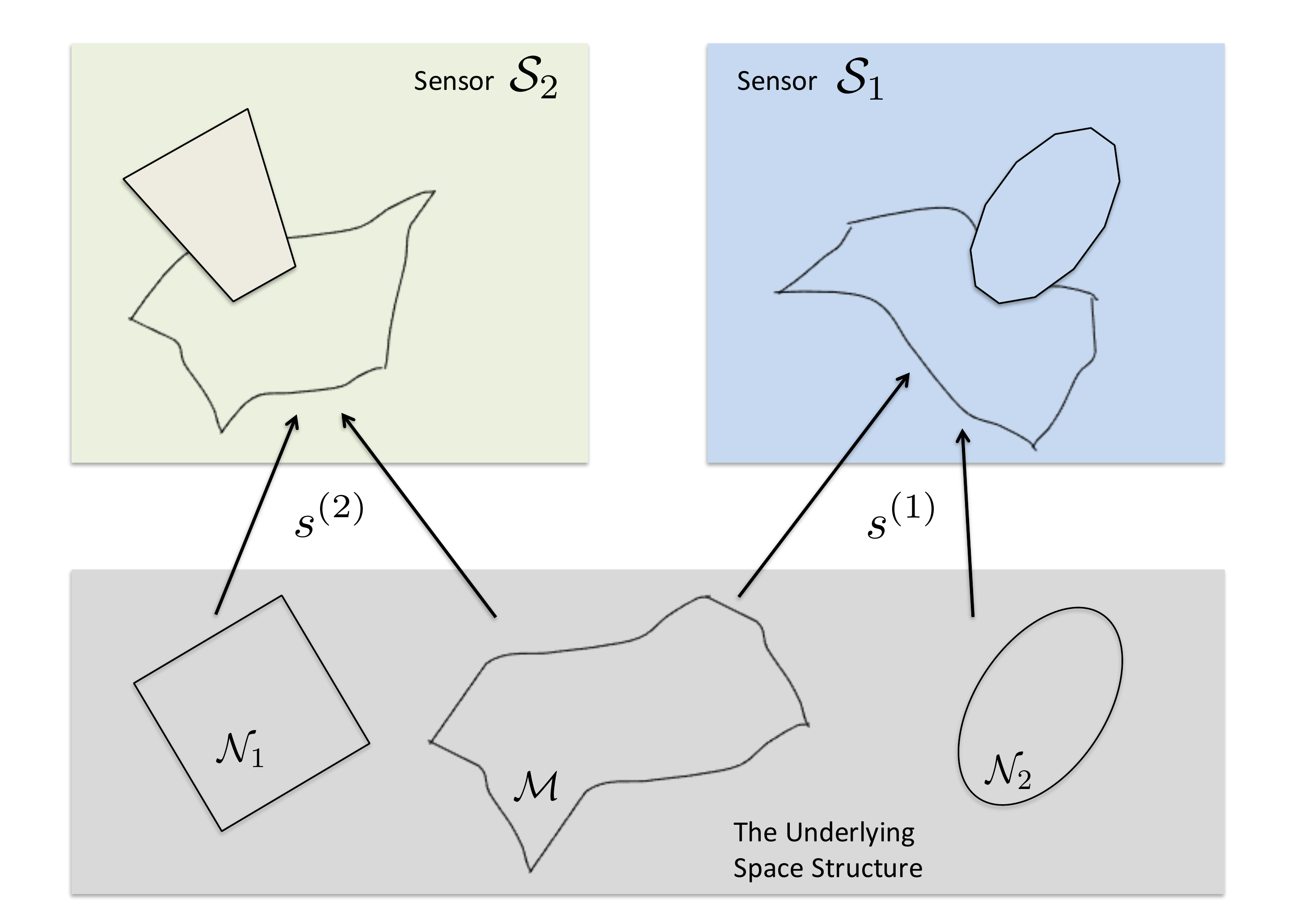}
\caption{A diagram illustrating our setting with a common manifold underlying two sensor observations.}
\label{fig:diagram}
\end{centering} 
\end{figure}

\subsection{Common manifold model}

Consider a setup consisting of multiple sensors observing a system or a phenomenon of interest {\em simultaneously}. To simplify the exposition, we focus on a setup with merely two sensors, noting that our analysis can be generalized to multiple sensors with slight modifications.

\subsubsection{Geometric model}

Suppose there exists a common structure underlying the two sensor observations, which is modeled by a low dimensional {\em common manifold}, and suppose that each sensor introduces various deformations and interferences, which are modeled by other irrelevant nuisance structures.
Mathematically, denote the common structure of interest, $\MM$, by a $d$-dim Riemannian manifold with the metric tensor $g^{(i)}$, which depends on the sensor, where $i=1,2$ is the sensor index, and let $d_{g^{(i)}}$ denote the distance function on $\MM$ associated with $g^{(i)}$. To accommodate possible discrepancies and deformations between the common manifold observed through the two sensors, we allow the metrics to be different, i.e., $g^{(1)}\neq g^{(2)}$.
For $i=1,2$, let $\mathcal{N}_i$ be a compact metric space with the distance function $d_{\mathcal{N}_i}$, which models the irrelevant nuisance structures. 
Let $\mathcal{S}_i$ denote the \emph{observable} metric space with the distance function $d_{\mathcal{S}_i}$, which models the space of the collected/accessible data by the $i$-th sensor.
Assume that 
\begin{align}\label{eq:sensors}
s^{(i)}:\MM\times\mathcal{N}_1\times \mathcal{N}_2\to \mathcal{S}_i
\end{align} 
is a smooth isometric embedding of $\MM\times\mathcal{N}_i$ into $\mathcal{S}_i$ modeling how the $i$-th sensor collects data, where $i=1,2$. Importantly, note that $s^{(1)}$ ignores $\mathcal{N}_2$ (as it represents the interferences specific to the the second sensor), and that $s^{(2)}$ ignores $\mathcal{N}_1$ (as it represents the interferences specific to the first sensor). 
In other words, each sensor typically acquires two structures: a deformation of the common manifold of interest and an additional nuisance structure. The model described here implies that we do not have access to the structures underlying each sensor (the deformed common manifold $\mathcal{M}_i$ and the nuisance structure $\mathcal{N}_i$), nor to the mapping to the observable sensor space \eqref{eq:sensors}.
In addition, \eqref{eq:sensors} means that for $(x,y,z), (x',y',z')\in \MM\times\mathcal{N}_1\times\mathcal{N}_2$, where $x,x'\in \mathcal{M}$, $y,y'\in\mathcal{N}_1$ and $z,z'\in\mathcal{N}_2$, we have 
\begin{align}
&d_{\mathcal{S}_1}(s^{(1)}(x,y,z),s^{(1)}(x',y',z'))^2=d_{g^{(1)}}(x,x')^2+d_{\mathcal{N}_1}(y,y')^2\label{Equation:MetricDefinition:RelationshipExact}\\
&d_{\mathcal{S}_2}(s^{(2)}(x,y,z),s^{(2)}(x',y',z'))^2=d_{g^{(2)}}(x,x')^2+d_{\mathcal{N}_2}(z,z')^2.\nonumber
\end{align}
In practice, we do not have the full access to the metric structure \eqref{Equation:MetricDefinition:RelationshipExact}, but only to the observable distance functions $d_{\mathcal{S}_i}$.

In summary, different sensors have access to the information of interest from the common geometric object $\MM$.
However, the acquired information is deformed by two different sources -- the nuisance variables modeled by $\mathcal{N}_i$, and the deformation induced by each sensor modeled by the different metric on the manifold $\MM$. 
 
A diagram of the geometric model is depicted in Figure \ref{fig:diagram}.

\subsubsection{Statistical model}

To model the discrete data sets collected by the sensors, let $(\Omega,\mathcal{F},P)$ be a probability space, where $\Omega$ is the event space, $\mathcal{F}$ is the sigma algebra on $\Omega$, and $P$ is a probability measure defined on $\mathcal{F}$. Consider a random vector $S:(\Omega,\mathcal{F},P)\to \MM\times \mathcal{N}_1\times \mathcal{N}_2$. The dataset sampled from the $i$-th sensor, $i=1,2$, is modeled by a random vector $S_i=s^{(i)}\circ S$. Note that via the first sensor, only $\MM$ and $\mathcal{N}_1$ are observed in $S_1$, while via the second sensor, only $\MM$ and $\mathcal{N}_2$ are observed in $S_2$.

We further assume that conditional on $\MM$, the nuisance variables modeled by $\mathcal{N}_1$ and $\mathcal{N}_2$ (introduced by the different sensors) are sampled independently. Let $\nu:=\nu_{\MM\times\mathcal{N}_1\times\mathcal{N}_2}={S}_*P$ denote the induced probability measure on $\MM\times \mathcal{N}_1\times\mathcal{N}_2$. By the assumption that conditional on $\MM$ the nuisance variables are independent, we have 
\begin{align}
\nu(x,y,z)=\nu_{\MM}(x)\nu_{\mathcal{N}_1|\MM}(y|x)\nu_{\mathcal{N}_2|\MM}(z|x), 
\end{align}
where $\nu_{\MM}(x)$ is the marginal distribution on $\MM$, and $\nu_{\mathcal{N}_i|\MM}(\cdot |x)$ is the conditional distribution on $\mathcal{N}_i$.

To enhance the readability of the paper, we summarize the notations in Table \ref{Table:Notation}.

\begin{table}[h] 
\begin{center}
\begin{tabular}{| l | l | }
\hline
Symbol ($i=1,2$) & Meaning\\
\hline\hline
$(\MM,g^{(i)})$ & $d$-dim Riemannian manifold with the {Riemannian metric} $g^{(i)}$\\
$(\mathcal{N}_i,d_{\mathcal{N}_i})$ & Compact metric space with the metric $d_{\mathcal{N}_i}$\\
$(\mathcal{S}_i,d_{\mathcal{S}_i})$ & Metric space with the metric $d_{\mathcal{S}_i}$\\
$s^{(i)}$ & Sensor collecting data \\
$\epsilon$ & Bandwidth parameter of the kernel function\\
$D^{(i)}$ & Observable diffusion operator on $\mathcal{S}_i$\\
$\tilde{P}^{(i)}$ & Kernel associated with $D^{(i)}$\\
$D$ & Observable AD operator starting from the first sensor\\
$\mathcal{E}$ & Marginalization operator \\
$D^{(e_i)}$ & Effective diffusion operator on $(\MM,g^{(i)})$\\
$\tilde{P}^{(e)}$ & Kernel associated with $D^{(e_i)}$\\
$D^{(e)}$ & Effective AD operator starting from the first sensor \\
$\tilde{P}^{(O_i,y)}$ & Observable diffusion kernel with the fixed $y$\\
\hline\hline
$\iota^{(i)}$ & Embed $\MM$ into $\RR^p$ so that $g^{(i)}$ is the induced metric via $\iota^{(i)}$\\
$\tilde{K}^{(i)}$ & Diffusion kernel on $(\MM,g^{(i)})$\\
$T$ & Reduced AD operator starting from $(\MM,g^{(1)})$\\
$\tilde{K}^{(e)}$ & Kernel associated with $T$\\
$\exp^{(i)}_{x}$ & Exponential map at $x$\\
$\nabla^{(i)}$ & Levi-Civita connection associated with the metric $g^{(i)}$\\
$\ud V^{(i)}$ & Volume form associated with the metric $g^{(i)}$\\
$\Ric^{(i)}$ & Ricci curvature of $(\MM,g^{(i)})$\\
$s^{(i)}$ & Scalar curvature of $(\MM,g^{(i)})$\\ 
$\Pi^{(i)}$ & second fundamental form of the embedding $\iota^{(i)}$\\ 
$\Delta^{(i)}$ & Laplace-Beltrami operator of $(\MM,g^{(i)})$\\
\hline
\end{tabular}
\end{center}
\caption{Summary of symbols used throughout the paper.}\label{Table:Notation}
\end{table}

\subsection{Alternating diffusion {under the common manifold model}}

Based on the common manifold model, we apply \ac{AD} to analyze data collected simultaneously from two sensors, possibly of different modalities. {The goal is to extract the common manifold from the observed data via AD.} {Throughout the paper, we use three sets of notations. The first set will be defined on the accessible data, typically using an observable kernel and an observable diffusion. The second set is intermediate and designed to describe the relationship between the observation and the hidden common manifold. The third set will be defined on the hidden common manifold using an (inaccessible) effective kernel and effective diffusion.}

\begin{defn}[Observable Diffusion Kernels]
Let $\tilde{P}^{(i)}\in C([0,\infty))$, $i=1,2$, be two kernels that decay sufficiently fast and are associated with the two sensors. 
Define
\begin{align}
&\tilde{P}^{(i)}_{\epsilon}((x,y,z),(z',y',z')):=\,\tilde{P}^{(i)}\Big(\frac{d_{\mathcal{S}^{(i)}}(s^{(i)}(x,y,z),s^{(i)}(x',y',z'))}{\sqrt{\epsilon}}\Big)\label{notation:Kh_def}\\
&P^{(i)}_{\epsilon}((x,y,z),(x',y',z')):=\,\frac{\tilde{P}^{(i)}_{\epsilon}((x,y,z),(x',y',z'))}{\int_{\MM\times\mathcal{N}_1\times\mathcal{N}_2} \tilde{P}^{(i)}_{\epsilon}((x,y,z),(x'',y'',z''))\ud\nu(x'',y'',z'')}.\nonumber
\end{align}
where $P^{(i)}_\epsilon$ is referred to as a {\em diffusion kernel with bandwidth $\epsilon$} associated with the observable metric measure space $\mathcal{S}_i = s^{(i)}(\MM\times\mathcal{N}_1\times\mathcal{N}_2)$. 
\end{defn}
Note that the diffusion kernel $P^{(i)}_\epsilon$ is in a normalized form, i.e.,
\[
	\int_{\MM\times\mathcal{N}_1\times\mathcal{N}_2} P^{(i)}_{\epsilon}((x,y,z),(x',y',z'))\ud\nu(x',y',z') = 1.
\]
The main benefit of a diffusion kernel in such a normalized form is that the normalization helps to eliminate unwanted non-intrinsic quantities, such as the terms depending on the specific kernel in the asymptotical analysis. See, for example, the difference between Lemma 8 and Proposition 10 in \cite{Coifman_Lafon:2006}\footnote{{Note that in \cite{lederman2015alternating} the analysis is carried out using forward diffusion, while in this paper we use backward diffusion for consistency with the standard diffusion maps framework presented in \cite{Coifman_Lafon:2006}.}}.
For each kernel, $\epsilon$ is referred to as the {\em bandwidth} of the kernel. Further note that while the bandwidth may depend on the sensor, for simplicity, we assume that the kernels share the same $\epsilon$ value.

In this section, we show that when studied under a suitable assumption, the influence of the nuisance variables, that is, each metric space $\mathcal{N}_i$ representing the sensor effects and observation specific influences, is erased by the \ac{AD} procedure, and thus, can be ignored. Note that in the special case when $g^{(1)}=g^{(2)}$, i.e., the common manifold is viewed similarly by different sensors, 
this statement was shown previously in \cite{lederman2015alternating}{, where the focus is on the diffusion distance}. 

%

We begin with the following definition.
\begin{defn}[{Observable} Diffusion Operators]
For a function $f\in C(\MM\times\mathcal{N}_1\times\mathcal{N}_2)$, let $D^{(i)}: C(\MM\times\mathcal{N}_1\times\mathcal{N}_2)\to C(\MM\times\mathcal{N}_1\times\mathcal{N}_2)$, for $i=1,2$, denote the diffusion operator on the $i$-th sensor, defined by
\begin{align}
D^{(i)}f(x,y,z) 
:=\int_{\MM\times\mathcal{N}_1\times\mathcal{N}_2} P_\epsilon^{(i)}((x,y,z),(x',y',z'))f(x',y',z') \ud \nu(x',y',z').
\end{align}
\end{defn}

\begin{defn}[{Observable} Alternating Diffusion Operator and Kernel]
The \ac{AD} operator starting from the first sensor is defined by
\begin{align}
D:=D^{(2)}D^{(1)},
\label{eq:AD_operator}
\end{align}
while the \ac{AD} operator starting from the second sensor is defined analogously. 
%
Let $P_\epsilon\in C((\MM\times\mathcal{N}_1\times\mathcal{N}_2)\times(\MM\times\mathcal{N}_1\times\mathcal{N}_2))$ be the {\em \ac{AD} kernel}, defined by 
\begin{align}
&P_\epsilon((x,y,z),(x'',y'',z''))\label{Definition:Pepsilon}\\
:=&\,\int_{\MM\times\mathcal{N}_1\times\mathcal{N}_2}P_\epsilon^{(2)}((x,y,z),(x',y',z'))P_\epsilon^{(1)}((x',y',z'),(x'',y'',z'')) \ud \nu(x',y',z').\nonumber
\end{align}
\end{defn}
Note that by definition, we can associate the {observable} \ac{AD} kernel and operator:
\begin{align}
(Df)(x,y,z)=\int_{\MM\times\mathcal{N}_1\times\mathcal{N}_2} P_\epsilon((x,y,z),(x'',y'',z'')) f(x'',y'',z'') \ud \nu(x'',y'',z'').
\end{align}


{The second set of notations concerns with the connection between the observations and the hidden common manifold.} Consider the {observable} \ac{AD} starting from the first sensor, and recall that the first sensor only sees $\MM\times \mathcal{N}_1$. Thus, by the {definition of $\tilde{P}^{(1)}$ in \eqref{Equation:MetricDefinition:RelationshipExact} and \eqref{notation:Kh_def}}, $P_\epsilon^{(1)}((x,y,z),(x',y',z'))$ only takes $(x,y)$ and $(x',y')$ into account. {We thus introduce the corresponding reduced intermediate diffusion kernels:
\begin{align}
\tilde{P}_\epsilon^{(I_1)}((x',y'),(x'',y'')) &:=\tilde{P}_\epsilon^{(1)}((x',y',z'),(x'',y'',z''))\nonumber\\
P_\epsilon^{(I_1)}((x',y'),(x'',y'')) &:=P_\epsilon^{(1)}((x',y',z'),(x'',y'',z''))\,;
\end{align} 
that is, the observable diffusion starting from the first sensor can be simplified by ignoring the contribution of $\mathcal{N}_2$. Similarly, we define
\begin{align}
\tilde{P}_\epsilon^{(I_2)}((x',z'),(x'',z'')) &:=\tilde{P}_\epsilon^{(2)}((x',y',z'),(x'',y'',z''))\nonumber\\
P_\epsilon^{(I_2)}((x',z'),(x'',z'')) &:=P_\epsilon^{(2)}((x',y',z'),(x'',y'',z''))\,.
\end{align}
Clearly, by defintion, for $f\in C(\MM\times\mathcal{N}_1\times\mathcal{N}_2)$, $D^{(1)}f(x,y,\cdot)$ is a constant function for a fixed $(x,y)$. %
Accordingly, define the {\em reduced intermediate function}:
\begin{align}
f^{(I_2)}(x,y) :=f(x,y,\cdot)\,,
\end{align}
which is used to describe the result of the application of $D^{(1)}$.
A similar argument implies that $D^{(2)}f(x,\cdot,z)$ is a constant function for fixed $(x,z)$, and hence the definition:
\begin{align}
f^{(I_1)}(x,z) :=f(x,\cdot,z)
\end{align}

Lastly, we define the third set of notations, which consists of the {\em effective counterparts} of the observable diffusion operators defined on the hidden common manifold. These notations are needed to show that in effect the AD erases the nuisance variables. We need the following auxiliary operator that integrates out the nuisance variables.
\begin{defn}[Marginalization Operator]
Let $\mathcal{E}:C(\MM\times\mathcal{N}_1\times\mathcal{N}_2)\to C(\MM)$ be the marginalization operator, which is defined by
\begin{align}
\mathcal{E}f(x):=\int_{\mathcal{N}_1\times\mathcal{N}_2}f(x,y,z)\ud\nu_{\mathcal{N}_1|\MM}(y|x)\ud\nu_{\mathcal{N}_2|\MM}(z|x).
\label{eq:marginalization_operator}
\end{align}
\end{defn}
Clearly, the marginalization operator $\mathcal{E}$ is a bounded linear operator, which evaluates the marginal distribution of the collected data.
Although a-priori the marginalization operator is unknown and cannot be computed from data (since we do not have access to the hidden structure of the data, an in particular, to the nuisance variables), we will show that {an equivalent operation is attainable by the observable \ac{AD}}.

With the above definition, when $f(x,\cdot,z)$ is a constant function for fixed $(x,z)$, \eqref{eq:marginalization_operator} is reduced to
\begin{align}
\mathcal{E}f(x)=\int_{\mathcal{N}_2}f^{(I_1)}(x,z)\ud\nu_{\mathcal{N}_2|\MM}(z|x).
\end{align}
Similarly, when $f(x,y,\cdot)$ is a constant function for fixed $(x,y)$, \eqref{eq:marginalization_operator} is reduced to
\begin{align}
\mathcal{E}f(x)=\int_{\mathcal{N}_1}f^{(I_2)}(x,y)\ud\nu_{\mathcal{N}_1|\MM}(y|x).
\end{align}}

\begin{defn}[Effective Diffusion Kernel and Operator]
Let $P^{(e_i)}_\epsilon(x,x')$ be the effective diffusion kernel associated with the $i$-th sensor and defined on $\MM$ by
\begin{align}
P^{(e_i)}_\epsilon(x,x'):=\int_{\mathcal{N}_i}\int_{\mathcal{N}_i}P^{(I_i)}_\epsilon((x,y),(x',y'))\ud\nu_{\mathcal{N}_i|\MM}(y|x)\ud\nu_{\mathcal{N}_i|\MM}(y'|x').
\label{eq:effective_diffusion_kernel}
\end{align}
and let $D^{(e_i)}$ be the corresponding effective diffusion operator, defined by 
\begin{align}
D^{(e_i)}\mathcal{E}f(x):=\int_{\MM} P^{(e_i)}_\epsilon(x,x')\mathcal{E}f(x')\ud\nu_{\MM}(x')\,.
\end{align}
\end{defn}
\begin{defn}[Effective Alternating Diffusion Kernel and Operator]
Let $P^{(e)}_\epsilon$ be the effective \ac{AD} kernel associated with \ac{AD} starting from the first sensor, defined by
\begin{align}
P^{(e)}_\epsilon(x,x''):=\int_{\MM} P^{(e_2)}_\epsilon(x,x') P^{(e_1)}_\epsilon(x',x'')\ud\nu_{\MM}(x').
\end{align}
and let $D^{(e)}$ be the corresponding effective \ac{AD} operator, defined by 
\begin{align}
D^{(e)}:=D^{(e_2)}D^{(e_1)}.\label{Definition:EffectiveADFromTheFirstSensor}
\end{align}
\end{defn}
{By definition we have
\begin{align}
(D^{(e)}\mathcal{E}f)(x)= \int_{\MM} P^{(e)}_\epsilon(x,x'') \mathcal{E}f(x'')\ud\nu_{\MM}(x'')\,.
\end{align}
The expansion of the {\em effective} diffusion kernel \eqref{eq:effective_diffusion_kernel} according to the definition of the diffusion kernel in \eqref{notation:Kh_def} deserves an additional discussion. Note that}
\begin{align}
P^{(e_1)}_\epsilon(x,x')&=\mathcal{E}\left[\frac{\int_{\mathcal{N}_1}\tilde{P}_\epsilon^{(1)}\Big(\frac{\sqrt{d_{g^{(1)}}(x,x')^2+d_{\mathcal{N}_1}(y,y')^2}}{\sqrt{\epsilon}}\Big)\ud\nu_{\mathcal{N}_1|\MM}(y'|x')}{\int_{\MM\times\mathcal{N}_1}\tilde{P}_\epsilon^{(1)}\Big(\frac{\sqrt{d_{g^{(1)}}(x,x'')^2+d_{\mathcal{N}_1}(y,y'')^2}}{\sqrt{\epsilon}}\Big)\ud \nu_{\MM}(x'')\ud\nu_{\mathcal{N}_1|\MM}(y''|x'') }\right] \label{Description:Pe1IsNotNormalized}.
\end{align}
Thus, $P^{(e_1)}_\epsilon(x,x')$ is a diffusion kernel on $\MM$; yet, for a general kernel function $\tilde{P}^{(1)}_{\epsilon}$, the kernel $P^{(e_1)}_\epsilon(x,x')$ cannot be further simplified. 

\begin{rem}
In the special case when $\tilde{P}^{(1)}_{\epsilon}$ is Gaussian {and the nuisance variables are independent}, it is possible to separate it into two terms consisting of the metric on the common manifold $d_{g^{(i)}}$ and the metric on the nuisance variable $d_{\mathcal{N}_i}$. 
In other words, in this special case, we have further access to the hidden structure of the data, and in particular, to the metric defined on each component.
Accordingly, the effective kernel can be simplified into the following normalized form
\begin{align}
P^{(e_i)}_\epsilon(x,x')=\frac{\tilde{P}^{(e_i)}\Big(\frac{d_{g_i}(x,x')}{\sqrt{\epsilon}}\Big)}{\int_{\MM}\tilde{P}^{(e_i)}\Big(\frac{d_{g_i}(x,x'')}{\sqrt{\epsilon}}\Big)\ud\nu_{\MM}(x'')}.
\end{align}
\end{rem}

With the above preparation, we are ready to state the main result of this section, showing that after marginalization, \ac{AD} constructed from the two-sensor obserations is {intimately related} to a diffusion process defined on the hidden common manifold.
\begin{thm}\label{Theorem:Commutation}
{For a fixed $i=1,2$,} when $f\in C(\MM\times \mathcal{N}_1\times\mathcal{N}_2)$ is constant on $\mathcal{N}_i$, we have 
\begin{align}
\mathcal{E}[D^{(i)}f]= D^{(e_i)}\mathcal{E}f\,.
\end{align}
Furthermore, we have
\begin{align}
\mathcal{E}[Df]=D^{(e)}\mathcal{E}f.
\label{eq:com}
\end{align}
\end{thm}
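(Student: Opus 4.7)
The plan is to establish both identities by direct computation, leveraging two structural facts: (a) the conditional-independence decomposition $\nu(x,y,z)=\nu_{\MM}(x)\nu_{\mathcal{N}_1|\MM}(y|x)\nu_{\mathcal{N}_2|\MM}(z|x)$, and (b) the consequence of the isometric embedding \eqref{Equation:MetricDefinition:RelationshipExact} that $P^{(1)}_\epsilon$ depends only on the $\MM\times\mathcal{N}_1$ coordinates of its two arguments, while $P^{(2)}_\epsilon$ depends only on the $\MM\times\mathcal{N}_2$ coordinates. In particular, $D^{(1)}f$ is automatically independent of $z$ and $D^{(2)}f$ is automatically independent of $y$, which is exactly the content of the reduced intermediate notation $f^{(I_2)}$, $f^{(I_1)}$.

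\emph{Single-sensor identity.} Take $i=1$ (the case $i=2$ is symmetric) and write the hypothesis as $f(x,y,z)=\tilde f(x,z)$. Expanding $D^{(1)}f(x,y)$ against the factored measure $\nu$, Fubini together with the $y'$-independence of $\tilde f$ and the $z'$-independence of $P^{(1)}_\epsilon$ separates the inner integrals as
\[
D^{(1)}f(x,y)=\int_{\MM}\Big(\int_{\mathcal{N}_1} P^{(1)}_\epsilon((x,y),(x',y'))\ud\nu_{\mathcal{N}_1|\MM}(y'|x')\Big)\mathcal{E}f(x')\,\ud\nu_{\MM}(x'),
\]
where the identification of the $z'$-integral with $\mathcal{E}f(x')$ uses that the $\mathcal{N}_1$-marginal of $\tilde f$ is trivial. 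Applying $\mathcal{E}$ amounts to integrating against $\ud\nu_{\mathcal{N}_1|\MM}(y|x)$ (the $z$-integration is trivial since $D^{(1)}f$ is $z$-independent), and the two successive integrations of $P^{(1)}_\epsilon$ over $y$ and $y'$ assemble exactly $P^{(e_1)}_\epsilon(x,x')$ via \eqref{eq:effective_diffusion_kernel}, yielding $\mathcal{E}[D^{(1)}f]=D^{(e_1)}\mathcal{E}f$.

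\emph{AD identity.} Take $i=1$, so $f$ is constant on $\mathcal{N}_1$ and $D=D^{(2)}D^{(1)}$. Set $g:=D^{(1)}f$. The structural fact (b) guarantees that $g$ is constant on $\mathcal{N}_2$ for any input $f$, so the single-sensor identity applies with $i=2$ to $g$ to give $\mathcal{E}[D^{(2)}g]=D^{(e_2)}\mathcal{E}g$. Combined with $\mathcal{E}g=\mathcal{E}[D^{(1)}f]=D^{(e_1)}\mathcal{E}f$ from the first step, we chain
\[
\mathcal{E}[Df]=\mathcal{E}[D^{(2)}g]=D^{(e_2)}\mathcal{E}g=D^{(e_2)}D^{(e_1)}\mathcal{E}f=D^{(e)}\mathcal{E}f,
\]
the last equality being the definition \eqref{Definition:EffectiveADFromTheFirstSensor}.

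The main technical hurdle is the Fubini separation in the single-sensor identity. Without the constancy hypothesis on $\mathcal{N}_i$, the partial marginal $\tilde\phi(x',y'):=\int_{\mathcal{N}_2}f(x',y',z')\ud\nu_{\mathcal{N}_2|\MM}(z'|x')$ carries a non-trivial $y'$-dependence that is coupled to $P^{(1)}_\epsilon((x,y),(x',y'))$ inside the inner $y'$-integral, and the resulting double integral cannot be rewritten as the product $P^{(e_1)}_\epsilon(x,x')\mathcal{E}f(x')$ required by \eqref{eq:effective_diffusion_kernel}. The constancy hypothesis is precisely what collapses this obstruction and lets the factorization go through; everything else in the argument is bookkeeping with the conditional-measure decomposition of $\nu$ and the definitions of the effective kernels.
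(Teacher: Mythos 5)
Your proof is correct and follows essentially the same route as the paper: expand $D^{(1)}f$ against the factored measure, use that $P^{(1)}_\epsilon$ ignores the $\mathcal{N}_2$ coordinates and that $f$ is constant on $\mathcal{N}_1$ to identify the $z'$-integral with $\mathcal{E}f$, then integrate over $y,y'$ to assemble $P^{(e_1)}_\epsilon$, and finally chain the two single-sensor identities (using that $D^{(1)}f$ is automatically constant on $\mathcal{N}_2$) to get $\mathcal{E}[Df]=D^{(e_2)}D^{(e_1)}\mathcal{E}f=D^{(e)}\mathcal{E}f$. Apart from writing $P^{(1)}_\epsilon$ where the paper uses the reduced kernel $P^{(I_1)}_\epsilon$, there is no substantive difference.
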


Before the proof, note that {Theorem \ref{Theorem:Commutation} implies that} the marginalization operator and the \ac{AD} operator commute.
{While the effective \ac{AD} operator $D^{(e)}$ cannot be directly computed from data, the observable \ac{AD} can be. However, in practice $\mathcal{E}$ is unknown, and we need to link $Df$ back to the effective \ac{AD} operator on the common manifold. 
We will further address this issue} in the sequel.

\begin{proof}
{Fix $i=1$. The proof for $i=2$ is analogous. By assumption, $f(x,\cdot,z)$ is a constant function for fixed $(x,z)$\footnotemark, so that we have}
\footnotetext{Such a function could be obtained as the result of applying $D^{(2)}$.}
\begin{align}
&D^{(1)}f(x,y,z)\label{Proof:Theorem1:FirstEquation}\\
=&\,\int_{\MM\times\mathcal{N}_1\times\mathcal{N}_2} P_\epsilon^{(1)}((x,y,z),(x'',y'',z''))f(x'',y'',z'')\ud \nu(x'',y'',z'')\nonumber\\
=&\,\int_{\MM\times\mathcal{N}_1\times\mathcal{N}_2} P_\epsilon^{(I_1)}((x,y),(x'',y''))f^{(I_1)}(x'',z'') \ud\nu(x'',y'',z'')\nonumber\\
=&\,\int_{\MM\times\mathcal{N}_1} P_\epsilon^{(I_1)}((x,y),(x'',y''))\mathcal{E}f(x'') \ud\nu_{\MM}(x'')\ud\nu_{\mathcal{N}_1|\MM}(y''|x'').\nonumber
\end{align}
Note that $D^{(1)}f(x,y,\cdot)$ is a constant function for {a fixed pair} $(x,y)$. Thus, we have
\begin{align}
&\mathcal{E}[D^{(1)}f](x)\\
=\,&\int_{\mathcal{N}_1\times\mathcal{N}_2} (D^{(1)}f)(x,y,z)\ud\nu_{\mathcal{N}_1|\MM}(y|x)\ud\nu_{\mathcal{N}_2|\MM}(z|x)\nonumber\\
=\,&\int_{\mathcal{N}_1\times\mathcal{N}_2} \int_{\MM\times\mathcal{N}_1} P_\epsilon^{(I_1)}((x,y),(x'',y''))\mathcal{E}f(x'')\nonumber\\
&\qquad\qquad\qquad\qquad\ud\nu_{\MM}(x'')\ud\nu_{\mathcal{N}_1|\MM}(y''|x'')\ud\nu_{\mathcal{N}_1|\MM}(y|x)\ud\nu_{\mathcal{N}_2|\MM}(z|x)\nonumber\\
=\,&\int_{\mathcal{N}_1} \int_{\MM\times\mathcal{N}_1} P_\epsilon^{(I_1)}((x,y),(x'',y''))\mathcal{E}f(x'')\ud\nu_{\MM}(x'')\ud\nu_{\mathcal{N}_1|\MM}(y''|x'')\ud\nu_{\mathcal{N}_1|\MM}(y|x)\nonumber\\
=\,&\int_{\MM} P^{(e_1)}_\epsilon(x,x'')\mathcal{E}f(x'')\ud\nu_{\MM}(x'').\nonumber
\end{align}
Define $D^{(e_1)}\mathcal{E}f(x):=\mathcal{E}[D^{(1)}f](x)$.
Similarly, if $f(x,y,\cdot)$ is a constant function for fixed $(x,y)$, the effective diffusion associated with the second sensor is given by
\begin{align}
\mathcal{E}[D^{(2)}f](x)=\int_{\MM} P^{(e_2)}_\epsilon(x,x')\mathcal{E}f(x')\ud\nu_{\MM}(x')
\end{align}
and define $D^{(e_2)}\mathcal{E}f(x):=\mathcal{E}[D^{(2)}f](x)$.
To finish the proof, note that
\begin{align}
\mathcal{E}[Df]=\mathcal{E}D^{(2)}D^{(1)}f=D^{(e_2)}\mathcal{E}D^{(1)}f=D^{(e_2)}D^{(e_1)}\mathcal{E}f=D^{(e)}\mathcal{E}f.
\end{align}
\end{proof}

{In light of Theorem \ref{Theorem:Commutation}, we can remark on the spectral behavior of the observable and effective \ac{AD} operators. Note that when $\epsilon>0$ is finite, although $D$ and $D^{(e)}$ are compact operators, they are not self-adjoint and hence limited spectral information can be exploited. In the sequel, we will show that asymptotically, when $\epsilon$ approaches $0$, these operators approximate a deformed Laplace-Beltrami operator. In particular, such a limiting operator is self-adjoint, and therefore, at least asymptotically, it gives the theoretical foundation to consider the spectral analysis of the operators at hand.} 
Suppose we have $D\phi=\lambda \phi$; that is, $\phi$ is the eigenfunction of {the observable \ac{AD} operator} $D$ associated with the eigenvalue $\lambda$. Then, by the commutativity {shown in Theorem \ref{Theorem:Commutation}} and the linearity of $\mathcal{E}$, we have
\begin{align}
D^{(e)}\mathcal{E}\phi=\mathcal{E}D\phi=\mathcal{E}\lambda \phi=\lambda \mathcal{E}\phi\,;
\end{align}
that is, if $\lambda$ is an eigenvalue of $D$ with the eigenspace $E_{\lambda}(D)$, then $\lambda$ is an eigenvalue of $D^{(e)}$ with the eigenspace $E_{\lambda}(D^{(e)})$ {containing $\mathcal{E}(E_{\lambda}(D))$}. While in practice we can only obtain an approximation of the eigenfunction $\phi\in E_{\lambda}(D)$ rather than $\mathcal{E}\phi\in E_{\lambda}(D^{(e)})$, the knowledge of $\phi$ carries information about $\mathcal{E}\phi$. 

{
\subsection{Approximating the effective \ac{AD} by the observable \ac{AD}}
}

Since $\mathcal{E}$ is {typically unknown, Theorem \ref{Theorem:Commutation} cannot be directly applied to extract the common manifold information from the given observations. To understand how the common manifold information can be obtained from the observable \ac{AD}, in this subsection, we further explore the relation between the effective \ac{AD} and the observable \ac{AD}, particularly, without taking the marginalization operator $\mathcal{E}$ into account}. 

{Before describing the theoretical result, observe that for a function $f$ defined on $\MM\times \mathcal{N}_1\times \mathcal{N}_2$, the diffusion on the second sensor, $D^{(2)}f$, is constant on $\mathcal{N}_1$; that is, $D^{(2)}$ integrates information on $\MM$ and $\mathcal{N}_2$ and no information about $\mathcal{N}_1$ is embodied in $D^{(2)}f$. As a result, intuitively, the additional application of $D^{(1)}$, namely, $D^{(1)}D^{(2)}f$, not only ignores the information from $\mathcal{N}_2$, but also should not bear information about $\mathcal{N}_1$ as well, since $D^{(2)}f$ is constant on $\mathcal{N}_1$. Yet, rigorous inspection shows that it is not true in general. Observe for example (\ref{Proof:Theorem1:FirstEquation}). We have
\begin{align}
D^{(1)}f(x,y,z)
&=\int_{\MM\times\mathcal{N}_1} P_\epsilon^{(I_1)}((x,y),(x'',y''))\mathcal{E}f(x'') \ud\nu_{\MM}(x'')\ud\nu_{\mathcal{N}_1|\MM}(y''|x'')\nonumber\\
&=\int_{\MM}\Big[\int_{\mathcal{N}_1} P_\epsilon^{(I_1)}((x,y),(x'',y''))\ud\nu_{\mathcal{N}_1|\MM}(y''|x'')\Big]\mathcal{E}f(x'') \ud\nu_{\MM}(x'')\nonumber\,,
\end{align}
which is constant on $z$, but may still depend on $y$. 
To take a closer look, we assume that the kernel $\tilde{P}^{(1)}$ is a Gaussian kernel so that we could decouple the $\MM$ and $\mathcal{N}_1$ and have
\begin{align}
&\int_{\mathcal{N}_1} P_\epsilon^{(I_1)}((x,y),(x'',y''))\ud\nu_{\mathcal{N}_1|\MM}(y''|x'')\nonumber\\
=&\,\frac{\int_{\mathcal{N}_1}\tilde{P}^{(1)}\Big(\frac{\sqrt{d_{g^{(1)}}(x,x'')^2+d_{\mathcal{N}_1}(y,y'')^2}}{\sqrt{\epsilon}}\Big)\ud\nu_{\mathcal{N}_1|\MM}(y''|x'')}
{\int_{\MM}\int_{\mathcal{N}_1}\tilde{P}^{(1)}\Big(\frac{\sqrt{d_{g^{(1)}}(x,x')^2+d_{\mathcal{N}_1}(y,y')^2}}{\sqrt{\epsilon}}\Big)\ud\nu_{\mathcal{N}_1|\MM}(y'|x')\ud \nu_{\MM}(x')}\nonumber\\
=&\,\frac{e^{-d_{g^{(1)}}(x,x'')^2/\epsilon}\int_{\mathcal{N}_1}e^{-d_{\mathcal{N}_1}(y,y'')^2/\epsilon}\ud\nu_{\mathcal{N}_1|\MM}(y''|x'')}{\int_{\MM}e^{-d_{g^{(1)}}(x,x')^2/\epsilon}\int_{\mathcal{N}_1}e^{-d_{\mathcal{N}_1}(y,y')^2/\epsilon}\ud\nu_{\mathcal{N}_1|\MM}(y'|x')\ud \nu_{\MM}(x')}\,,\nonumber
\end{align}
where $\int_{\mathcal{N}_1}e^{-d_{\mathcal{N}_1}(y,y'')^2/\epsilon}\ud\nu_{\mathcal{N}_1|\MM}(y''|x'')$ depends on $y$. 
Therefore, while $Df(x,y,z)=D^{(1)}D^{(2)}f(x,y,z)$ is constant in $z$, it is not necessarily constant in $y$ and it depends on the geometry of $\mathcal{N}_1$ via the kernel integration.
To better quantify this dependence, we introduce the following definition.
 } 

\begin{defn}[Nuisance-dependent diffusion kernel on the common manifold]
For a fixed $y\in\mathcal{N}_1$, let $P^{(\mathcal{N}_1,y)}_\epsilon$ be the {nuisance-dependent diffusion kernel} on $\MM$, which is defined in the normalized form by
\begin{align}
P^{(\mathcal{N}_1,y)}_\epsilon(x,x'):=\frac{\tilde{P}^{(\mathcal{N}_1,y)}\big(\frac{d_{g^{(1)}}(x,x')}{\sqrt{\epsilon}}\big) }{\int_{\MM}\tilde{P}^{(\mathcal{N}_1,y)}\big(\frac{d_{g^{(1)}}(x,x')}{\sqrt{\epsilon}}\big)\ud \nu_{\MM}(x')} ,
\end{align}
{where the superscript $\mathcal{N}_1$ stands for the dependence on the nuisance variable corresponding to the first sensor} and 
\begin{align}
\tilde{P}^{(\mathcal{N}_1,y)}\Big(\frac{d_{g^{(1)}}(x,x')}{\sqrt{\epsilon}}\Big)&:={\int_{\mathcal{N}_1} \tilde{P}_\epsilon^{(I_1)}((x,y),(x',y'))\ud\nu_{\mathcal{N}_1|\MM}(y'|x')}\\
&=\int_{\mathcal{N}_1}\tilde{P}^{(1)}\Big(\frac{\sqrt{d_{g^{(1)}}(x,x')^2+d_{\mathcal{N}_1}(y,y')^2}}{\sqrt{\epsilon}}\Big)\ud\nu_{\mathcal{N}_1|\MM}(y'|x') .\nonumber
\end{align}
Similarly, for a fixed $z\in\mathcal{N}_2$, the {nuisance-dependent diffusion kernel} on $\MM$, $P^{(\mathcal{N}_2,z)}_\epsilon$, is defined by
\begin{align}
P^{(\mathcal{N}_2,z)}_\epsilon(x,x'):=\frac{\tilde{P}^{(\mathcal{N}_2,z)}\big(\frac{d_{g^{(2)}}(x,x')}{\sqrt{\epsilon}}\big)}{\int_{\MM}\tilde{P}^{(\mathcal{N}_2,z)}\big(\frac{d_{g^{(2)}}(x,x')}{\sqrt{\epsilon}}\big)\ud \nu_{\MM}(x')} ,
\end{align}
where
\begin{align}
\tilde{P}^{(\mathcal{N}_2,z)}\Big(\frac{d_{g^{(2)}}(x,x')}{\sqrt{\epsilon}}\Big):=\int_{\mathcal{N}_2} \tilde{P}_\epsilon^{(I_2)}((x,z),(x',z'))\ud\nu_{\mathcal{N}_2|\MM}(z'|x') .
\end{align}
\label{def:nuisance_kernel}
\end{defn}
{Note that $\tilde{P}^{(\mathcal{N}_1,y)}$ is a family of kernels defined on $\MM$ that depend on the nuisance variable $y$,
and by definition, it is related to the effective diffusion kernel by} \begin{equation}
\mathcal{E}P^{(\mathcal{N}_1,y)}_\epsilon=P^{(e_1)}_\epsilon.
\end{equation} 
{A similar argument holds for $P^{(\mathcal{N}_2,z)}_\epsilon$, where we have} 
\begin{equation}
\mathcal{E}P^{(\mathcal{N}_2,z)}_\epsilon=P^{(e_2)}_\epsilon.
\end{equation}
With this definition, we summarize the relation between the {observable} \ac{AD} and the effective \ac{AD} in the following theorem.

{
\begin{thm}\label{Theorem:DataEffectiveRelationship}
Fix $(x,y,z)\in\MM\times\mathcal{N}_1\times \mathcal{N}_2$ and a continuous function $f$ that is constant in $\mathcal{N}_1$.
The {observable} \ac{AD} on $\MM\times\mathcal{N}_1\times \mathcal{N}_2$ starting from the first sensor satisfies
\begin{align}
Df(x,y,z)=&\,
\int_{\MM} \Big[\int_{\MM}P^{(\mathcal{N}_2,z)}_\epsilon(x,x')P^{(e_1)}(x',x'')\ud\nu_{\MM}(x')\Big] \mathcal{E}f(x'') \ud\nu_{\MM}(x'')\nonumber\,.
\end{align}
\end{thm}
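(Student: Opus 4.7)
The plan is to compute $Df = D^{(2)}D^{(1)}f$ by applying the two operators in sequence, reducing the observable normalized kernels to nuisance-dependent kernels on $\MM$ at each step, and then invoking the identity $\mathcal{E}P^{(\mathcal{N}_1,y)}_\epsilon = P^{(e_1)}_\epsilon$ to expose the effective kernel of the first sensor. The structure mirrors the first part of the proof of Theorem \ref{Theorem:Commutation}, but now we track the residual dependence on $y$ that arose because $\tilde P^{(1)}_\epsilon$ is not Gaussian (and hence cannot be factored across $\MM$ and $\mathcal{N}_1$).

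First I will compute $g := D^{(1)}f$. Since $f$ is constant in $\mathcal{N}_1$, writing $f(x'',y'',z'')=f^{(I_1)}(x'',z'')$, the $z''$-integration against $\nu_{\mathcal{N}_2|\MM}(\cdot\mid x'')$ collapses $f^{(I_1)}$ to $\mathcal{E}f(x'')$. The kernel $P^{(1)}_\epsilon$ does not depend on $z,z''$, so after integrating out $z''$ the remaining $y''$-integration is against $\nu_{\mathcal{N}_1|\MM}(\cdot\mid x'')$. Pushing this integral into the normalized quotient defining $P^{(I_1)}_\epsilon$ and recognizing the numerator and denominator as $\tilde P^{(\mathcal{N}_1,y)}$ and its $x''$-integral, I obtain
\begin{align*}
g(x,y,z)=\int_{\MM}P^{(\mathcal{N}_1,y)}_\epsilon(x,x'')\,\mathcal{E}f(x'')\,\ud\nu_{\MM}(x''),
\end{align*}
which is constant in $\mathcal{N}_2$ but genuinely depends on $y$.

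Next I will apply $D^{(2)}$ to $g$. The kernel $P^{(2)}_\epsilon$ does not depend on $y,y'$, and $g$ does not depend on $z'$. Using Fubini together with the factorization $\ud\nu = \ud\nu_{\MM}\,\ud\nu_{\mathcal{N}_1|\MM}\,\ud\nu_{\mathcal{N}_2|\MM}$, I will separate the integrals over $y'$ and $z'$. The $y'$-integration hits only the $y'$-dependence of $g$, namely the family $P^{(\mathcal{N}_1,y')}_\epsilon(x',x'')$; by the identity
\begin{align*}
\int_{\mathcal{N}_1}P^{(\mathcal{N}_1,y')}_\epsilon(x',x'')\,\ud\nu_{\mathcal{N}_1|\MM}(y'|x')=P^{(e_1)}_\epsilon(x',x'')
\end{align*}
(which is the marginalization relation $\mathcal{E}P^{(\mathcal{N}_1,y)}_\epsilon=P^{(e_1)}_\epsilon$ applied to the first argument), the middle factor becomes the effective kernel of the first sensor. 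The $z'$-integration, in exact parallel to the first step, turns the normalized quotient defining $P^{(I_2)}_\epsilon$ into $P^{(\mathcal{N}_2,z)}_\epsilon(x,x')$. Combining the two simplifications yields the claimed formula.

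The only delicate point is bookkeeping: one must be careful to integrate $y''$ and $z''$ against the correct conditional measures so that the normalization of $P^{(I_1)}_\epsilon$ reassembles into exactly the normalization that defines $P^{(\mathcal{N}_1,y)}_\epsilon$, and similarly for the second sensor step. Beyond this, the argument is a disciplined application of Fubini together with the definitions in Definition \ref{def:nuisance_kernel}; no new ingredients are needed.
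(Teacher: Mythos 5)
Your proposal is correct and follows essentially the same route as the paper's proof: expand $D^{(2)}D^{(1)}f$, use the conditional factorization of $\nu$ and Fubini to absorb the intermediate nuisance integrations into $P^{(\mathcal{N}_2,z)}_\epsilon$ and, via $\mathcal{E}P^{(\mathcal{N}_1,y)}_\epsilon=P^{(e_1)}_\epsilon$, the effective kernel, with the constancy of $f$ on $\mathcal{N}_1$ collapsing $f$ to $\mathcal{E}f$; the only (cosmetic) difference is that you apply the two operators sequentially and marginalize the intermediate nuisance $y'$ of $P^{(\mathcal{N}_1,y')}_\epsilon$, while the paper expands the composition at once and recovers $P^{(e_1)}_\epsilon$ by marginalizing the endpoint nuisance $y''$. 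One minor aside in your framing: the residual $y$-dependence of $D^{(1)}f$ is not caused by non-Gaussianity of $\tilde{P}^{(1)}$ (the paper's Gaussian computation shows it persists even then, through the geometry of $\mathcal{N}_1$ and the conditional measure), but this does not affect your argument.
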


\begin{proof}
By a direct expansion based on the definition of \ac{AD}, we have
\begin{align}
&Df(x,y,z)\label{Definition:AD:ExpressionForAnalysisOnM}\\
=&\,\int_{\MM\times\mathcal{N}_1\times\mathcal{N}_2} P_\epsilon^{(2)}((x,y,z),(x',y',z'))\int_{\MM\times\mathcal{N}_1\times\mathcal{N}_2}P_\epsilon^{(1)}((x',y',z'),(x'',y'',z''))\nonumber\\
&\qquad\times  f(x'',y'',z'') \ud \nu(x'',y'',z'')\ud \nu(x',y',z')  \nonumber\\
=&\,\int_{\MM\times\mathcal{N}_1\times\mathcal{N}_2} \Big\{\int_{\MM\times\mathcal{N}_1\times\mathcal{N}_2}P_\epsilon^{(I_2)}((x,z),(x',z'))P_\epsilon^{(I_1)}((x',y'),(x'',y''))\nonumber\\
&\qquad\times  \ud\nu_{\mathcal{N}_1|\MM}(y'|x')\ud\nu_{\mathcal{N}_2|\MM}(z'|x')\ud\nu_{\MM}(x')\Big\}   f(x'',y'',z'')\ud\nu_{\mathcal{N}_1|\MM}(y''|x'')\ud\nu_{\mathcal{N}_2|\MM}(z''|x'')\ud\nu_{\MM}(x'')\nonumber\\
=&\,\int_{\MM\times\mathcal{N}_1\times\mathcal{N}_2} \Big\{\int_{\MM}\Big[\int_{\mathcal{N}_2}P_\epsilon^{(I_2)}((x,z),(x',z')) \ud\nu_{\mathcal{N}_2|\MM}(z'|x') \int_{\mathcal{N}_1} P_\epsilon^{(I_1)}((x',y'),(x'',y''))\ud\nu_{\mathcal{N}_1|\MM}(y'|x')\Big]\nonumber\\
&\qquad\times  \ud\nu_{\MM}(x')\Big\}   f(x'',y'',z'')\ud\nu_{\mathcal{N}_1|\MM}(y''|x'')\ud\nu_{\mathcal{N}_2|\MM}(z''|x'')\ud\nu_{\MM}(x'')\nonumber\,,
\end{align}
which, by the definition of the nuisance-dependent diffusion kernel on the common manifold, could be reduced to
\begin{align}
&Df(x,y,z)=\int_{\MM\times\mathcal{N}_1\times\mathcal{N}_2} \Big\{\int_{\MM}P^{(\mathcal{N}_2,z)}_\epsilon(x,x') P^{(\mathcal{N}_1,y'')}_\epsilon(x',x'') \ud\nu_{\MM}(x')\Big\}  \nonumber\\
&\qquad\qquad\times f(x'',y'',z'')\ud\nu_{\mathcal{N}_1|\MM}(y''|x'')\ud\nu_{\mathcal{N}_2|\MM}(z''|x'')\ud\nu_{\MM}(x'')\nonumber\\
=&\int_{\MM} \int_{\MM}P^{(\mathcal{N}_2,z)}_\epsilon(x,x')\Big[\int_{\mathcal{N}_2} \int_{\mathcal{N}_1} P^{(\mathcal{N}_1,y'')}_\epsilon(x',x'')f(x'',y'',z'')\nonumber\\
&\qquad\qquad\times\ud\nu_{\mathcal{N}_1|\MM}(y''|x'') \ud\nu_{\mathcal{N}_2|\MM}(z''|x'')\Big]\ud\nu_{\MM}(x') \ud\nu_{\MM}(x'')\label{Expansion:RelateObservationCommon1}\,.
\end{align}
To further reduce (\ref{Expansion:RelateObservationCommon1}), note that by the assumption that $f$ is constant in $y''$, we have 
\[
f(x'',y'',z'')=\int_{\mathcal{N}_1}f(x'',y''',z'')\ud\nu_{\mathcal{N}_1|\MM}(y'''|x'')\,.
\]
Hence, the integrant quantity inside the bracket in (\ref{Expansion:RelateObservationCommon1}) could be simplified by
\begin{align}
&\int_{\mathcal{N}_2}\int_{\mathcal{N}_1} P^{(\mathcal{N}_1,y'')}_\epsilon(x',x'')f(x'',y'',z'')\ud\nu_{\mathcal{N}_1|\MM}(y''|x'')\ud\nu_{\mathcal{N}_2|\MM}(z''|x'')\nonumber\\
=&\,\int_{\mathcal{N}_2}\int_{\mathcal{N}_1} P^{(\mathcal{N}_1,y'')}_\epsilon(x',x'')\int_{\mathcal{N}_1}f(x'',y''',z'')\ud\nu_{\mathcal{N}_1|\MM}(y'''|x'')\ud\nu_{\mathcal{N}_1|\MM}(y''|x'')\ud\nu_{\mathcal{N}_2|\MM}(z''|x'')\nonumber\\
=&\,\int_{\mathcal{N}_1} P^{(\mathcal{N}_1,y'')}_\epsilon(x',x'')\ud\nu_{\mathcal{N}_1|\MM}(y''|x'')\int_{\mathcal{N}_2}\int_{\mathcal{N}_1}f(x'',y''',z'')\ud\nu_{\mathcal{N}_1|\MM}(y'''|x'')\ud\nu_{\mathcal{N}_2|\MM}(z''|x'')\nonumber\\
=&\,\Big[\int_{\mathcal{N}_1} P^{(\mathcal{N}_1,y'')}_\epsilon(x',x'')\ud\nu_{\mathcal{N}_1|\MM}(y''|x'')\Big]\mathcal{E}f(x'')\nonumber\\
=&\,P^{(e_1)}(x',x'') \mathcal{E}f(x'')\label{Expansion:RelateObservationCommon2}\,.
\end{align}
By plugging (\ref{Expansion:RelateObservationCommon2}) into (\ref{Expansion:RelateObservationCommon1}), $Df(x,y,z)$ is reduced to
\begin{align}
Df(x,y,z)=&\,
\int_{\MM} \Big[\int_{\MM}P^{(\mathcal{N}_2,z)}_\epsilon(x,x')P^{(e_1)}(x',x'')\ud\nu_{\MM}(x')\Big] \mathcal{E}f(x'') \ud\nu_{\MM}(x'')\nonumber .
\end{align}
\end{proof}
}
Theorem \ref{Theorem:DataEffectiveRelationship} implies that when the nuisance variables $y$ and $z$ are fixed, the {observable} \ac{AD} can be viewed as ``an ordinary" diffusion process on the common manifold {with the kernel $\int_{\MM}P^{(\mathcal{N}_2,z)}_\epsilon(x,x')P^{(e_1)}(x',x'')\ud\nu_{\MM}(x')$, which depends on $z$}. Hence, {by Theorem \ref{Theorem:Commutation},} the effective \ac{AD} is related to the {observable} \ac{AD} by { taking expectation with respect to the nuisance variable of the second sensor $z$, when using the {observable} \ac{AD} starting from the first sensor:
\begin{align}
[D^{(e)}\mathcal{E}f](x)=\mathcal{E}\int_{\MM} \Big[\int_{\MM}P^{(\mathcal{N}_2,z)}_\epsilon(x,x')P^{(e_1)}(x',x'')\ud\nu_{\MM}(x')\Big] \mathcal{E}f(x'') \ud\nu_{\MM}(x'')\,.\label{RelationshipEffectiveData:1}
\end{align}
This result indicates that we can gain access to the common manifold information via the effective \ac{AD} by viewing {the observable} \ac{AD} as a proxy. However, the information provided by {the observable} \ac{AD} depends on the data, and we would only achieve an extraction of undistorted information on the common manifold through the effective \ac{AD} if sufficient amount of data is available to ``average out'' the nuisance variable. Based on this result, in the next section, we} focus on studying the net result of \ac{AD} solely on the common manifold while ignoring the nuisance variables.

\begin{rem}
{{By the same argument we could obtain a parallel result as (\ref{RelationshipEffectiveData:1}) for the forward diffusion operator, which, if applied to the delta measure supported at two points, recovers the effective alternating-diffusion distance considered in \cite[Equation 79]{lederman2015alternating}. Notice that in \cite{lederman2015alternating}, it is the effective alternating-diffusion distance, which embodies the ``averaged behavior'' of the observable \ac{AD} kernel, that is discussed, while in this paper, we focus on the diffusion behavior at each point.}}
\end{rem}
{

\subsection{Alternating diffusion with nuisance variables modeled with manifold structures}\label{Section:ManifoldNuisance}

Before closing this section, we further study how the observable \ac{AD} depends on the data. Under a stronger condition that $\mathcal{N}_2$ is a $q$-dim closed Riemannian manifold (compact and without boundaries) with the metric $g_{\mathcal{N}_2}$ and that for each $x''$, $\ud\nu_{\mathcal{N}_2|\MM}(z''|x'')$ is absolutely continuous with related to the volume form $dV_{\mathcal{N}_2}$ associated with $g_{\mathcal{N}_2}$, we claim that asymptotically when $\epsilon$ is sufficiently small and the Radon-Nikodym derivative $\frac{\ud\nu_{\mathcal{N}_2|\MM}(z''|x'')}{dV_{\mathcal{N}_2}(z'')}$ is constant (uniform sampling conditional on $x''$), $Df(x,y,z)$ is almost constant in $z$ if we start \ac{AD} from the first sensor. We mention that while a more complicated condition could be considered, to simplify the discussion we focus on this assumption.
We have  
\begin{align}
&\int_{\mathcal{N}_2} \tilde{P}^{(2)}\Big(\frac{\sqrt{d_{g^{(2)}}(x,x'')^2+d_{\mathcal{N}_2}(z,z'')^2}}{\sqrt{\epsilon}}\Big) \ud V_{\mathcal{N}_2}(z'')\label{Calculation:FiberManifoldAssumption1}\\
=\,&\int_{B_z} \tilde{P}^{(2)}\Big(\frac{\sqrt{d_{g^{(2)}}(x,x'')^2+\|u\|^2}}{\sqrt{\epsilon}}\Big)  (1+\text{Ric}_z(i,j) u_iu_j+O(\|u\|^3))\ud u\nonumber
\end{align}
which stems from the change of variables with the normal coordinate at $z$, $B_z:=\exp_z^{-1}(\mathcal{N}_2\backslash C_z)\subset T_z\mathcal{N}_2$, $C_z$ is the cut locus of $z$, and $\text{Ric}_z$ is the Ricci curvature of $(\mathcal{N}_2,g_{\mathcal{N}_2})$ at $z$. 
Define $\varphi_\ell(a):=\int_{0}^\infty \tilde{P}^{(2)}\Big(\sqrt{a+s^2}\Big) s^{q-1+\ell} \ud s$ for $a\geq 0$ and $\ell=0,1,\ldots$.
By changing the Cartesian coordinates to polar coordinates on $T_z\mathcal{N}_2$, (\ref{Calculation:FiberManifoldAssumption1}) can be recast as
\begin{align}
&\epsilon^{(q-1)/2}\int_{\mathbb{R}^q} \tilde{P}^{(2)}\Big(\sqrt{\frac{d_{g^{(2)}}(x,x'')^2}{\epsilon}+s^2}\Big)  (s^{q-1}+\epsilon\text{Ric}_z(\theta,\theta) s^{q+1})\ud s\ud \theta+O(\epsilon^{(q+2)/2})\,\label{Calculation:FiberManifoldAssumption2}\\
=\,&\epsilon^{(q-1)/2}|S^{q-1}| \varphi_0(d_{g^{(2)}}(x,x'')^2/\epsilon)+\epsilon^{(q+1)/2}\frac{|S^{q-1}|}{q}s_y\varphi_2(d_{g^{(2)}}(x,x'')^2/\epsilon)+O(\epsilon^{(q+2)/2})\,,\nonumber
\end{align}
where we approximate the integration over $B_y$ by the integration over $\mathbb{R}^q$ exploiting the fast decay assumption of the kernel function, and $s_z$ is the scalar curvature of $(\mathcal{N}_2,g_{\mathcal{N}_2})$ at $z$.

Now, by the uniform sampling assumption, we have
\begin{align}
&\int_{\mathcal{N}_2} P_\epsilon^{(I_2)}((x,z),(x'',z''))\ud\nu_{\mathcal{N}_2|\MM}(z''|x'')\nonumber\\
&\quad=\,\frac{\int_{\mathcal{N}_2} \tilde{P}^{(2)}\Big(\frac{\sqrt{d_{g^{(2)}}(x,x'')^2+d_{\mathcal{N}_2}(z,z'')^2}}{\sqrt{\epsilon}}\Big) \ud V_{\mathcal{N}_2}(z'')}{\int_{\MM}\int_{\mathcal{N}_2} \tilde{P}^{(2)}\Big(\frac{\sqrt{d_{g^{(2)}}(x,x'')^2+d_{\mathcal{N}_2}(z,z'')^2}}{\sqrt{\epsilon}}\Big) \ud V_{\mathcal{N}_2}(z'')\ud \nu_{\MM}(x'')}\nonumber\,,
\end{align}
which by (\ref{Calculation:FiberManifoldAssumption2}) is reduced to
\begin{align}
&\int_{\mathcal{N}_2} P_\epsilon^{(I_2)}((x,z),(x'',z''))\ud\nu_{\mathcal{N}_2|\MM}(z''|x'')\nonumber\\
&\quad=\frac{\varphi_0(d_{g^{(2)}}(x,x'')^2/\epsilon)+\epsilon \frac{s_z}{q}\varphi_2(d_{g^{(2)}}(x,x'')^2/\epsilon)+O(\epsilon^{3/2})  }
{\int_{\MM}  \varphi_0(d_{g^{(2)}}(x,x'')^2/\epsilon)\ud \nu_{\MM}(x'') +\epsilon \frac{s_z}{q} \int_{\MM}\varphi_2(d_{g^{(2)}}(x,x'')^2/\epsilon)\ud \nu_{\MM}(x'')+O(\epsilon^{3/2})}\nonumber\\
&\quad=\,\frac{\varphi_0(d_{g^{(2)}}(x,x'')^2/\epsilon) }{\int_{\MM}  \varphi_0(d_{g^{(2)}}(x,x'')^2/\epsilon)\ud\nu_{\MM}(x'')}+\epsilon \frac{s_z}{q}G(x,x'')  +O(\epsilon^{3/2})\nonumber.
\end{align}
where
\begin{align}
G(x,x'') &= \frac{\varphi_2(d_{g^{(2)}}(x,x'')^2/\epsilon)}{\int_{\MM}  \varphi_0(d_{g^{(2)}}(x,x'')^2/\epsilon)\ud \nu_{\MM}(x'')}\nonumber\\
&-\frac{\varphi_0(d_{g^{(2)}}(x,x'')^2/\epsilon) \int_{\MM}\varphi_2(d_{g^{(2)}}(x,x'')^2/\epsilon)\ud \nu_{\MM}(x'') }{(\int_{\MM}  \varphi_0(d_{g^{(2)}}(x,x'')^2/\epsilon)\ud \nu_{\MM}(x''))^2} \nonumber
\end{align}
The above derivation implies that under the assumption that the nuisance variable lies on a compact manifold and is sampled uniformly, the quantity $\int_{\mathcal{N}_2} P_\epsilon^{(I_2)}((x,z),(x'',z''))\ud\nu_{\mathcal{N}_2|\MM}(z''|x'')$ is almost constant in $z$, and the dependence on $z$ is of order $\epsilon$. Hence, under this strong condition, we conclude from Theorem \ref{Theorem:DataEffectiveRelationship} and from Definition \ref{def:nuisance_kernel} that $Df(x,y,z)$ is almost constant in $z$ if we start \ac{AD} from the first sensor, and hence by (\ref{RelationshipEffectiveData:1}), we could recover the effective \ac{AD} via the observable \ac{AD} with a higher order error. It is worthwhile noting that an analogous result can be derived for \ac{AD} starting from the second sensor, if $\mathcal{N}_1$ is a manifold with the same assumptions.
}

\section{Analysis of the alternating diffusion {under the common manifold model}}
\label{Section:ADAnalysis}

In this section, we study the effective \ac{AD} under the manifold setup. 
{Based on the discussion in Section \ref{Section:CommonManifoldAlternationDiffusion}, we present the analysis and proof under a reduced setting which does not contain the nuisance variables. Yet, the results of the analysis under this setting are transferable to the general setting with the nuisance variables. }

{Before presenting the analytical results, we take a closer look at the available metrics when we analyze data.}
First, theoretically, if the metric $d_{\mathcal{S}_i}$ for the dataset faithfully reflects the geodesic distance on $\MM$, then the analysis becomes simple. However, it is usually not the case in practice and the best we could expect is that the metric $d_{\mathcal{S}_i}$ for the dataset provides a good approximation of the geodesic distance on $\MM$; that is, $d_{g^{(i)}}(x,x')$, for $i=1,2$, in (\ref{Equation:MetricDefinition:RelationshipExact}) can be approximated from the observations and their ambient metrics $d_{\mathcal{S}_i}$. Denote the approximation of $d_{g^{(i)}}(x,x')$ by $\bar{d}_{g^{(i)}}(x,x')$.
The error introduced by the difference between the metric we have for the dataset and the geodesic distance might not be easily quantified in general. Thus, for mathematical tractability, we make further assumptions regarding the discrepancy between $d_{g^{(i)}}(x,x')$ and $\bar{d}_{g^{(i)}}(x,x')$. When the metric is well designed, we can assume that for sufficiently close $x$ and $x'$, $d_{g^{(i)}}(x,x')$ and $\bar{d}_{g^{(i)}}(x,x')$ are close up to a higher order error. Moreover, in some situations we are able to directly quantify the error, which helps us to further quantify the information that can be extracted from \ac{AD}.

{We now recast the formulation of the problem and the definitions of the diffusion kernels and operators from Section \ref{Section:CommonManifoldAlternationDiffusion} under the reduced setting, and introduce notations used in this section for the analysis.}
Suppose the common manifold $\MM$ is smoothly embedded into $\RR^p$ via $\iota^{(i)}$, $i=1,2$, with the induced {metric $g^{(i)}$ and hence the induced distance function $d_{g^{(i)}}$}. We sample $\MM$ via two random vectors, $X_i:=\iota^{(i)}\circ X$, where $X:(\Omega,\mathcal{F},P)\to \MM$, with the induced measure on $\MM$, denoted by $\mu_{\MM}:=X_*P$. 
This model entails that under the reduced setting we are not able to access $\MM$ directly, but only through the two sensors via $X_i\in \RR^p$. In this case, 
\begin{align}
\bar{d}_{g^{(i)}}(x,x')=\|\iota^{(i)}(x)-\iota^{(i)}(x')\|, 
\end{align}
where $\|\cdot\|$ means the Euclidean distance between any two samples $x$ and $x'$ from $\MM$;
in other words, we only approximate the geodesic distance on $\MM$ using the Euclidean distance. In general, other approximations could be used, e.g. based on an embedding to a manifold or a well-designed metric space, but to simplify the analysis we focus on the Euclidean space. Clearly, $\iota^{(1)}(\MM)$ and $\iota^{(2)}(\MM)$ are diffeomorphic to each other via a differomorphism $\Phi$ so that $\Phi=\iota^{(2)}\circ {\iota^{(1)}}^{-1}$. 
See Figure \ref{fig:diagram2} for an illustration of this reduced setting.

\begin{figure}[h]
\begin{centering} 
\includegraphics[width=.8\textwidth]{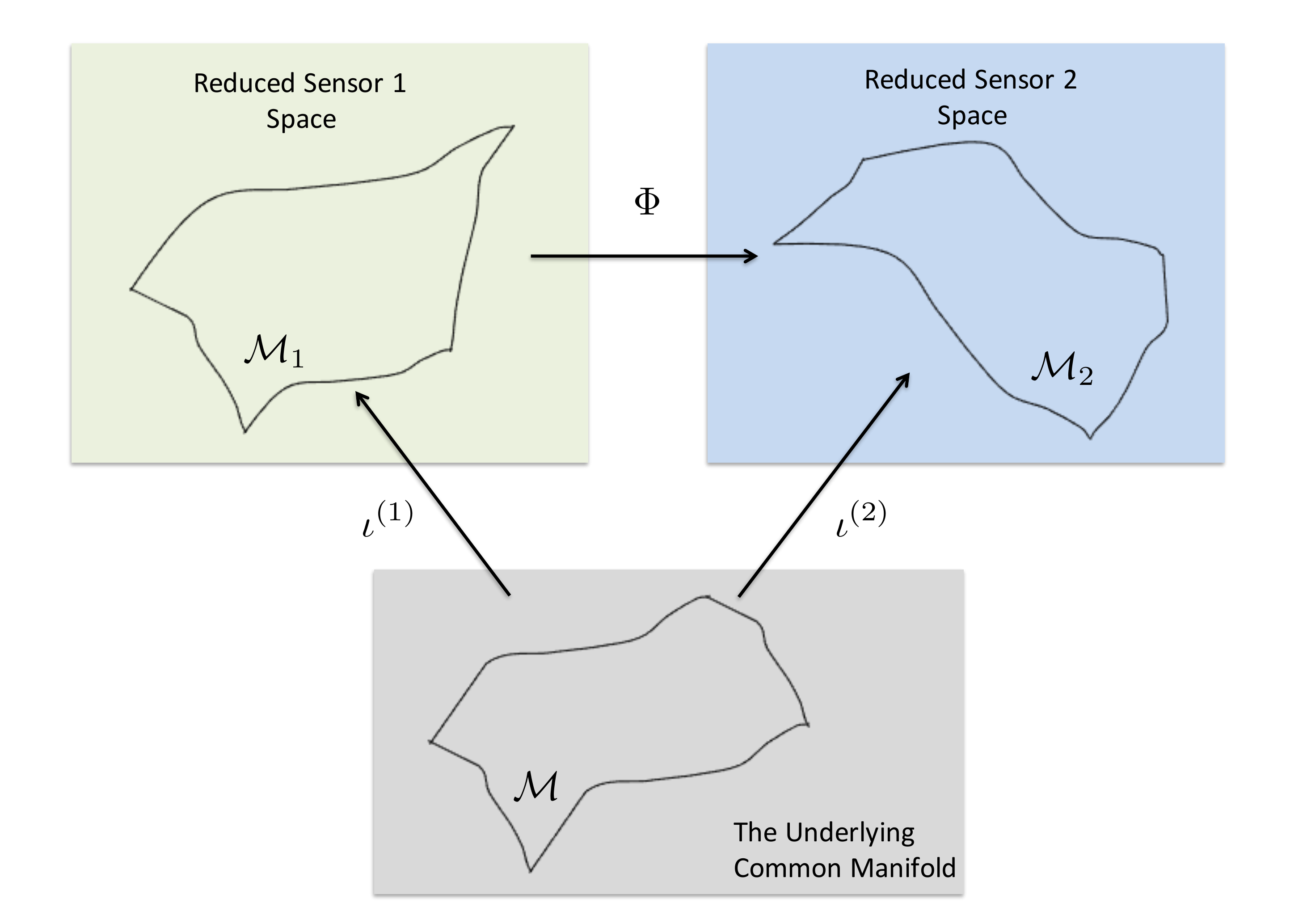}
\caption{A diagram illustrating the reduced setting without nuisance variables.}
\label{fig:diagram2}
\end{centering}
\end{figure}

We study the effective \ac{AD} starting from the first sensor, defined in (\ref{Definition:EffectiveADFromTheFirstSensor}), by studying the following diffusion process on $\MM$. Note that the effective \ac{AD} starting from the second sensor can be analyzed in an analogous way. 

\begin{defn}[Reduced Alternating Diffusion Operator]
Take two kernels $\tilde{K}^{(1)}$ and $\tilde{K}^{(2)}$, and define the reduced (without the nuisance variables) \ac{AD} operator $T:C(\MM)\to C(\MM)$ by 
\begin{align}
Tf(x):=&\int_{\MM}\frac{\tilde{K}_\epsilon^{(2)}(x,x')}{\int_{\MM}\tilde{K}_\epsilon^{(2)}(x,\bar{x}) \ud\mu_{\MM}(\bar{x})} \label{Definition:diffusionManifoldCaseT}\\
&\qquad\times \Big[\int_{\MM}  \frac{\tilde{K}_\epsilon^{(1)}(x',x'')}{\int_{\MM}\tilde{K}_\epsilon^{(1)}(x',\bar{x}) \ud\mu_{\MM}(\bar{x})}f(x'') \ud\mu_{\MM}(x'')\Big]\ud\mu_{\MM}(x')\nonumber
\end{align}
where $f\in C(\MM)$, and    
\begin{align}
\tilde{K}_\epsilon^{(i)}(x,x'):=\tilde{K}^{(i)}\left(\frac{\|\iota^{(i)}(x)-\iota^{(i)}(x')\|}{\sqrt{\epsilon}}\right)
\end{align}
for $i=1,2$.
\end{defn}
\begin{defn}[Reduced Alternating Diffusion Kernel]
Let $\tilde{K}_\epsilon^{(e)}(x,x'')$ be the reduced (without the nuisance variables) \ac{AD} kernel, defined by
\begin{align}
\tilde{K}_\epsilon^{(e)}(x,x''):=\int_{\MM} 
\frac{\tilde{K}_\epsilon^{(2)}(x,x') \tilde{K}_\epsilon^{(1)}(x',x'')}{\int_{\MM}\tilde{K}_\epsilon^{(1)}(x',\bar{x})\ud\mu_{\MM}(\bar{x})}\ud\mu_{\MM}(x').\label{Definition:effectiveKernelManifoldCase}
\end{align}
\end{defn}
By definition, we have
\begin{align}
\int_{\MM}\tilde{K}_\epsilon^{(e)}(x,\bar{x})\ud\mu_{\MM}(\bar{x})=\int_{\MM}\tilde{K}_\epsilon^{(2)}(x,\bar{x}) \ud\mu_{\MM}(\bar{x}).
\end{align}
and
\begin{align}
Tf(x)=\,\frac{\int_{\MM}\tilde{K}_\epsilon^{(e)}(x,x'')f(x'')\ud\mu_{\MM}(x'')}{\int_{\MM}\tilde{K}_\epsilon^{(e)}(x,\bar{x})\ud\mu_{\MM}(\bar{x}) }
\end{align}

The notation used in the remainder of this section is as follows. Note that $\iota^{(i)}(\MM)$, $i=1,2$, is now a sub-manifold of $\RR^p$, and the distance function $g^{(i)}$ is induced from the canonical metric of $\RR^p$. Let $\ud V^{(i)}$ denote the measure associated with the Riemannian volume form induced from $g^{(i)}$. We use the notation $\nabla^{(i)}$, $\exp^{(i)}$, $\Ric^{(i)}$, and $\Pi^{(i)}$ to denote the covariant derivative, the exponential map, the Ricci curvature and the second fundamental form associated with $\iota^{(i)}$, respectively. 

We start from the following assumptions.

\begin{assumption}\label{Assumption:A}
\begin{enumerate}
\item[(A1)] The manifold $\MM$ is $d$-dim, compact and without a boundary. It is embedded into $\RR^p$ via $\iota\in C^4(\MM,\RR^p)$ with the metric $g$ induced from the canonical metric of $\RR^p$. 
\item[(A2)] The kernel functions satisfy the following conditions. For $i=1,2$, $\tilde{K}^{(i)}\in C^2([0,\infty))$ are positive, decay exponentially fast and $\tilde{K}^{(i)}(0)>0$. Further, there exists $c_1,c_2>0$ so that $\tilde{K}^{(i)}(t)<c_1e^{-c_2t^2}$ and $|[\tilde{K}^{(i)}]'(t)|\leq c_1e^{-c_2t^2}$.
Denote $\mu^{(i)}_{l,k}:=\int_{\mathbb{R}^d}\|x\|^l \partial^k\tilde{K}^{(i)}(\|x\|)\ud x<\infty$, where $l\in\{0\}\cup \NN$, and $\partial^k$ is the $k$-th derivative, for $k=0,1,2$. Assume $\mu^{(i)}_{0,0}=1$.
\item[(A3)] The bandwidth of the kernel, denoted by $\epsilon$, satisfies $0<\sqrt{\epsilon}< \min\{\tau,\text{inj}(\MM)\}$, where $\tau$ is the reach of the manifold \cite{Niyogi2006} and $\text{inj}(\MM)$ is the injectivity of $\MM$ \cite[p 271]{doCarmo:1992}.
\item[(A4)] Assume that $\ud \mu_{\MM}$ is absolutely continuous with related to $\ud V^{(i)}$, and we denote $p_i:=\frac{\ud \mu_{\MM}}{\ud V^{(i)}}$ as the probability density function (p.d.f.) of $X$ on $\MM$ by the Radon-Nikodym theorem. Furthermore, assume that $p_i\in C^4(\MM)$ so that $0<\min p_i\leq\max p_i$, where $i=1,2$. 
\end{enumerate}
\end{assumption}

We mention that Assumption (A4) is necessary for the sake of analyzing the asymptotical behavior of \ac{AD} without the influence of the nuisance variables. In particular it contributes to the symmetric argument used in the proofs of the theorems.

With the above preparation, we are ready to state our main results.
The first theorem states that the effective \ac{AD} kernel defined under a model ignoring the nuisance variables behaves essentially like an ordinary diffusion kernel in the normalized form. In particular, when both kernels are Gaussian, the effective \ac{AD} kernel is Gaussian as well.  

\begin{thm}
Suppose Assumptions (A1)-(A4) hold. Take $f\in C^3(\MM)$, $0<\gamma<1/2$ and $x,x''\in \MM$ so that $x''=\exp^{(1)}_x v$, where $v\in T_x\MM$ and $\|v\|_{g^{(1)}}\leq 2\epsilon^\gamma$. 
Denote $R_x=[\ud\exp^{(2)}_x|_0]^{-1}[\ud \iota^{(2)}]^{-1}\nabla\Phi[\ud\iota^{(1)}][\ud\exp^{(1)}_x|_0]:\RR^d\to \RR^d$ Then, when $\epsilon$ is sufficiently small, the following holds:
\begin{align}
\tilde{K}^{(e)}_\epsilon(x,x'')=\int_{\RR^d}\tilde{K}^{(2)}\left(\|R_xw\|\right) \tilde{K}^{(1)}\left(\|w-v/\sqrt{\epsilon}\|\right) \ud w+\epsilon A_{2,\epsilon}(1,v) +O(\epsilon^{3/2})\nonumber
\end{align}
where $A_{2,\epsilon}(1,v)$ is defined in (\ref{Proof:KeyLemma:DefinitionH2eps}), which decays exponentially, $A_{2,\epsilon}(1,0)$ is of order $O(1)$, and $A_{2,\epsilon}(1,-v)=A_{2,\epsilon}(1,v)$. In particular, if $K^{(1)}(t)=K^{(2)}(t)=e^{-t^2}/\pi^{d/2}$, we have
\begin{align}
\int_{\RR^d}\tilde{K}^{(2)}\left(\|R_xw\|\right) \tilde{K}^{(1)}\left(\|w-v\|\right) \ud w=\frac{e^{-\|(I+R_x^2)^{-1/2}R_xv\|^2/\epsilon}}{\sqrt{\det(I+R_x^2)}}.
\end{align}
On the other hand, when $\|v\|_{g^{(1)}}> 2\epsilon^\gamma$,
\begin{align}
\tilde{K}^{(e)}_\epsilon(x,x'')=O(\epsilon^{3/2}).
\end{align}
\label{Theorem:ConvolutionKernel}
\end{thm}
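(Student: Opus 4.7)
The plan is to start from (\ref{Definition:effectiveKernelManifoldCase}),
\[\tilde{K}^{(e)}_\epsilon(x,x'')=\int_{\MM}\frac{\tilde{K}^{(2)}_\epsilon(x,x')\,\tilde{K}^{(1)}_\epsilon(x',x'')}{\int_{\MM}\tilde{K}^{(1)}_\epsilon(x',\bar x)\,\ud\mu_{\MM}(\bar x)}\,\ud\mu_{\MM}(x'),\]
and reduce it to a convolution on $\RR^d$ via localization and a normal-coordinate change of variables. By the Gaussian tail (A2) and the $C^4$-smoothness of $\iota^{(i)}$ on the compact $\MM$, the ambient chord $\|\iota^{(2)}(x)-\iota^{(2)}(x')\|$ is uniformly comparable to $d_{g^{(1)}}(x,x')$, so restricting $x'$ to $\{\exp^{(1)}_x u:\|u\|_{g^{(1)}}<\epsilon^\gamma\}$ costs only a super-polynomially small error. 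Inside this ball I substitute $x'=\exp^{(1)}_x(\sqrt{\epsilon}\,w)$ with $w\in T_x\MM$ expressed in a $g^{(1)}$-orthonormal basis, producing $\ud V^{(1)}(x')=\epsilon^{d/2}\bigl(1-\tfrac{\epsilon}{6}\Ric^{(1)}_x(w,w)+O(\epsilon^{3/2})\bigr)\ud w$.

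Next I would Taylor expand the remaining ingredients to the needed order. The density gives $p_1(\exp^{(1)}_x\sqrt{\epsilon}w)=p_1(x)+\sqrt{\epsilon}\,\langle\nabla p_1(x),w\rangle+\tfrac\epsilon2\mathrm{Hess}\,p_1(x)[w,w]+O(\epsilon^{3/2})$, and the denominator expands as $\epsilon^{d/2}p_1(x')(1+\epsilon\,\alpha_1(x')+O(\epsilon^{3/2}))$ by the standard Coifman--Lafon interior estimate \cite{Coifman_Lafon:2006}, so the factor $p_1(x')$ cancels between numerator and denominator. For the $g^{(1)}$-chord, using $\iota^{(1)}(\exp^{(1)}_x\xi)=\iota^{(1)}(x)+\ud\iota^{(1)}(\xi)+\tfrac12\Pi^{(1)}_x(\xi,\xi)+O(\|\xi\|^3)$ together with the purely-normal character of $\Pi^{(1)}$ in $g^{(1)}$-normal coordinates yields $\|\iota^{(1)}(\exp^{(1)}_x\sqrt{\epsilon}w)-\iota^{(1)}(\exp^{(1)}_x v)\|^2=\|\sqrt{\epsilon}w-v\|_{g^{(1)}}^2+O((\|\sqrt{\epsilon}w\|+\|v\|)^4)$. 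For the $g^{(2)}$-chord with $x'$ in $g^{(1)}$-normal coordinates, expanding $\iota^{(2)}\circ\exp^{(1)}_x$ at $0$ and using $\nabla\Phi\circ\ud\iota^{(1)}=\ud\iota^{(2)}$ identifies $R_x$ as the change-of-basis matrix from the $g^{(1)}$- to the $g^{(2)}$-orthonormal frame of $T_x\MM$, so $\|\iota^{(2)}(x)-\iota^{(2)}(\exp^{(1)}_x\sqrt{\epsilon}w)\|^2=\epsilon\|R_xw\|^2+O(\epsilon^{3/2})$.

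Substituting these expansions, the $\epsilon^{d/2}$ factors cancel and, after extending the $w$-range from $\{\|w\|<\epsilon^{\gamma-1/2}\}$ to all of $\RR^d$ at exponentially small cost, the zeroth-order piece is the announced leading term $\int_{\RR^d}\tilde{K}^{(2)}(\|R_xw\|)\tilde{K}^{(1)}(\|w-v/\sqrt{\epsilon}\|)\,\ud w$. The $O(\epsilon)$ remainder defines $A_{2,\epsilon}(1,v)$; its exponential decay in $\|v\|/\sqrt{\epsilon}$ follows from (A2), its value at $v=0$ is generically $O(1)$ by the surviving Ricci/Hessian/$\Pi$ pieces, and the symmetry $A_{2,\epsilon}(1,-v)=A_{2,\epsilon}(1,v)$ follows from the substitution $w\mapsto v/\sqrt{\epsilon}-w'$ which swaps the two radial kernels and inverts $v$. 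For $\|v\|_{g^{(1)}}>2\epsilon^\gamma$, the triangle inequality forces one of the two chord arguments to exceed $\epsilon^\gamma$ whenever the other is at most $\epsilon^\gamma$, so the integrand is super-polynomially small and hence $\tilde{K}^{(e)}_\epsilon(x,x'')=O(\epsilon^{3/2})$. The Gaussian identity is a direct completion of squares: $\|R_xw\|^2+\|w-v\|^2=w^T(I+R_x^TR_x)w-2v^Tw+\|v\|^2$ gives $\det(I+R_x^TR_x)^{-1/2}\exp\{-v^T[I-(I+R_x^TR_x)^{-1}]v\}$, which via $I-(I+M)^{-1}=(I+M)^{-1}M$ reduces to $\det(I+R_x^2)^{-1/2}\exp(-\|(I+R_x^2)^{-1/2}R_xv\|^2)$ under the symmetric-root convention on $R_x$.

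The main obstacle is organizing the $O(\epsilon)$ bookkeeping. The two second fundamental forms $\Pi^{(i)}$, the Ricci tensor of $g^{(1)}$, the Hessian of $p_1$, the denominator correction $\alpha_1(x')$, and the higher-order deviation between the $g^{(1)}$- and $g^{(2)}$-normal coordinate systems at $x$ all produce comparable contributions that must be combined into a single even-in-$v$ functional $A_{2,\epsilon}(1,v)$; one also has to verify that the a priori $\sqrt{\epsilon}$-order terms coming from the $g^{(2)}$-chord expansion in $g^{(1)}$-normal coordinates vanish upon $w$-integration by the odd-in-$w$ parity of their integrands against the two radial kernels. Replicating the cancellations of the single-metric Coifman--Lafon setting across these two metric structures, while retaining the advertised exponential decay and $v\mapsto-v$ symmetry, is the delicate step.
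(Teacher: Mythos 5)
Your overall route is the same as the paper's: localize to an $\epsilon^{\gamma}$-ball using the exponential decay of the kernels (the paper's Lemma \ref{Lemma:ConvolutionKernelExtraPart}), pass to $g^{(1)}$-normal coordinates scaled by $\sqrt{\epsilon}$ with the Ricci correction to the volume form, expand both ambient chords through the second fundamental forms and $\nabla\Phi,\nabla^2\Phi$ to identify $R_x$ (the paper's Lemmas \ref{relateexp} and \ref{Lemma:distortedDistance}), cancel the density against the denominator via the kernel-density expansion $\tilde{p}_{1,\epsilon}=p_1+\epsilon Q+O(\epsilon^2)$ (the paper applies its general-$F$ Lemma \ref{Lemma:ConvolutionKernel} with $F=p_1/\tilde{p}_{1,\epsilon}$, which is only an organizational difference from your inline cancellation), and finish the Gaussian case by a convolution/completion-of-squares identity (Lemma \ref{Lemma:KeyExpansion}).

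There is, however, one step where your justification does not work as stated. You claim that the a priori $\sqrt{\epsilon}$-order terms coming from the $g^{(2)}$-chord expansion ``vanish upon $w$-integration by the odd-in-$w$ parity of their integrands against the two radial kernels.'' The second kernel in the integrand is $\tilde{K}^{(1)}\left(\|w-v/\sqrt{\epsilon}\|\right)$, which is \emph{not} even in $w$ when $v\neq 0$, so an odd-in-$w$ factor does not make the integral vanish; it only produces a term that is odd under the joint flip $(w,v)\mapsto(-w,-v)$, i.e., an odd-in-$v$ contribution of order $\sqrt{\epsilon}$ (the analogue of the paper's $A_{1,\epsilon}$), which, if present, would contradict the pointwise-in-$v$ expansion $\tilde K^{(e)}_\epsilon= A_{0,\epsilon}+\epsilon A_{2,\epsilon}(1,v)+O(\epsilon^{3/2})$ you are proving. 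The correct mechanism is geometric, not parity-based: the cross term $2\langle \nabla\Phi\,\ud\iota^{(1)}(u),\,Q^{(2)}_2(u)\rangle$ in the squared $g^{(2)}$-chord must be shown to vanish because the quadratic coefficient pairs to zero against the tangential leading term (this is where the normality of the second fundamental form and the paper's tubular-neighborhood extension of $\Phi$, mapping $N_x\iota^{(1)}(\MM)$ to $N_x\iota^{(2)}(\MM)$, enter); only then is the squared chord $\epsilon\|R_xw\|^2+O(\epsilon^2)$ rather than $+O(\epsilon^{3/2})$, and all chord corrections land at order $\epsilon$ inside $A_{2,\epsilon}$, as in the paper's proof of Lemma \ref{Lemma:ConvolutionKernel}. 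Relatedly, your substitution $w\mapsto v/\sqrt{\epsilon}-w'$ does not yield $A_{2,\epsilon}(1,-v)=A_{2,\epsilon}(1,v)$ (it does not swap the kernels, since one is composed with $R_x$); the symmetry follows, as in the paper, from the joint flip $(w,v)\mapsto(-w,-v)$ together with the evenness of $\tilde{Q}^{(2)}_3(u)$ and $\tilde{Q}^{(1)}_3(u,v)$.
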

Note that $R_x$ embodies the difference between the two metrics. \ac{AD} consists of two diffusion steps: the first, carried out by $\tilde{K}^{(1)}$, respects the metric $g^{(1)}$, and the second, carried out by $\tilde{K}^{(2)}$, respects the metric $g^{(2)}$. In order to study the integrated behavior of the two consecutive different diffusion steps, we quantify the overall effect using $g^{(1)}$ via $R_x$. When $\iota^{(1)}=\iota^{(2)}$, that is, when $\Phi$ is the identity map, then $R_x$ is reduced to the identity as well, and the common manifold setup is reduced to the setup considered in \cite{lederman2015alternating}.

Based on the behavior of the reduced effective \ac{AD} kernel $\tilde{K}^{(e)}$ studied in Theorem \ref{Theorem:ConvolutionKernel}, we study the asymptotic behavior of the reduced \ac{AD} operator $T$ (without the influence of the nuisance variables).
The second theorem states that asymptotically $T$ is a deformed Laplace-Beltrami operator defined on $(\MM, g^{(1)})$. 
\begin{thm}
Suppose $f\in C^3(\MM)$. Fix normal coordinates around $x$ associated with $g^{(1)}$ and $g^{(2)}$ so that $\{E_i\}_{i=1}^d\subset T_x\MM$ is orthonormal associated with $g^{(1)}$. Set $R_x=[\ud\exp^{(2)}_x|_0]^{-1}[\ud \iota^{(2)}]^{-1}\nabla\Phi[\ud\iota^{(1)}][\ud\exp^{(1)}_x|_0]$ and by the \ac{SVD} $R_x=U_x\Lambda_xV_x^T$, where $\Lambda_x=\text{diag}[\lambda_1,\ldots,\lambda_d]$. Then, when $\epsilon$ is sufficiently small, the \ac{AD} without the nuisance variables starting from $g^{(1)}$ satisfies
\begin{align}
Tf(x)=\,& f(x)+\frac{\epsilon \mu^{(2)}_{2,0}}{2d^2}\sum_{i=1}^d\lambda_i\big[{\nabla^{(1)}}^2_{E_i,E_i}f(x)+\frac{2\nabla^{(1)}_{E_i}f(x)\nabla^{(1)}_{E_i}p_1(x)}{p_1(x)}\big] \nonumber\\
&\qquad+\frac{\epsilon \mu^{(1)}_{2,0}}{2d^2}\Big[\Delta^{(1)}f(x)+\frac{2\nabla^{(1)} f(x)\cdot\nabla^{(1)} p_1(x)}{p_1(x)}\Big]+O(\epsilon^{3/2})\nonumber.
\end{align}
In particular, when $\iota^{(1)}=\iota^{(2)}$, that is, $R_x=I_d$, for every $x\in\MM$ we have
\begin{align}
Tf(x)=\,f(x)+\epsilon {\frac{\mu^{(1)}_{2,0}+\mu^{(2)}_{2,0}}{2d^2}}\Big[\Delta^{(1)} f(x)+\frac{\nabla^{(1)} p_1(x)\cdot \nabla^{(1)} f(x)}{p_1(x)}\Big]+O(\epsilon^{3/2})
\end{align}
\label{Theorem:MainTheorem2}
\end{thm}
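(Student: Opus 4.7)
The plan is to substitute the pointwise expansion of $\tilde K^{(e)}_\epsilon(x,x'')$ furnished by Theorem \ref{Theorem:ConvolutionKernel} into the normalized ratio
\[
Tf(x)=\frac{\int_\MM \tilde K^{(e)}_\epsilon(x,x'')\,f(x'')\,\ud\mu_\MM(x'')}{\int_\MM \tilde K^{(e)}_\epsilon(x,\bar x)\,\ud\mu_\MM(\bar x)}
\]
and then asymptotically expand in $\epsilon$. Fix $g^{(1)}$-normal coordinates at $x$ with the orthonormal frame $\{E_i\}_{i=1}^d$ and write $x''=\exp^{(1)}_x v$. The integration domain is split into the bulk $\{\|v\|_{g^{(1)}}\le 2\epsilon^\gamma\}$ and its complement; by the exponential-decay Assumption (A2) and the tail estimate of Theorem \ref{Theorem:ConvolutionKernel}, the complement contributes only to the $O(\epsilon^{3/2})$ remainder. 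On the bulk, I would use $\ud\mu_\MM(x'')=p_1(\exp^{(1)}_x v)\,J^{(1)}(v)\,\ud v$ with the standard Jacobian expansion $J^{(1)}(v)=1-\tfrac{1}{6}\Ric^{(1)}_{jk}v^jv^k+O(\|v\|^3)$.

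After rescaling $u=v/\sqrt\epsilon$ and Taylor expanding $f\circ\exp^{(1)}_x$, $p_1\circ\exp^{(1)}_x$ and $J^{(1)}$ in $\sqrt\epsilon\,u$, the entire computation reduces to evaluating the first three moments of the convolution kernel
\[
Q(u):=\int_{\RR^d}\tilde K^{(2)}(\|R_x w\|)\,\tilde K^{(1)}(\|w-u\|)\,\ud w.
\]
The $\epsilon A_{2,\epsilon}(1,v)$ correction in Theorem \ref{Theorem:ConvolutionKernel} enters only through even moments in $v$ by its stated symmetry $A_{2,\epsilon}(1,v)=A_{2,\epsilon}(1,-v)$; at the order of interest it contributes the same multiplicative factor to numerator and denominator and hence cancels in the ratio. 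So at order $\epsilon$, only the moments of $Q$ matter.

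The decisive computation is $\int Q(u)\,u^i u^j\,\ud u$. First I would decouple the two kernels via the change of variable $u'=u-w$, and then apply the \ac{SVD} $R_x=U_xA_xV_x^T$ through an orthogonal change of variable in $w$ so that $\tilde K^{(2)}(\|R_x w\|)$ separates along the singular directions. The first moment of $Q$ vanishes by the radial symmetries $w\mapsto -w$ and $u'\mapsto -u'$ of the two kernel factors. The second moment splits naturally into an isotropic piece proportional to $\mu^{(1)}_{2,0}\delta^{ij}$, which eventually produces the ordinary $\Delta^{(1)}$ summand, and an anisotropic piece proportional to $\mu^{(2)}_{2,0}$ whose matrix is diagonal in the singular-direction frame with entries built from the $\lambda_i$; aligning $\{E_i\}$ with the singular directions of $R_x$ is what yields the clean $\sum_i\lambda_i$-weighted expression in the statement.

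Finally, assembling the numerator $N(f)$ and denominator $N(1)$ from these moments and expanding the ratio $N(f)/N(1)$ to order $\epsilon$ cancels the $f\,{\nabla^{(1)}}^2 p_1$ terms as well as the Ricci-curvature contribution of $J^{(1)}$, in the usual Coifman--Lafon manner, leaving only ${\nabla^{(1)}}^2 f$ and the drift $2\nabla^{(1)}f\cdot\nabla^{(1)}p_1/p_1$. The special case $R_x=I$ collapses $\Lambda_x$ to the identity and recovers the isotropic form. The main obstacle is the bookkeeping in the anisotropic second-moment calculation: tracking the $|\det R_x|$ factors introduced by the SVD change of variable together with the Taylor expansions of $f$, $p_1$ and $J^{(1)}$, and verifying that the curvature and $f\,{\nabla^{(1)}}^2 p_1$ contributions cancel exactly between numerator and denominator to produce the precise coefficient structure stated in the theorem.
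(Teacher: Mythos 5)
Your proposal is correct and follows essentially the same route as the paper: the paper also substitutes the kernel expansion of Theorem \ref{Theorem:ConvolutionKernel} into the normalized ratio (organized there as Lemma \ref{KDE} applied with $F=fp_1$ and $F=p_1$), computes the anisotropic second moment of the convolution kernel by the decoupling change of variable $u=w-v$ together with the \ac{SVD} of $R_x$, and cancels the curvature, $f\,{\nabla^{(1)}}^2p_1$, and $A_{2}$-type terms in the Coifman--Lafon ratio expansion exactly as you describe. No substantive gap.
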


This theorem implies that when $\Phi$ is not the identity map, then the obtained infinitesimal generator of the \ac{AD} process is a deformed Laplace-Beltrami operator of $\mathcal{M}$ associated with $g^{(1)}$. In particular, when $K^{(1)}=K^{(2)}$, we have
\begin{align}
Tf(x)=\,& f(x)+\frac{\epsilon \mu^{(1)}_{2,0}}{2d^2}\sum_{i=1}^d(1+\lambda_i)\big[{\nabla^{(1)}}^2_{E_i,E_i}f(x)+\frac{2\nabla^{(1)}_{E_i}f(x)\nabla^{(1)}_{E_i}p_1(x)}{p_1(x)}\big]+O(\epsilon^{3/2})\nonumber.
\end{align}
In addition, when $\Phi$ is the identity map, then the infinitesimal generator of the \ac{AD} process is precisely the Laplace-Beltrami operator of $\mathcal{M}$ associated with $g^{(1)}$. 

The proofs of Theorem \ref{Theorem:ConvolutionKernel} and Theorem \ref{Theorem:MainTheorem2} appear in \ref{Section:Appendix:Proof}. 
The theorems immediately lead to the following corollary, which describes the asymptotical behavior of the effective \ac{AD} operator in a model including the nuisance variables, studied in (\ref{Definition:EffectiveADFromTheFirstSensor}) and is associated with the \ac{AD} defined from data, starting from the first sensor (\ref{Definition:AD:ExpressionForAnalysisOnM}). We mention that since in general we are not able to convert (\ref{Description:Pe1IsNotNormalized}) into a normalized kernel, we need to directly study the observable diffusion kernel $P^{(\mathcal{N}_1,y)}_\epsilon$.

\begin{cor}\label{Theorem:MainCorollary}
Suppose $f\in C(\MM\times \mathcal{N}_1\times \mathcal{N}_2)$ so that $\mathcal{E}f\in C^3(\MM)$ and $\nu_{\MM}$ is absolutely continuous with respect to the Riemannian measure induced from $g^{(1)}$ so that $p_1=\frac{\ud\nu_{\MM}}{\ud V^{(1)}}\in C^4(\MM)$. Then by the definitions of the \ac{AD} operators in \eqref{eq:AD_operator} and (\ref{Definition:EffectiveADFromTheFirstSensor}), and by the commutativity from \eqref{eq:com}, we obtain
\begin{align}
&\mathcal{E}Df(x)=\mathcal{E}f(x)+\frac{\epsilon C_2}{2d^2}\sum_{i=1}^d\lambda_i\big[{\nabla^{(1)}}^2_{E_i,E_i}\mathcal{E}f(x)+\frac{2\nabla^{(1)}_{E_i}\mathcal{E}f(x)\nabla^{(1)}_{E_i}p_1(x)}{p_1(x)}\big] \\
&\qquad+\frac{\epsilon C_1}{2d^2}\Big[\Delta^{(1)}\mathcal{E}f(x)+\frac{2\nabla^{(1)} \mathcal{E}f(x)\nabla^{(1)} p_1(x)}{p_1(x)}\Big]+O(\epsilon^{3/2}).\nonumber
\end{align}
where $C_i$, $i=1,2$, are constants depending only of the chosen kernel $\tilde{P}^{(i)}$.
\end{cor}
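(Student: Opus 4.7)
The plan is to reduce the statement to the two major results already established in the paper: the commutation identity from Theorem~\ref{Theorem:Commutation} and the asymptotic expansion of the reduced AD operator in Theorem~\ref{Theorem:MainTheorem2}. Indeed, by \eqref{eq:com} we immediately have
\begin{align*}
\mathcal{E}Df(x) \,=\, D^{(e)}\mathcal{E}f(x) \,=\, D^{(e_2)}D^{(e_1)}\mathcal{E}f(x),
\end{align*}
so it suffices to expand the right-hand side to order $\epsilon$. Since $\mathcal{E}f$ lives on $\MM$ and each $D^{(e_i)}$ is the integral operator on $\MM$ against the effective kernel $P^{(e_i)}_\epsilon$ normalized against $\nu_{\MM}$, the problem is structurally identical to analyzing the reduced operator $T$ of Section~\ref{Section:ADAnalysis}, only with the kernels $\tilde{K}^{(i)}_\epsilon$ replaced by the effective kernels.

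The main work is therefore to identify the asymptotic behavior of $P^{(e_i)}_\epsilon$ on the common manifold. First, I would write out $P^{(e_i)}_\epsilon$ as in \eqref{Description:Pe1IsNotNormalized} and evaluate its numerator and denominator by a Laplace-type argument: introduce normal coordinates on $\MM$ centered at $x$ with respect to $g^{(i)}$, change variables $x'=\exp^{(i)}_x(\sqrt{\epsilon}\,u)$, and expand both $d_{g^{(i)}}(x,x')^2$ and the conditional density $\nu_{\mathcal{N}_i|\MM}(y'|x')$ in powers of $\sqrt{\epsilon}$ in the manner used in the proof of Theorem~\ref{Theorem:MainTheorem2}. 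The inner nuisance integral $\int_{\mathcal{N}_i}\tilde{P}^{(i)}(\sqrt{d_{g^{(i)}}(x,x')^2+d_{\mathcal{N}_i}(y,y')^2}/\sqrt{\epsilon})\,\ud\nu_{\mathcal{N}_i|\MM}(y'|x')$ produces, after a Laplace expansion of the form already rehearsed in Section~\ref{Section:ManifoldNuisance} (for $\mathcal{N}_2$ a compact manifold with uniform conditional sampling) or under an analogous regularity assumption on $(\mathcal{N}_i,\nu_{\mathcal{N}_i|\MM})$, a radial function on $\MM$ of the form $\Psi^{(i)}(d_{g^{(i)}}(x,x')/\sqrt{\epsilon})$ plus a higher-order correction. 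The resulting effective kernel then satisfies the normalization \eqref{eq:effective_diffusion_kernel} with a decaying, rotationally symmetric profile, and its second moment
\[
C_i \,:=\, \frac{1}{d}\int_{\RR^d}\|u\|^2\Psi^{(i)}(\|u\|)\,\ud u
\]
plays the role of $\mu^{(i)}_{2,0}$ in the reduced theorem.

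With $P^{(e_i)}_\epsilon$ replaced by this Laplace-type surrogate, I would repeat verbatim the two-step normal-coordinate expansion of the proof of Theorem~\ref{Theorem:MainTheorem2}: apply $D^{(e_1)}$ to $\mathcal{E}f$ to pick up $\Delta^{(1)}\mathcal{E}f + 2\nabla^{(1)}\mathcal{E}f\cdot\nabla^{(1)}p_1/p_1$ at order $\epsilon$ with constant $C_1$, and then apply $D^{(e_2)}$, whose expansion on $(\MM,g^{(2)})$ re-expressed in the $g^{(1)}$-normal frame via the linear map $R_x=U_x\Lambda_x V_x^T$ of Theorem~\ref{Theorem:MainTheorem2} contributes the anisotropic term $\sum_i\lambda_i[{\nabla^{(1)}}^2_{E_i,E_i}\mathcal{E}f + 2\nabla^{(1)}_{E_i}\mathcal{E}f\nabla^{(1)}_{E_i}p_1/p_1]$ weighted by $C_2$. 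Combining the two and absorbing the order-$\epsilon$ cross-terms into the $O(\epsilon^{3/2})$ remainder (since $D^{(e_1)}\mathcal{E}f = \mathcal{E}f + O(\epsilon)$) yields the claimed formula.

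The main obstacle is the Laplace-type reduction of the effective kernel to a radial profile on $\MM$: $\tilde{P}^{(i)}$ is not generally Gaussian, so the nuisance integral does not separate, and the $y$-dependence of the numerator in \eqref{Description:Pe1IsNotNormalized} survives to low order, as the discussion leading to the nuisance-dependent kernels in Definition~\ref{def:nuisance_kernel} already warns. Controlling this requires either (i) imposing enough regularity on $(\mathcal{N}_i,\nu_{\mathcal{N}_i|\MM})$ so that the nuisance integral admits a $\sqrt{\epsilon}$-expansion uniform in $y$ (as in Section~\ref{Section:ManifoldNuisance}), or (ii) passing through the outer marginalization $\mathcal{E}$ in \eqref{eq:effective_diffusion_kernel}, which integrates out $y$ and leaves only $d_{g^{(i)}}(x,x')$-dependence to leading order. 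Option~(ii) is the cleaner route: averaging the numerator against $\nu_{\mathcal{N}_i|\MM}(y|x)$ kills the $y$-dependence and recovers a genuine radial profile on $\MM$, at the price of absorbing the conditional nuisance distribution into the definition of the constant $C_i$.
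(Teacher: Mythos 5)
Your high-level architecture (commute via \eqref{eq:com}, then perform a second-order expansion to identify the generator) is in the right family, but the step on which everything hinges is exactly the one the paper flags as impossible in general and deliberately avoids. Your ``option (ii)'' asserts that marginalizing \eqref{Description:Pe1IsNotNormalized} over $y$ ``kills the $y$-dependence and recovers a genuine radial profile on $\MM$,'' so that the machinery of Theorem~\ref{Theorem:MainTheorem2} applies verbatim to $D^{(e_2)}D^{(e_1)}$. But $P^{(e_1)}_\epsilon(x,x')$ is the $\nu_{\mathcal{N}_1|\MM}(\cdot|x)$-average of a \emph{ratio} in which both the numerator and the normalizing denominator depend on $y$ (and the numerator also depends on $x'$ through $\nu_{\mathcal{N}_1|\MM}(\cdot|x')$); the average of such a ratio is not a radial profile divided by its own integral, which is precisely why the text preceding the corollary states that \eqref{Description:Pe1IsNotNormalized} cannot in general be converted into a normalized kernel and that one must instead work with the nuisance-dependent kernels of Definition~\ref{def:nuisance_kernel}. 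Your ``option (i)'' does not rescue this either, since it imports the manifold-plus-uniform-sampling assumptions of Section~\ref{Section:ManifoldNuisance}, which are not hypotheses of the corollary. The paper's actual route is the reverse order of operations: it uses the decomposition \eqref{Definition:AD:ExpressionForAnalysisOnM} from Theorem~\ref{Theorem:DataEffectiveRelationship} to write $Df$ at \emph{fixed} nuisance values $(y,z)$ as a composition of the normalized kernels $P^{(\mathcal{N}_2,z)}_\epsilon$ and $P^{(\mathcal{N}_1,y)}_\epsilon$ on $\MM$, applies Theorem~\ref{Theorem:MainTheorem2} to that composition (so the expansion holds with $z$- and $y$-dependent second moments $\mu^{(i)}_{2,0}(\cdot)$), and only then applies $\mathcal{E}$, which converts those moments into averaged coefficients $C_i(x)$.

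This points to a second concrete omission in your plan: even granting an expansion, the coefficients you obtain are a priori functions $C_i(x)$, since the conditional laws $\nu_{\mathcal{N}_i|\MM}(\cdot|x)$ enter the averaged second moments; you acknowledge this by ``absorbing the conditional nuisance distribution into the definition of $C_i$,'' but the corollary claims constants and, more importantly, using $x$-dependent coefficients without justification would leave an uncontrolled order-$\epsilon$ term. The last part of the paper's proof is devoted exactly to this: expanding the doubly nuisance-averaged kernel $\tilde{\tilde{P}}^{(i)}$ to show that the $x$-dependence of $C_i(x)$ only enters at order $\epsilon$, hence inside the $O(\epsilon^{3/2})$ remainder. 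To repair your proposal you would need (a) a quantitative version of your Laplace-type reduction showing the effective (or nuisance-dependent) kernels agree with a normalized radial surrogate up to errors that are either $O(\epsilon^{3/2})$ or symmetric in $v\mapsto-v$ so they cancel in the numerator/denominator ratio, and (b) the $C_i(x)=C_i+O(\epsilon)$ argument; as written, both are asserted rather than proved.
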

The proof of this corollary appears in \ref{Section:Appendix:Proof} as well. The corollary states that after marginalization, the \ac{AD} operator computed from data is asymptotically (when $\epsilon$ is small) a deformed Laplace-Beltrame operator on the common manifold $\mathcal{M}$ associated with $g^{(1)}$. 

We end this section with two closing remarks. First, we note the importance of the order of the kernels consisting the \ac{AD} kernel, as implied by the analytic results (Theorem \ref{Theorem:ConvolutionKernel} and Theorem \ref{Theorem:MainTheorem2}). Second, these results further show that in order to compare the outcome of \ac{AD} starting from the first sensor to those of \ac{AD} starting from the second sensor, the deformation of the common manifold in each of the two sensors has to be taken into account. This issue is illustrated by an example in Section \ref{Section:Seasonality}.


{
\section{Alternating diffusion algorithm}\label{Section:ADalgorithm}
}

The \ac{AD} algorithm is summarized in Algorithm \ref{algo:AD1},
{ which is a direct discretization of the observable \ac{AD} operator.
}

\begin{algorithm}[t]
\caption{Alternating Diffusion algorithm}
\label{algo:AD1}
\begin{itemize}
\item[\textbf{Input:}] Two data sets $\mathcal{X}_l:=\{ x_{l,i} \}_{i=1}^n\subset\mathbb{R}^p$, where $l=1,2$, are given. $x_{1,i}$ and $x_{2,i}$ are sampled simultaneously from the sensors, for all $i=1,\ldots,n$. 

\item[\textbf{Parameters:}] Pick two positive kernels {$\tilde{P}^{(1)}$ and $\tilde{P}^{(2)}$} which decay fast enough. Fix {a positive integer $K\leq n$ and $\epsilon>0$}.

\item[\textbf{Output:}] The first $K$ {singular} values and {singular} vectors of the \ac{AD} starting with the first or the second sensor.

\end{itemize}

\begin{enumerate}
\item Build the first affinity matrix $\mathbf{W}_1 \in \mathbb{R}^{n \times n}$ based on the first data set $\mathcal{X}_1$ by ${\mathbf{W}_1(i,j)}=\tilde{P}^{(1)}(\|x_{1i}-x_{1j}\|/\sqrt{\epsilon})$.

\item Build the second affinity matrix $\mathbf{W}_2 \in \mathbb{R}^{n \times n}$ based on the second data set $\mathcal{X}_2$ by ${\mathbf{W}_2(i,j)}=\tilde{P}^{(2)}(\|x_{2i}-x_{2j}\|/\sqrt{\epsilon})$.

\item Build the first diffusion kernel $\mathbf{P}_1 \in \mathbb{R}^{n \times n}$ based on the first data set $\mathcal{X}_1$ by $\mathbf{P}_1(i,j)=\frac{\mathbf{W}_1(i,j)}{\sum_{l=1}^n\mathbf{W}_1(l,j)}$.

\item Build the second diffusion kernel $\mathbf{P}_2 \in \mathbb{R}^{n \times n}$ based on the second data set $\mathcal{X}_2$ by $\mathbf{P}_2(i,j)=\frac{\mathbf{W}_2(i,j)}{\sum_{l=1}^n\mathbf{W}_2(l,j)}$.

\item Run the \ac{AD} starting with the first sensor with the diffusion operator $\mathbf{P}_2\mathbf{P}_1$ and obtain the first $K$ {right singular vectors} $u^{(1)}_i \in \mathbb{R}^n$, $i=1,\ldots,K$. 

\item Run the \ac{AD} starting with the second sensor with the diffusion operator $\mathbf{P}_1\mathbf{P}_2$ and obtain the first $K$ {right singular vectors} $u^{(2)}_i \in \mathbb{R}^n$, $i=1,\ldots,K$.

\end{enumerate}
\end{algorithm}

{
\subsection{Some facts about the spectrum}
The discretization of \ac{AD} is implemented by a direct multiplication of two normalized affinity matrices (kernels), i.e., $\mathbf{P}:=\mathbf{P}_1\mathbf{P}_2$, and it is not obvious that the spectral theorem can be applied to $\mathbf{P}$. %
While both $\mathbf{P}_1$ and $\mathbf{P}_2$ are symmetrizable, the asymmetric matrix $\mathbf{P}$ is not symmetrizable or normal in general, so that it does not necessarily have a real spectrum or a complete eigen-basis. 
Note that $\mathbf{P}$ is a row stochastic matrix, so that the operator norm of $\mathbf{P}$ is bounded by $1$, as both $\mathbf{P}_1$ and $\mathbf{P}_2$ are bounded by $1$. 
Moreover, by the Perron-Frobenius theory if both $\mathbf{P}_1$ and $\mathbf{P}_2$ are primitive, $\mathbf{P}$ has an eigenvalue $1$, which is simple and is the only eigenvalue with radius $1$, corresponding to the eigenvector $[1,\ldots,1]^T/n\in \mathbb{R}^n$. Since we use two positive kernels $\tilde{P}^{(1)}$ and $\tilde{P}^{(2)}$ and since all pairwise affinities are evaluated, these assumptions are satisfied. 
To be more precise, define $\mathbf{W}:={\mathbf{W}}_1\mathbf{D}_2^{-1}{\mathbf{W}}_2$, where $\mathbf{D}_i:=\text{diag}({\mathbf{W}}_i\mathbf{1})$ for $i=1,2$ and $\mathbf{1}=[1,\ldots,1]^T\in \mathbb{R}^n$. In general, $\mathbf{W}$ is asymmetric but non-negative. We thus have
\begin{equation}
\mathbf{P}=\mathbf{D}_1^{-1}\mathbf{W},
\end{equation}
and note that $\mathbf{D}_1=\text{diag}(\mathbf{W}\mathbf{1})$. In other words, the discretized \ac{AD} operator $\mathbf{P}$ could be viewed as a stochastic diffusion operator on a directed graph with imbalanced weights on edges. 
In general, although $\mathbf{W}_1$ and $\mathbf{W}_2$ could be made positive definite if the chosen kernel is Gaussian by the Bochner theorem when there are finite points, the spectrum of $\mathbf{W}$ might not be real, and hence the spectrum of $\mathbf{P}$.

\subsection{Alternating diffusion map algorithm and alternating diffusion distance via SVD}

For the asymmetric matrix $\mathbf{P}\in \mathbb{R}^{n\times n}$, we can always consider the \ac{SVD}; that is, $\mathbf{P}=\mathbf{U}\Lambda \mathbf{V}^T$, where $\mathbf{U}=\begin{bmatrix}u_1,\ldots,u_n\end{bmatrix}\in O(n)$ and $\mathbf{V}=\begin{bmatrix}v_1,\ldots,v_n\end{bmatrix}\in O(n)$ contain the left and right singular vectors, $v_i$ and $u_i$, $i=1,\ldots,n$, and $\Lambda=\text{diag}[\sigma_1,\ldots,\sigma_n]$ is a $n\times n$ diagonal matrix with the corresponding singular values, $\sigma_i$, $i=1,\ldots,n$, on the diagonal entries.
Therefore, the matrix $\mathbf{P}$ can be decomposed as $\mathbf{P}=\sum_{\ell=1}^n\sigma_\ell u_\ell v_\ell^T$. By observing that 
\begin{equation}
\|\mathbf{P}e_i-\mathbf{P}e_j\|^2=\sum_{\ell=1}^n(\sigma_\ell v_\ell(i)-\sigma_\ell v_\ell(j))^2,
\end{equation}
where $e_i\in\mathbb{R}^n$ is the unit vector with the $i$-th entry $1$, consider the following {\em \ac{AD} map}:
\begin{equation}
\Phi:i\mapsto [\sigma_1 v_1(i),\,\sigma_2 v_2(i),\ldots,\sigma_n v_n(i)]^T\in \mathbb{R}^n\,,
\end{equation}
and the corresponding {\em \ac{AD} distance} between $i$ and $j$ defined as $\|\Phi(i)-\Phi(j)\|$.
Note that we have 
\begin{equation}
\|\Phi(i)-\Phi(j)\|=\|\mathbf{P}e_i-\mathbf{P}e_j\|\,,
\end{equation}
and $\|\mathbf{P}e_i-\mathbf{P}e_j\|$ is analogous to the effective alternating-diffusion distance considered in \cite{lederman2015alternating}.
Yet, the \ac{AD} map and distance here are based on the \ac{SVD} of $\mathbf{P}$ rather than on the \ac{EVD} of $\mathbf{P}$ as presented in \cite{Coifman_Lafon:2006}\footnotemark.
\footnotetext{As described above, in contrast to the standard construction of the kernel, in \ac{AD}, $\mathbf{P}$ does not necessarily have real eigenvectors.}

The main benefit of considering $\|\Phi(i)-\Phi(j)\|$ as the \ac{AD} diffusion distance lies in the capability to handle noise. 
For example, based on recent advances in analyzing the \ac{SVD} in random matrix setup \cite{Donoho_Gavish_Johnstone:2013}, one could balance between the accuracy of the required \ac{AD} distance and the noise influence on the result via the truncation scheme. A systematic study of this direction will be reported in future work.

In general, to realize the idea of ``diffusion'' \cite{Coifman_Lafon:2006}, consider $\mathbf{P}^t$, where $t>0$. Broadly, if $\mathbf{P}$ is viewed as a transition probability matrix of some Markov chain defined on the samples, $\mathbf{P}^t$ consists of the transition probabilities in $t$ steps.
In contrast to the standard diffusion geometry framework \cite{Coifman_Lafon:2006}, which relies on the \ac{EVD} of $\mathbf{P}$ as well as on the tight connection between the \ac{EVD} of $\mathbf{P}$ and of $\mathbf{P}^t$, the singular values and singular vectors of $\mathbf{P}^t$ are not directly related to those of $\mathbf{P}$.
Denote the \ac{SVD} of $\mathbf{P}^t$ as $\mathbf{U}_t\Lambda_t \mathbf{V}_t^T$, where $\mathbf{U}_t=\begin{bmatrix}u_{t,1},\ldots,u_{t,n}\end{bmatrix}\in O(n)$, $\mathbf{V}_t=\begin{bmatrix}v_{t,1},\ldots,v_{t,n}\end{bmatrix}\in O(n)$ contain left and right singular vectors and $\Lambda_t=\text{diag}[\sigma_{t,1},\ldots,\sigma_{t,n}]$ is a $n\times n$ diagonal matrix with the corresponding singular values on the diagonal entries. Then, by the same argument as the above, we could consider the following {\em \ac{AD} map with time $t>0$}:
\begin{equation}
\Phi_t:i\mapsto [\sigma_{t,1} v_{t,1}(i),\,\sigma_{t,2} v_{t,2}(i),\ldots,\sigma_{t,n} v_{t,n}(i)]^T\in \mathbb{R}^n\,,
\end{equation}
and the corresponding {\em \ac{AD} distance with diffusion time $t>0$} as $\|\Phi_t(i)-\Phi_t(j)\|$.
As before, we have $\|\mathbf{P}^te_i-\mathbf{P}^te_j\|=\|\Phi_t(i)-\Phi_t(j)\|$. 
We remark that a similar approach is also considered in \cite{2016arXiv160803628M}, where the authors consider the time-coupled diffusion maps.

\subsection{Non-uniform sampling issue}
}

Typically, the dataset is sampled non-uniformly from the common manifold; that is, $p_1$ or/and $p_2$ might be non-constant. In the manifold learning society, it has been well known that the non-uniform sampling effect has possible negative effect \cite{Coifman_Lafon:2006}, whereas in some cases it is beneficial \cite{Nadler_Lafon_Coifman:2006}. One way to reduce the influence of the non-uniform sampling is the $\alpha$-normalization proposed in \cite{Coifman_Lafon:2006}. While the application of this normalization to our \ac{AD} setup is straightforward, here we summarize the procedure and refer readers with interest to \cite{Coifman_Lafon:2006,Singer_Wu:2017} for details. For $0\leq \alpha\leq 1$, $\epsilon>0$ and a probability density function $p$ defined on $\MM$, we define the following functions for $i=1,2$: 
\begin{align}
p^{(i)}_{\epsilon}(x) &:=\,\int_{\MM} \tilde{K}^{(i)}_{\epsilon}(x,y)p_i(y)\ud V^{(i)}(y),\quad
\tilde{K}^{(i)}_{\epsilon,\alpha}(x,y) :=\,\frac{\tilde{K}^{(i)}_{\epsilon}(x,y)}{{p^{(i)}}^\alpha_{\epsilon}(x) {p^{(i)}}^\alpha_{\epsilon}(y)},\\
d^{(i)}_{\epsilon,\alpha}(x) &:=\,\int_{\MM} \tilde{K}^{(i)}_{\epsilon,\alpha}(x,y)p_i(y)\ud V^{(i)}(y),\quad
K^{(i)}_{\epsilon,\alpha}(x,y):=\,\frac{\tilde{K}^{(i)}_{\epsilon,\alpha}(x,y)}{d^{(i)}_{\epsilon,\alpha}(x)}.  \nonumber
\end{align}
Here, $p^{(i)}_{\epsilon}(x)$ is related to the estimation of the p.d.f. $p_i$ at $x$, denoted by $\epsilon^{-d/2}p^{(i)}_{\epsilon}(x)$. 
The practical meaning of $\tilde{K}^{(i)}_{\epsilon,\alpha}(x,y)$ is a {\em new} kernel function at $(x,y)$ adjusted by the estimated p.d.f. at $x$ and $y$; that is, the kernel is ``normalized'' to reduce the influence of the non-uniform p.d.f. $p_i$. The kernel $K^{(i)}_{\epsilon,\alpha}(x,y)$ is thus another diffusion kernel associated with $\tilde{K}^{(i)}_{\epsilon,\alpha}(x,y)$. Since the proof of the $\alpha$-normalization follows the same lines as those in \cite{Coifman_Lafon:2006,Singer_Wu:2017}, we do not present it here.

\section{Application to seasonal pattern detection}
\label{Section:Seasonality}

The ability to extract the common latent manifold underlying multiple manifolds gives rise to a new approach for detecting latent seasonal patterns in time series \cite{de2011forecasting,harvey1993forecasting,harvey1997modeling,taylor2003short,gould2008forecasting}. 
While most existing methods for seasonal pattern detection are based on trigonometric or Fourier-based analysis, time frequency analysis and parametric estimation, we take a geometric data analysis standpoint. Our approach allows the detection of seasonal patterns hidden in the data and obscured by the observation modality in addition to noise. Specifically, it is designed to accommodate nonlinearities masking the information of interest. 
For example, consider a simple $1$-dimensional periodic pattern, represented by the harmonic function $\cos(2\pi\omega_0 x)$ with a single base frequency $\omega_0>0$. Suppose this seasonal pattern is distorted by an unknown nonlinear observation function $\iota$, which takes the form of $\sqrt{y}$ when $y=\cos(2\pi\omega_0 x)>0$ and $y^2$ otherwise. Even in this caricature example, neither by trigonometric function matching, Fourier-based analysis, nor parametric estimation, the wrong frequency information might be recovered. In contrast, typical time frequency analysis approaches might provide redundant information, causing ambiguity.

The primary idea relies on the observation that the {\em geometric manifold representation} of a pure $1$-dimensional seasonal pattern, which is usually represented by a simple sinusoidal process, is a $1$-dimensional sphere $S^1:=\{(\cos(\theta),\sin(\theta))^T\in\RR^2|\,\theta\in[0,2\pi)\}$, where $\theta$ denotes the intrinsic phase of the observed seasonal oscillation; that is, the intrinsic manifold of interest, which is to be recovered in order to discover the seasonal dynamics, is $\MM=S^1$.
The time series sampled from the underlying $1$-dimensional seasonal pattern, denoted by $f(t)$, can converted to a high dimensional time series, denoted by $\mathcal{X}$, in an observable space $\mathcal{S}$ by using a lag map \cite{Takens:1981}. Clearly, the geometric structure of the observed (now) high dimensional points might be different from $\MM$, as the data might be contaminated by the observation modality and the embedding process. Using the notation from Section \ref{Section:CommonManifoldAlternationDiffusion}, the sampled high dimensional data $\mathcal{X}$ can be modeled as a smooth embedding $\iota: \MM \times \mathcal{N} \rightarrow \mathcal{S}$, where $\mathcal{N}$ is a metric space describing various nuisance/interference variables as well as artifacts introduced by the conversion.
We note that while the exposition here focuses on detecting $1$-dimensional patterns, the extension of the formulation and the detection algorithm to higher dimensional seasonal patterns is straightforward.
For example, $d$ independent seasonal patterns can be represented by a $d$-torus: $\mathcal{T}^d = S^1 \times \cdots \times S^1$.

\begin{figure}[t]
    \centering
    \subfigure[]{
        \includegraphics[width=50mm,angle=0]{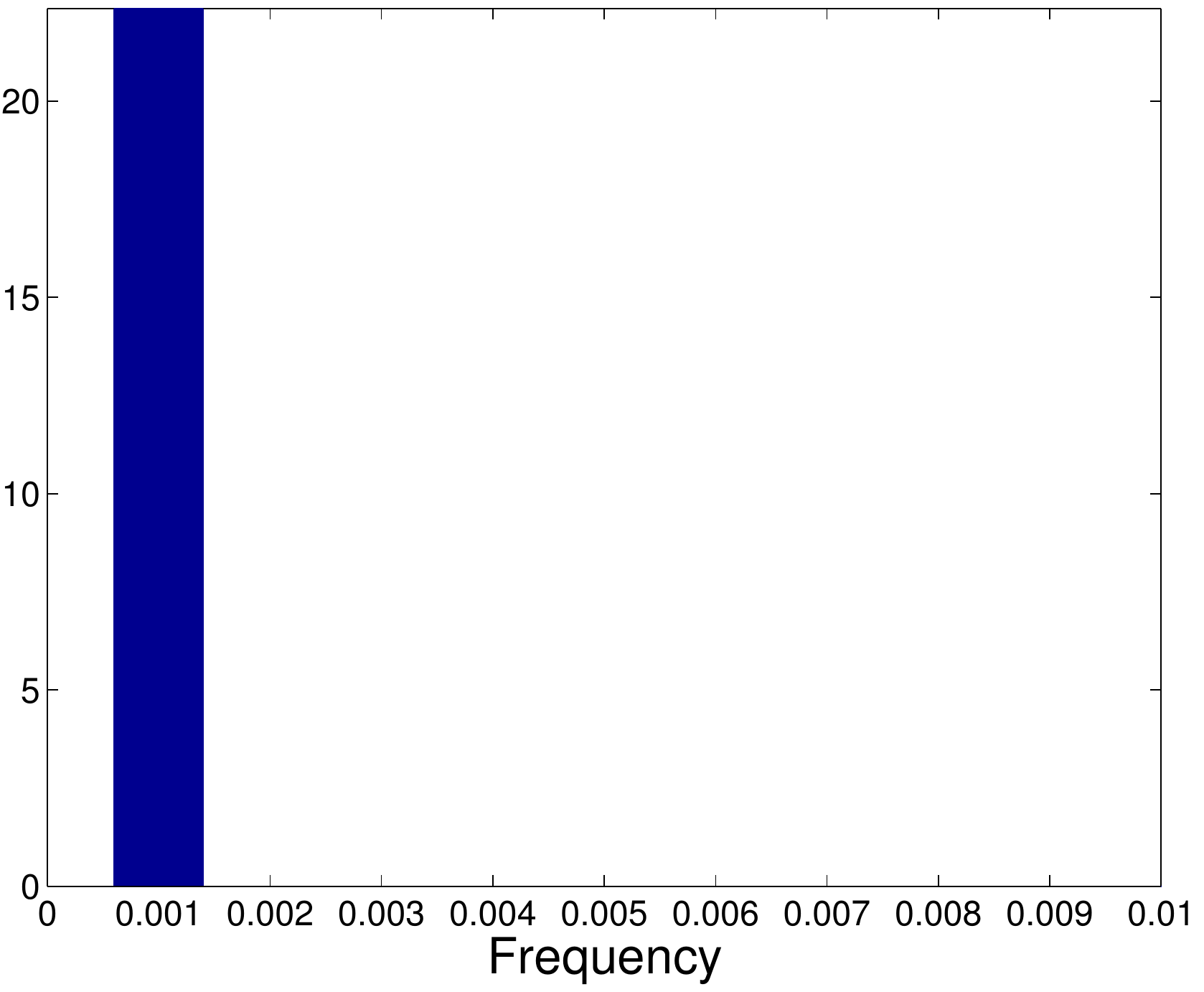}}
    \subfigure[]{
        \includegraphics[width=50mm,angle=0]{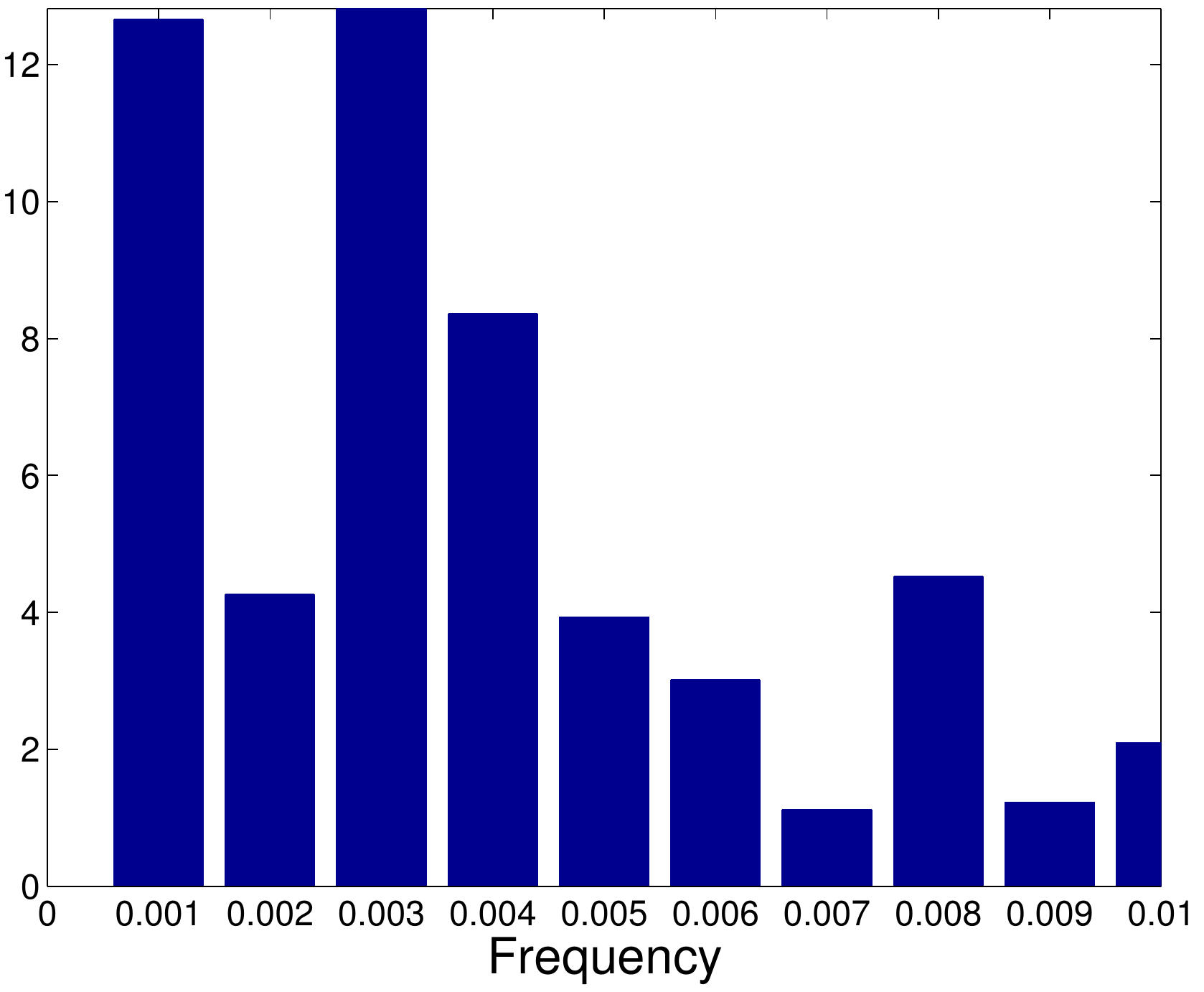}}
    \subfigure[]{
        \includegraphics[width=50mm,angle=0]{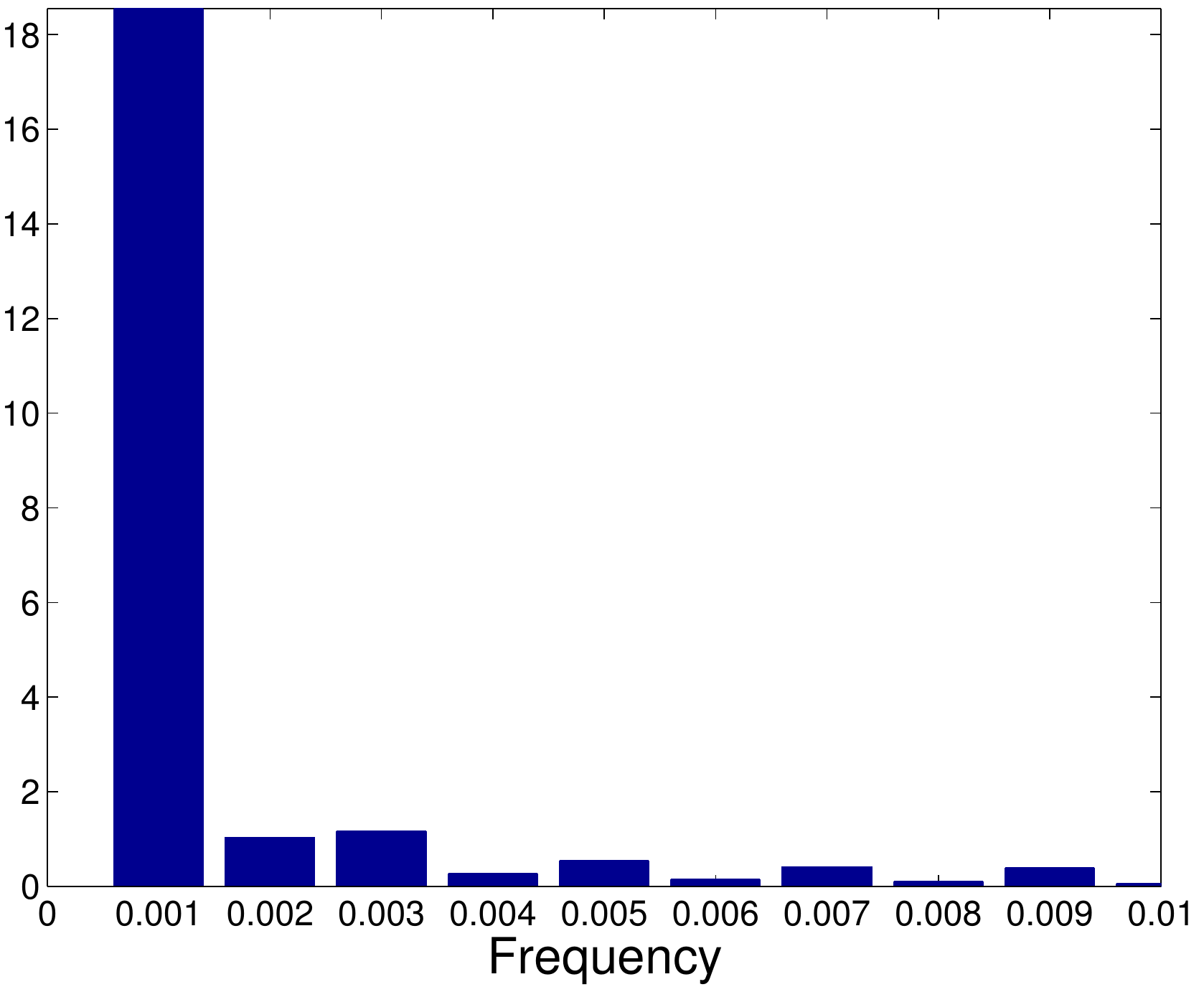}}
    \subfigure[]{
        \includegraphics[width=50mm,angle=0]{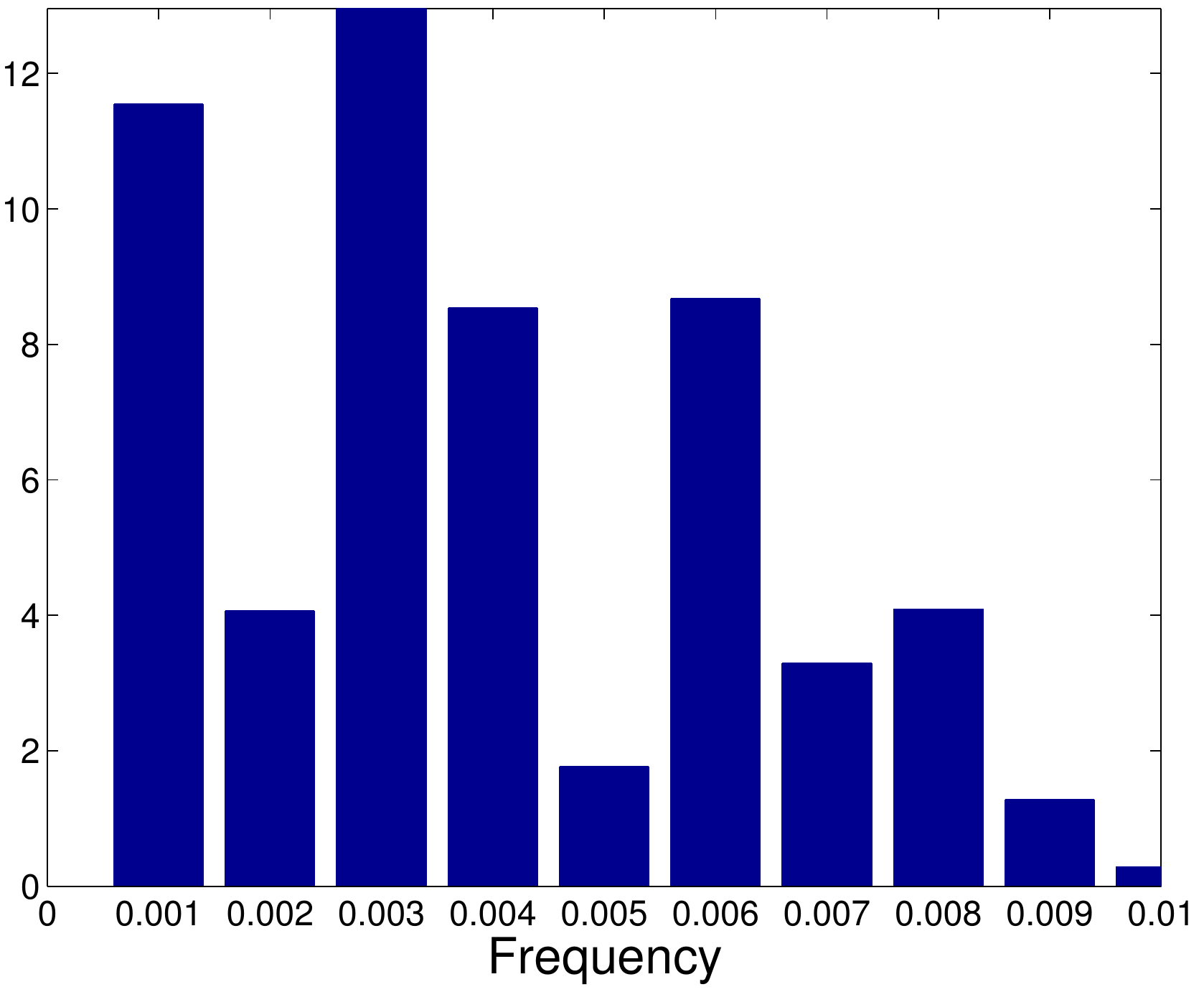}}
    \caption{The results of the application of \ac{AD} to the toy problem. (a) The Fourier transform of the eigenvector associated with the nontrivial largest eigenvalue obtained from the diffusion maps application to the reference data set $\mathcal{Y}$. We observe that indeed the Fourier transform of the eigenvector approximates the function $\delta(\omega - \omega_0)$ where $\omega _0 = 1/1000$. (b) The Fourier transform of the eigenvector associated with the largest nontrivial eigenvalue obtained from the diffusion maps application to the data set $\mathcal{X}$. We observe that indeed the eigenvector is deformed and does not only consists of a single frequency $\omega _0 = 1/1000$. (c) The Fourier transform of the eigenvector associated with the largest nontrivial eigenvalue obtained from \ac{AD} application starting from the reference set. We observe that indeed, as indicated by Theorem \ref{Theorem:MainTheorem2}, the Fourier transform of the eigenvector approximates the function $\delta(\omega - \omega_0)$ where $\omega _0 = 1/1000$. (d) The Fourier transform of the eigenvector associated with the largest nontrivial eigenvalue obtained from \ac{AD} application starting from the data set. In contrast to (c), and as indicated by Theorem \ref{Theorem:MainTheorem2}, here the eigenvector is deformed and does not solely consists of $\omega _0 = 1/1000$.}
    \label{fig:toy_seasonality}
\end{figure}

Our approach uses the following rationale: if the data at hand admit a pure $1$-dimensional seasonal pattern, then, according to Corollary \ref{Theorem:MainCorollary}, we can design a reference manifold to detect its seasonality. We consider {\em a generated reference sample set} from the {\em reference} canonical $\MM:=S^1$ associated with the frequency $\omega>0$, denoted by $\mathcal{Y}$, which is the lag map of the time series $\cos(2\pi\omega t)$. This reference data set from $S^1$ represents a pure seasonal pattern. 
To be more precise, the mapping associated with the first sensor $s^{(1)}:\MM\to\RR^p$ is determined by the lag map, where $p\in\NN$ is the chosen lag step and $\mathcal{N}_1=\emptyset$; the mapping associated with the second sensor $s^{(2)}:\MM\times \mathcal{N}_2\to\RR^p$ is the composition of the procedure generating the recorded time series from $\MM$ and the lag map with the lag step $p$, where $\mathcal{N}_2$ represents a space of possible interferences and nuisance variables introduced during the data acquisition procedure and the lag embedding process. Here, the given sample set is in $\mathcal{X}\subset \RR^p$ and the goal is to recover $\MM$, namely, the underlying seasonal pattern, using \ac{AD} from $\mathcal{X}$, the space of the data, and $\mathcal{Y}$, the space of the reference samples. 
If the given sample set $\mathcal{X}$ embodies a seasonal pattern with the base frequency $\omega$, then it assumed to lie in a space consisting of an image of $S^1$ (representing the underlying seasonal pattern contaminated by the interference/nuisance variables). As a result, by applying \ac{AD} to obtain a parameterization of the common manifold, due to the pure seasonal pattern in the generated reference sample set, the common manifold extracted by the application of \ac{AD} to the given data set $\mathcal{X}$ and the reference set $\mathcal{Y}$ is $S^1$, i.e., the desired seasonal pattern underlying the data.
Since the given data set does not necessarily exhibit a seasonal pattern with the base frequency $\omega$ used to generate the reference set, the difference between the resulting quantities associated with the \ac{AD} over the common manifold model and the diffusion maps over the reference manifold $S^1$ can be used to define a {\em seasonality index}. 
Denote the top nontrivial eigenvector of \ac{AD} over the common manifold model as $\psi$, and the top nontrivial eigenvector of DM over the reference manifold associated with the frequency $\omega>0$ as $\psi_{\text{r}}$. Then, define the seasonality index for the inherent periodicity as
\begin{align}
\text{SI}(\omega)=\|\hat{\psi}-\hat{\psi}_{\text{r}}\|^2,
\end{align}
where $\hat{\psi}$ means the Fourier transform of $\psi$. Note that this approach is feasible since the dataset is well-ordered in time. This choice of difference is simple and efficient, yet it is shown empirically to provide sufficiently accurate results in the tested applications.
Our proposed algorithm for computing the seasonality index for the detection of $1$-dimensional seasonal patterns based on \ac{AD} is summarized in Algorithm \ref{algo:seasonality}.

\begin{algorithm}[t]
\caption{Seasonality Index Computation Based on Alternating Diffusion}

\noindent\textbf{Input:} a data set $\{ x_i \}_{i=1}^n$ in $\mathbb{R}^d$ and a tested seasonality frequency $\omega$.

\textbf{Output:} a seasonality index $\textrm{SI}(\omega)$ and a \emph{baseline} seasonality index $\textrm{SI}_{\textrm{bl}} (\omega)$.

\begin{enumerate}

\item
Build a Gaussian kernel $\widetilde{\mathbf{P}} \in \mathbb{R}^{n \times n}$ based on the given data $\{ x_i \}_{i=1}^n$.

\item
Optional, for illustrative purposes: \\ \noindent
Apply diffusion maps to $\widetilde{\mathbf{P}}$ and obtain the leading eigenvector $\psi _{bl} \in \mathbb{R}^n$.

\item
Create a synthetic reference manifold:

\begin{itemize}

\item
Simulate reference samples $r_i = \cos ( 2 \pi \omega i ), i = 1, \ldots, n$.

\item
Create a lag map of samples $y_i \in \mathbb{R}^l$ with arbitrary lag $l$ from $r_i$.

\item
Build a Gaussian kernel $\widetilde{\mathbf{P}}_r \in \mathbb{R}^{n \times n}$ from $\{ y_i \}_{i=1}^n$.

\item
Apply diffusion maps to $\widetilde{\mathbf{P}}_r$ and obtain the leading eigenvector $\psi _r \in \mathbb{R}^n$.
\end{itemize}

\item
Apply \ac{AD} to $\widetilde{\mathbf{P}}_r$ and $\widetilde{\mathbf{P}}$ and obtain the leading eigenvector $\psi \in \mathbb{R}^n$.

\item
Compute the seasonality index:
\begin{equation}\label{eq:si}
	\textrm{SI} (\omega) = \| \hat{\psi} - \hat{\psi}_r \|^2
\end{equation}
where $\hat{v}$ denotes the discrete Fourier transform of $v$.

\item
Optional, for illustrative purposes: \\ \noindent
Compute the {\em baseline} seasonality index:
\begin{equation}\label{eq:sanity_si}
	\textrm{SI}_{\textrm{bl}} (\omega) = \| \hat{\psi}_{bl} - \hat{\psi}_r \|^2
\end{equation}

\end{enumerate}
\label{algo:seasonality}
\end{algorithm}

\begin{figure}[t]
    \subfigure[]{
        \includegraphics[width=130mm,angle=0]{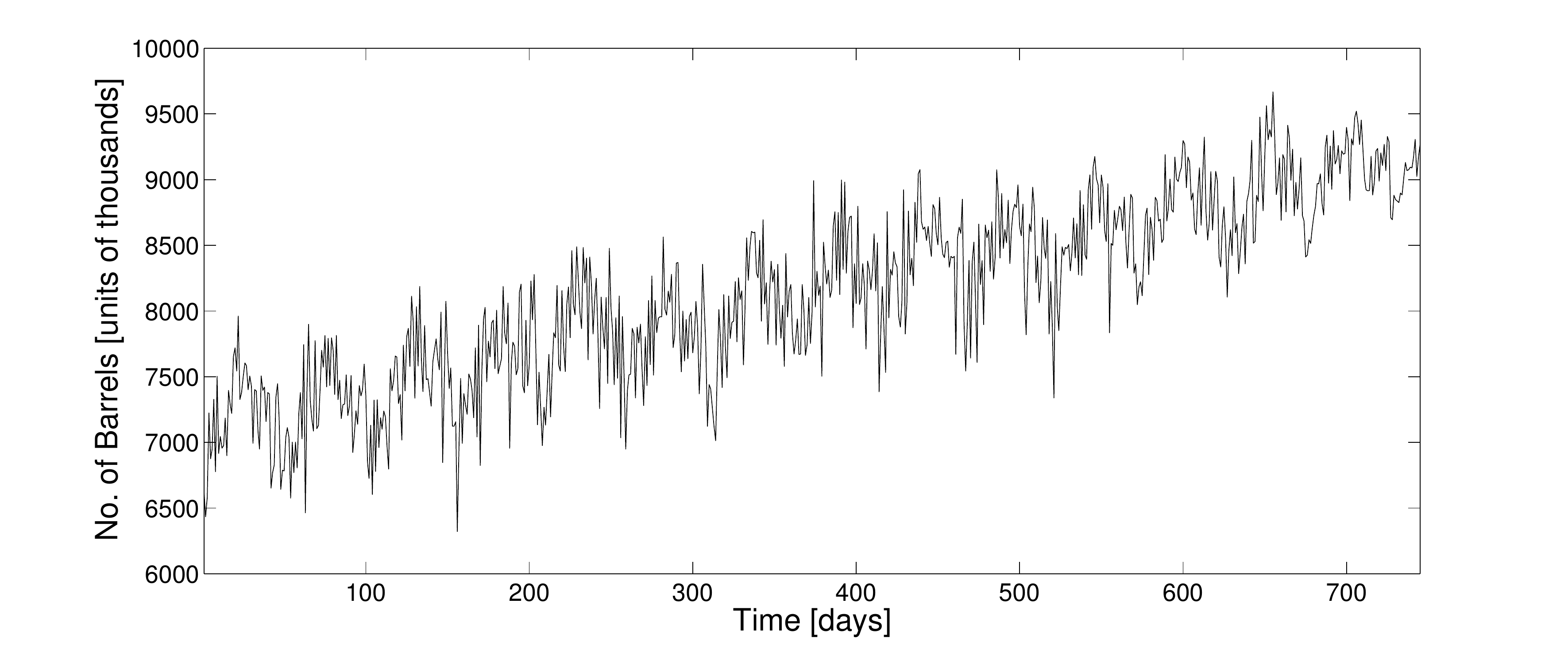}} \vfill
    \subfigure[]{
        \includegraphics[width=38mm,angle=0]{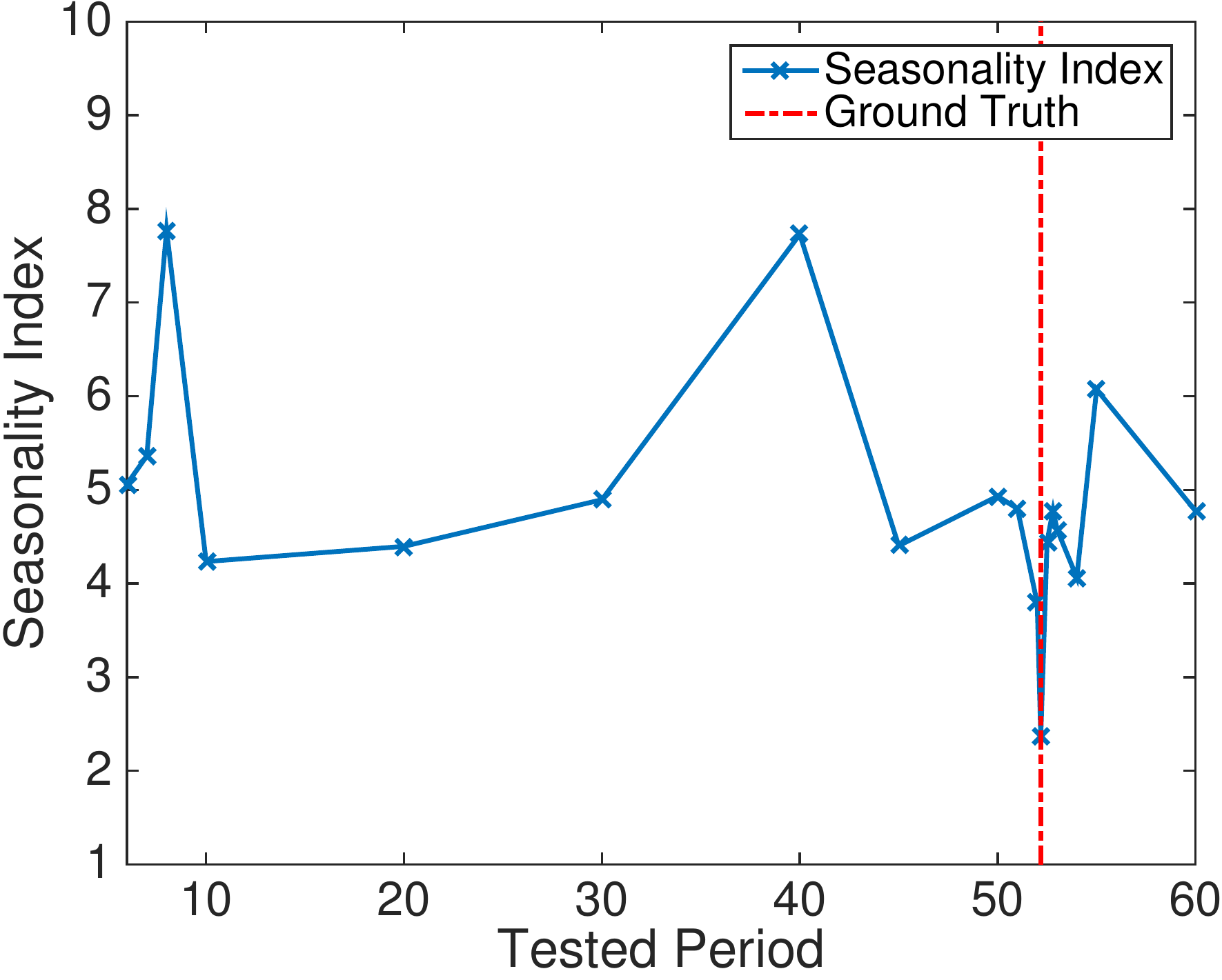}}
    \subfigure[]{
        \includegraphics[width=38mm,angle=0]{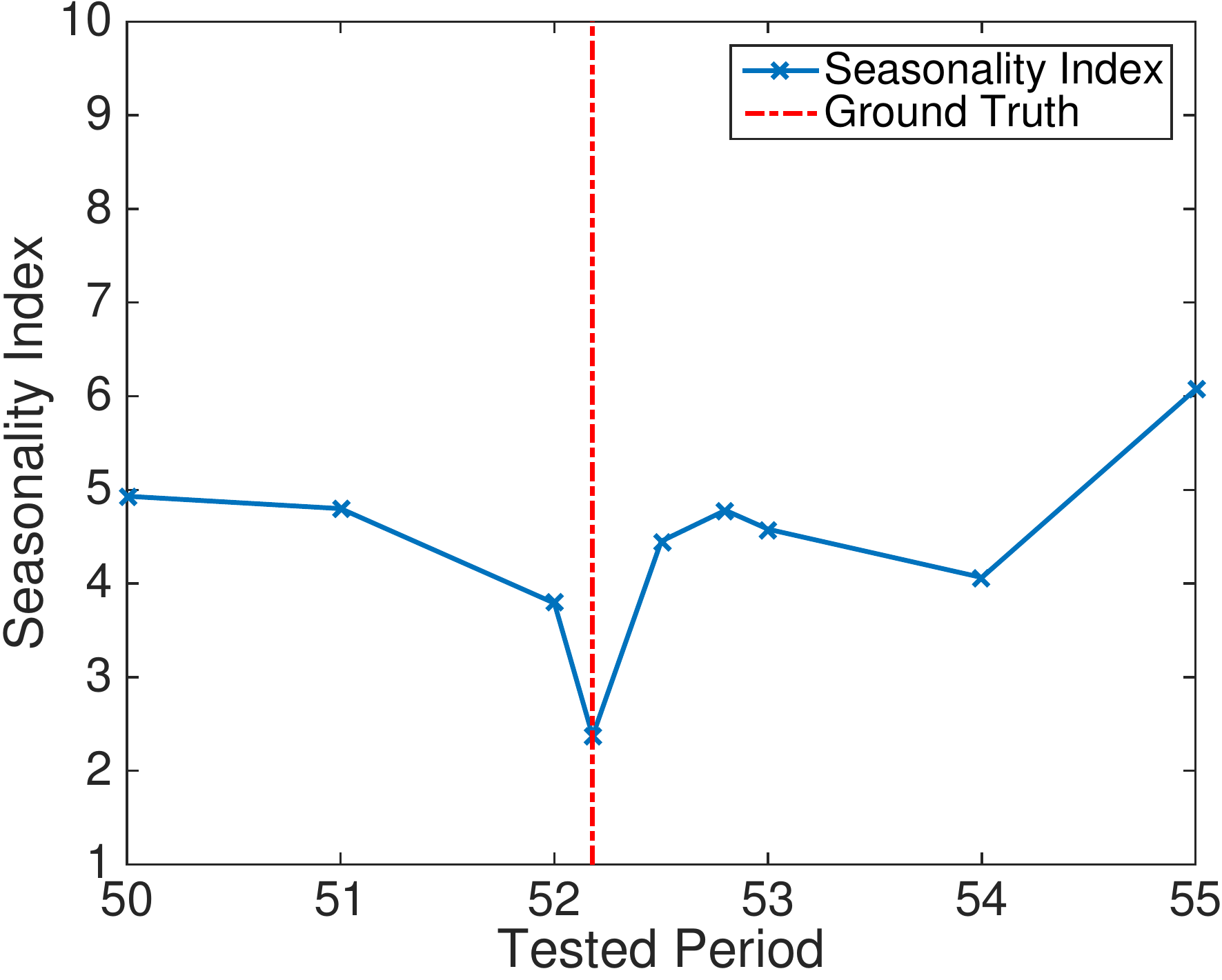}}
    \subfigure[]{
        \includegraphics[width=38mm,angle=0]{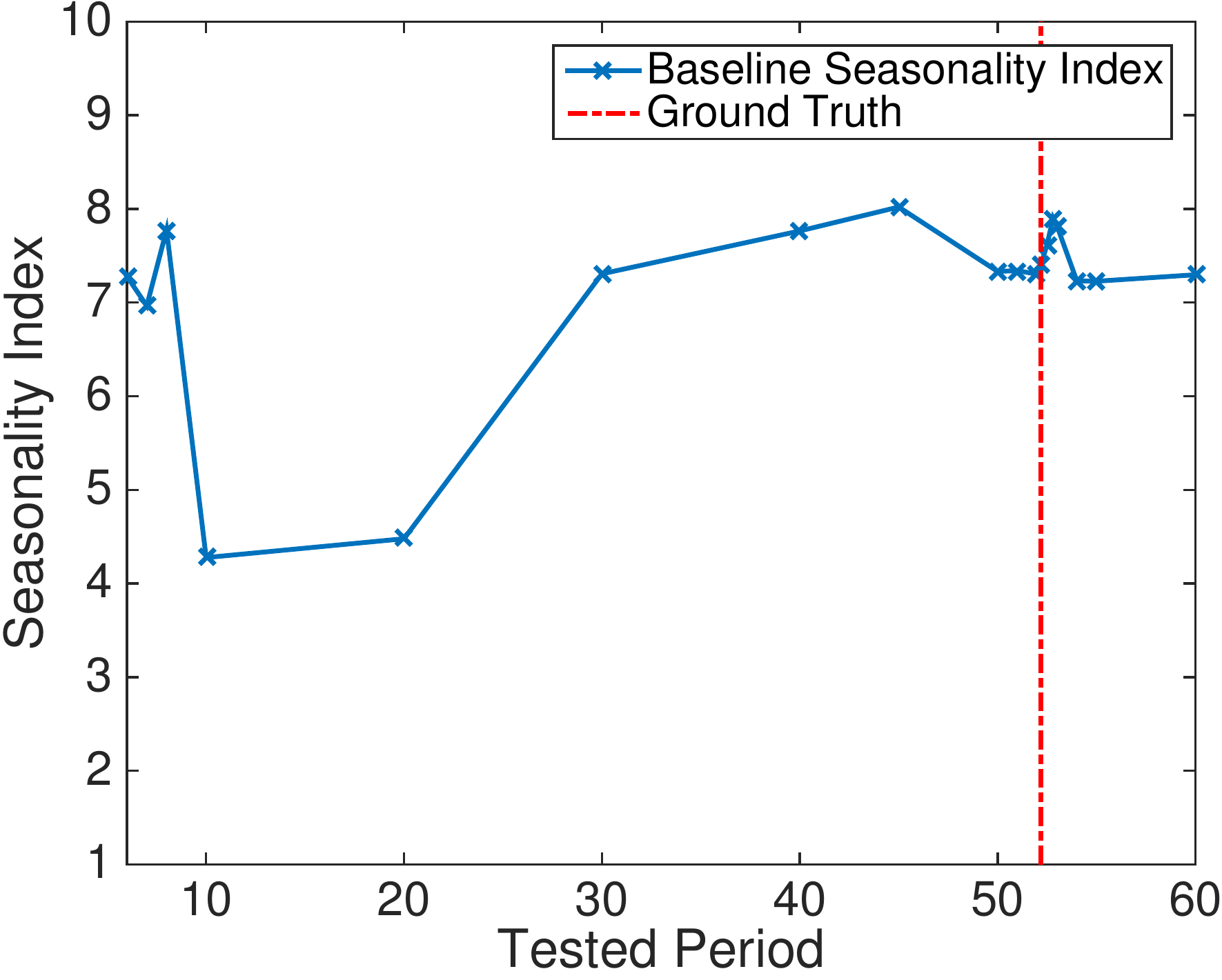}}
    \caption{The results of the application of Algorithm \ref{algo:seasonality} to the time series of weekly U.S. finished motor gasoline products supplied from February $1991$ to July $2005$ in units of thousands of barrels per day. (a) The time series of number of barrels. (b) The seasonality index as a function of the tested frequency $\omega$. (c) A zoom into the range of interest around the true frequency ($\omega_0=52.179$). (d) The baseline seasonality index as a function of the tested frequency.}
    \label{fig:gasoline}
\end{figure}

To exemplify the algorithm, as well as the theoretical results, we first test our algorithm on a toy problem.
Consider a data set of $1000$ samples from a 2-dim time series given by
\[
	x_n = (\cos(2 \pi (n \Delta t + \sin(2 \pi n \Delta t))), \sin(2 \pi (n \Delta t + \sin(2 \pi n \Delta t))))
\]
where $\Delta t = 1$. Clearly, $\{ x_n \}_{n=1}^{1000}$ can be viewed as a set of samples from $S^1$ with nonuniform sampling. Alternatively, we may view the set as a pure 1-dim seasonal pattern with base frequency $\omega_0 = 1/1000$, distorted by a periodic function.

According to Algorithm \ref{algo:seasonality}, we create the following $1000$ reference samples 
\[
	r_n = (\cos ( 2 \pi \omega_0 n ), \sin ( 2 \pi \omega_0 n ))
\]
which can be viewed as uniform samples from $S^1$.

By Theorem \ref{Theorem:MainTheorem2}, where the common manifold is $\mathcal{M} = S^1$, the metrics are equal ($\iota ^{(1)} = \iota ^{(2)}$), and the sampling density on the reference manifold is uniform, the effective \ac{AD} operator starting from the reference set $\mathcal{Y}$ is given by
\[
Tf(x)=\,f(x)+\epsilon \frac{\mu^{(1)}_{2,0}}{2}\Delta^{(1)} f(x)+O(\epsilon^{3/2})
\]
which is the Laplace-Beltrami operator. As a result, the eigenvectors of the \ac{AD} operator approximate the eigenfunctions of the Laplace-Beltrami operator, and in particular in 1-dim, the eigenvector associated with the largest eigenvalue of the \ac{AD} operator (or the smallest eigenvalue of the Laplace-Beltrami operator) is $\psi(n) \approx \sin (\omega _0 n)$, and hence, $\hat{\psi} (\omega) \approx \delta (\omega - \omega _0)$, where $\delta$ is the Delta function.

In contrast, the effective \ac{AD} operator, starting from the data set $\mathcal{X}$, where the nonuniform density is given by $p(x) = 1 / (2 \pi (1 +  2 \pi \cos (2 \pi x))$ is given according to Theorem \ref{Theorem:MainTheorem2} by
\[
Tf(x)=\,f(x)+\epsilon \frac{\mu^{(2)}_{2,0}}{2}\Big[\Delta^{(2)} f(x)+\frac{\nabla^{(2)} p(x)\cdot \nabla^{(2)} f(x)}{p(x)}\Big]+O(\epsilon^{3/2})
\]
which is a deformed Laplace-Beltrami operator. As a result, the eigenvector associated with the largest eigenvalue of the \ac{AD} operator is no longer a sinusoid with a single base frequency $\omega _0$.

Figure \ref{fig:toy_seasonality} presents the experimental results of such two \ac{AD} applications, exemplifying our theoretical results, and in particular, the effect of the order of the single-view operators composing the \ac{AD} operator. In addition, it shows the capability of Algorithm \ref{algo:seasonality} to accurately extract a seasonal pattern from a data set despite being masked.
Note that as discussed in Section \ref{Section:ADAnalysis}, since the common manifold components observed in the data set and in the reference set are not identical, when comparing the eigenvectors, we need to pull back the resulting eigenvector according to the diffeomorphism. {The Matlab code of the numerical implementation is available here: https://github.com/ronenta2/alternating-diffusion.git.}

Next, we test our algorithm on a time series of weekly U.S. finished motor gasoline products supplied from February
$1991$ to July $2005$ in units of thousands of barrels per day. Such a time series has an annual seasonal pattern with non-integer period $\omega_0 = 365.25/7 \approx 52.179$ \cite{de2011forecasting}, { and it is depicted in Figure \ref{fig:gasoline}~(a)}. 

In Fig. \ref{fig:gasoline}~(b) we plot the seasonality index, computed according to \eqref{eq:si}, as a function of the tested frequency $\omega$, ranging from $5$ to $60$. Figure \ref{fig:gasoline}~(c) zooms into the range of interest around the true base frequency ($\omega_0=52.179$). As we observe, our index indeed attains the minimal value at the true (non-integer) based frequency, and thus, accurately identifies the seasonal pattern underlying this data set.
To highlight the contribution of the \ac{AD} in extracting the seasonality pattern hidden in the data, we compare the proposed seasonality index with a naive baseline index, computed based on the comparison between the leading eigenvectors of the reference manifold and the data obtained by two separate DM applications (without applying \ac{AD}). This baseline index is designed to show that the seasonal pattern is hidden in the given data and cannot be identified simply from the leading eigenvector obtained by, for example, a direct DM application to the given data set. Figure \ref{fig:gasoline}~(d) depicts this naive index, computed according to \eqref{eq:sanity_si}, as a function of the tested frequency. In contrast to our index, this index does not capture the underlying annual seasonal pattern.

In Figs. \ref{fig:gasoline}~(b)-(c) we observe that the correct seasonality pattern is distinctly identified. In contrast, Fig. \ref{fig:gasoline}~(d) shows that the application of \ac{AD} is critical; it implies that the seasonality pattern is not the only information hidden in the data and cannot be simply extracted without the ``filtering" of possible nuisance variables and acquisition procedure deformations attained by the \ac{AD} procedure.

\section{Application in sleep research}\label{Section:Sleep}

\begin{figure}[t]
    \centering
    \subfigure[]{
        \includegraphics[width=120mm,angle=0]{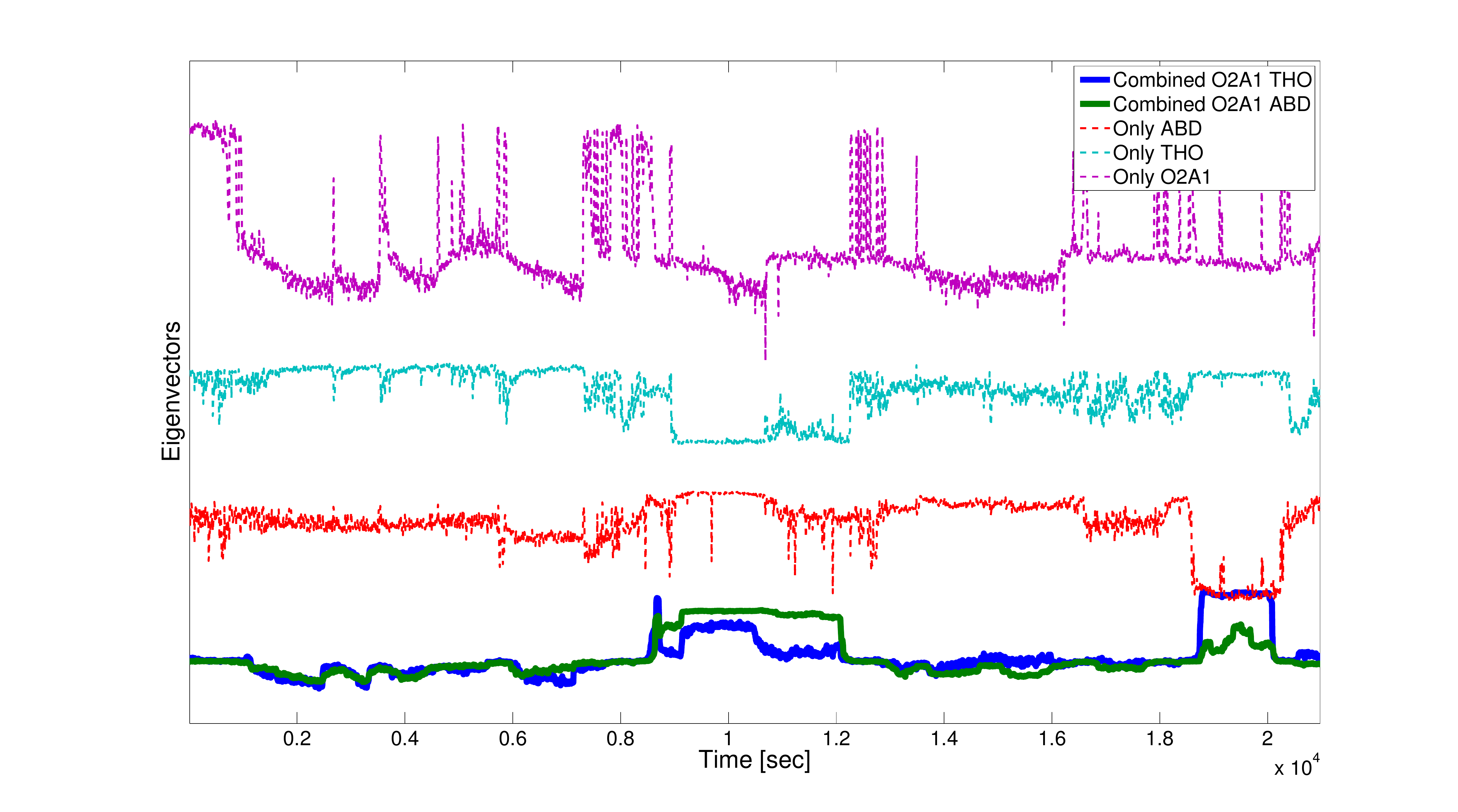}}
    \subfigure[]{
        \includegraphics[width=120mm,angle=0]{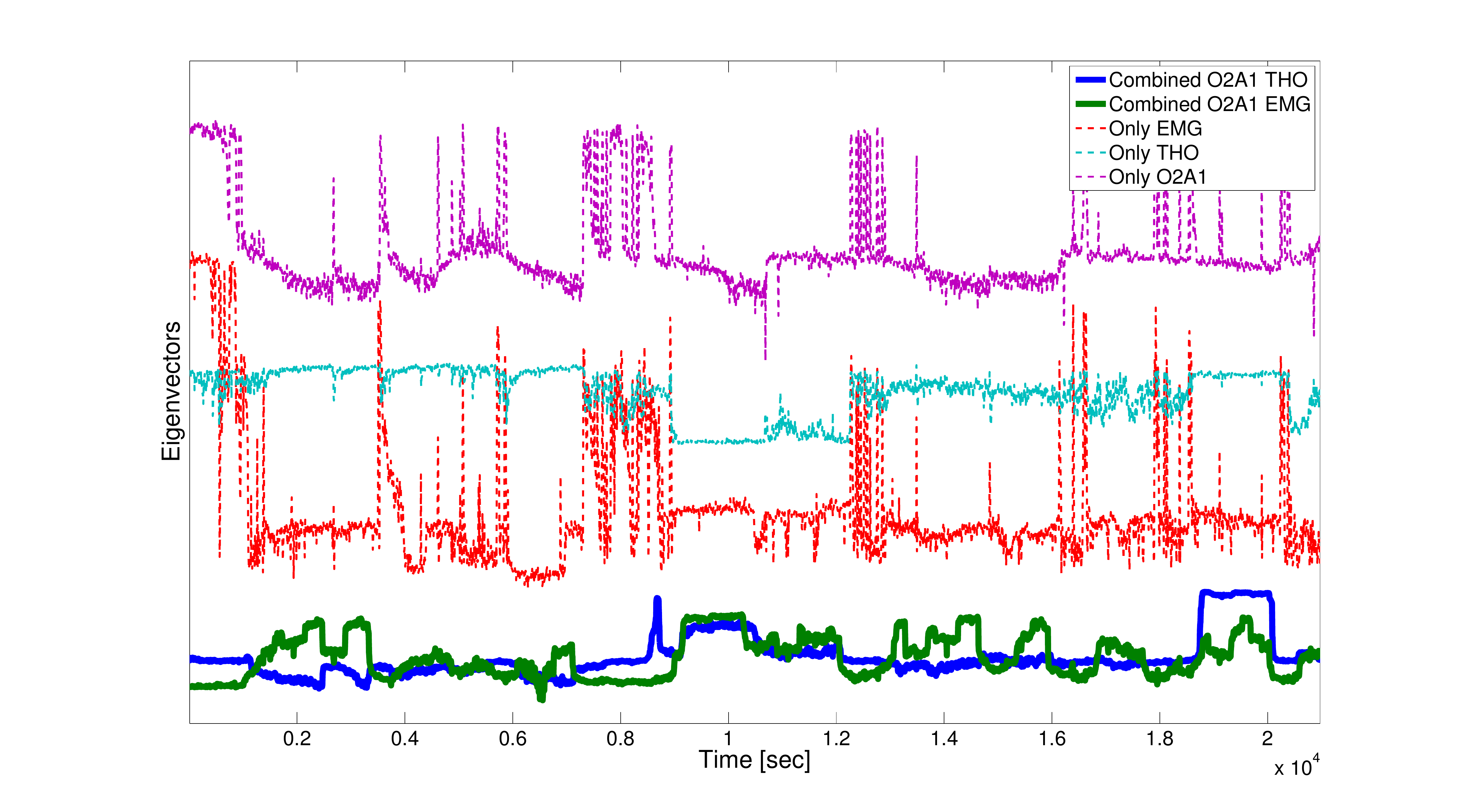}}
    \caption{The leading nontrivial eigenvectors obtained from \ac{AD} applied to EEG and respiratory signals as a function of time. (a) The dashed lines represent the leading nontrivial eigenvectors obtained by diffusion maps applied separately to an EEG recording from a single electrode (O2A1), a signal acquired by a motion belt located on the chest (THO), and a signal acquired by a motion belt located on the abdomen (ABD). The solid blue line is the leading nontrivial eigenvector resulting from \ac{AD} applied to the EEG recording and the signal from the motion belt on the chest. The solid green line is the leading nontrivial eigenvector resulting from \ac{AD} applied to the EEG recording and the signal from the motion belt on the abdomen.
    (b) The dashed lines represent the leading nontrivial eigenvectors obtained by diffusion maps applied separately to an EEG recording from a single electrode (O2A1), a signal acquired by a motion belt located on the chest (THO), and an EMG recording measuring muscle movements. The solid blue line is the leading nontrivial eigenvector resulting from \ac{AD} applied to the EEG recording and the signal from the motion belt on the chest. The solid green line is the leading nontrivial eigenvector resulting from \ac{AD} applied to the EEG recording and the EMG recording.}
    \label{fig:sleep}
\end{figure}

\begin{figure}[th]
    \centering
    \subfigure[]{
        \includegraphics[width=120mm,angle=0]{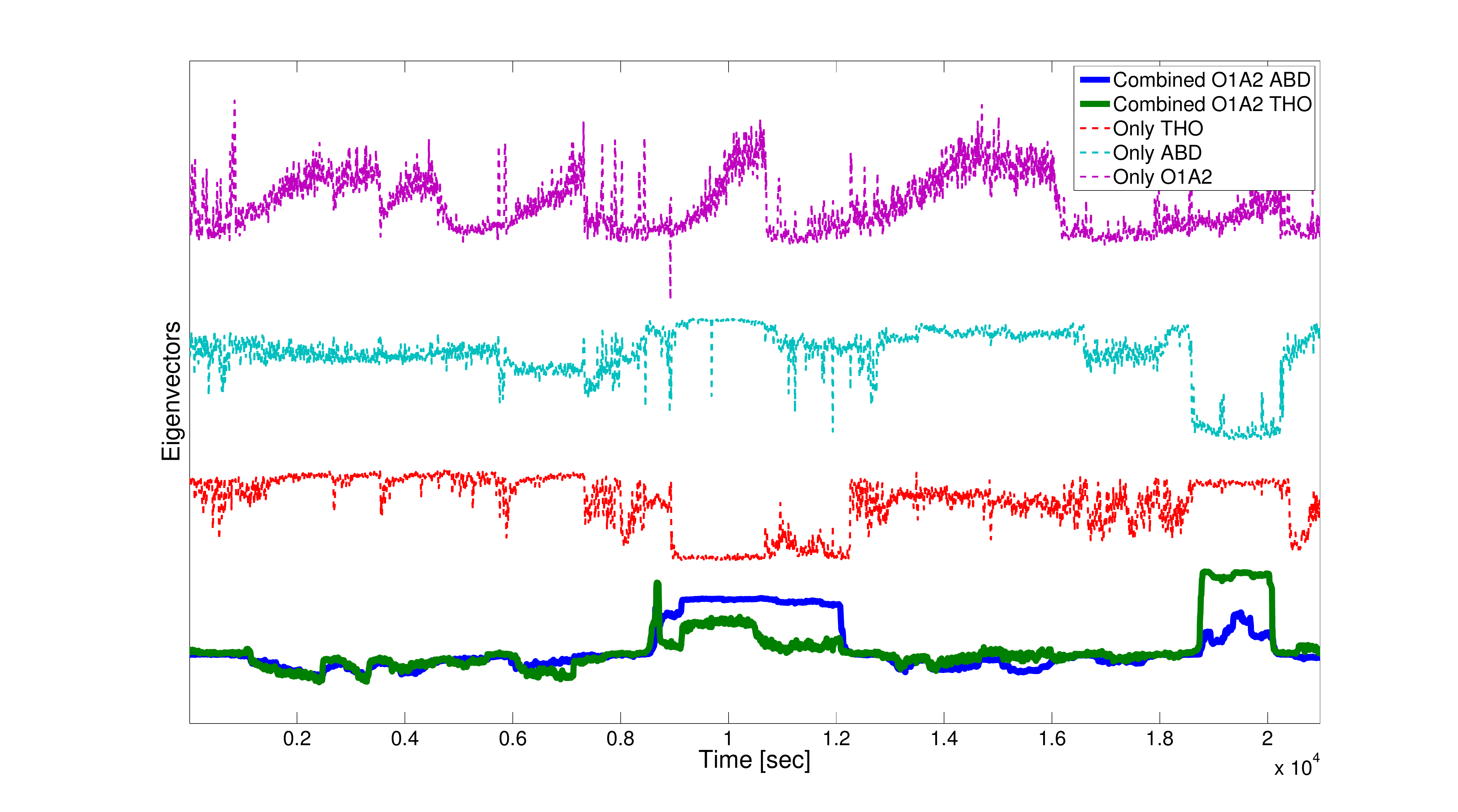}}
    \subfigure[]{
        \includegraphics[width=120mm,angle=0]{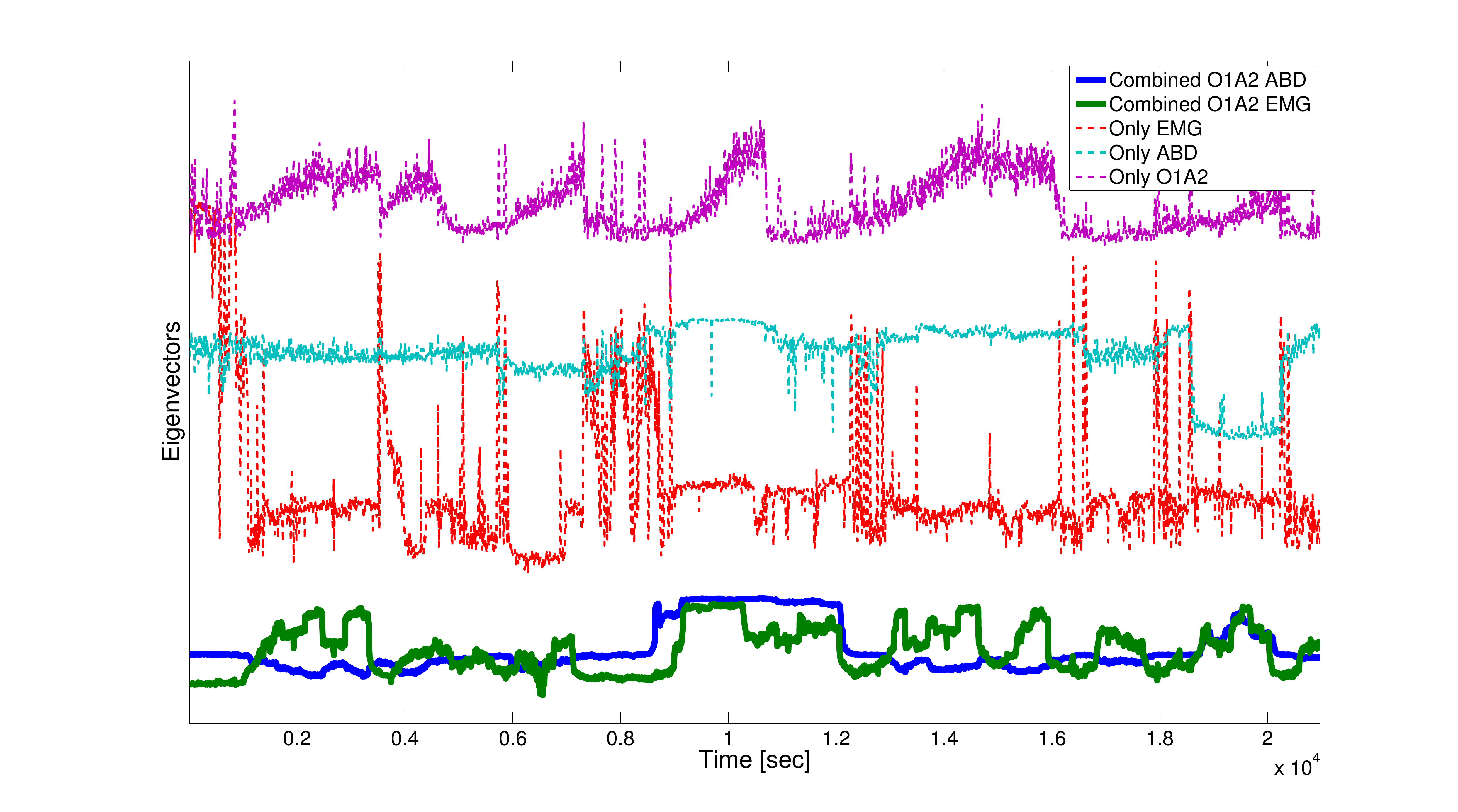}}
    \caption{The leading nontrivial eigenvectors obtained from \ac{AD} applied to EEG and respiratory signals as a function of time. Same as in Figure~\ref{fig:sleep} only with a different EEG channel (O1A2).}
    \label{fig:sleep_b}
\end{figure}

The extraction of the common manifold underlying two (or more) data sets from different sensors can be viewed as, and used for, nonlinear manifold filtering. 
To present the main idea and demonstrate its potential, we apply our technique to sleep data.
Sleep is global and systematic physiological activity, which manifests complicated temporal dynamics \cite{Lee-Chiong:2008}. Since the sound dynamics associated with sleep is associated with versatile diseases, a full understanding of the sleep dynamics is critical. For this purpose, different signals from different sensors are typically recorded, aiming to measure different physiological aspects. For example, in many specialized sleep labs, various signals such as EEG, ECG, respiration or EMG are recorded.

Recently, we studied the problem of automatic sleep stage identification from a geometric analysis/manifold learning perspective.
In \cite{wu2015assess}, we considered a model in which there exists a hidden process restricted to a low-dimensional Riemannian manifold governing these measured signals, whereas we only have access to the recorded time series/signals, which represent physiological processes deformed by the observation procedures. This model, which is referred to as the empirical intrinsic geometry (EIG), is motivated by the assumption that each measurement (e.g., an EEG channel measuring brain activity or a chest belt measuring respiration) can be affected by numerous factors related to the equipment (e.g., the specific type of sensors and their exact positions) or to noise, but we have interest in the true intrinsic variable associated with the sleep cycle.

In \cite{lederman2015icassp}, we extended this model to multiple sensors and showed that through \ac{AD}, 
we can better capture information on the sleep dynamics from multimodal respiratory signals, compared with the analysis based only on a single respiratory signal.
More specifically, we applied \ac{AD} to extract the common source of variability in abdominal motions and in airflow.
We showed that the common variable of these two respiratory signals (measured by different types of instruments) recovers a true physiological hidden process that is well correlated with the sleep stage. The underlying assumption is that physiologically there is a dominant controller of the respiratory process, which cannot be explicitly modeled or accessed in practice. Yet, this controller can be observed by different sensors, for example, from the chest belt movement or the air flow signal. On the one hand, different observations capture different, complementing aspects of the information about this controller. On the other hand, this information is deformed in different manners depending on each specific sensor.

Here, we further demonstrate the usefulness of \ac{AD}, now from a (nonlinear manifold) filtering perspective. Moreover, we attempt to devise a proof-of-concept example and to associate the common manifold and the output of \ac{AD} with physiology.
We apply \ac{AD} (separately) twice to two pairs of channels.
In the first application, we use an EEG channel (O2A1) and a belt sensor located on the chest measuring respiration movements (THO). In the second application, we use the same EEG channel (O2A1) and a different belt sensor located on the abdomen (ABD), designed to measure respiration movements as well. See \cite[Section III.A]{wu2015assess} for more details about the experimental setup and measurements.

The leading eigenvectors resulting from the two \ac{AD} applications are presented in Figure~\ref{fig:sleep}(a).
It is visually evident that the two leading eigenvectors resulting from the two applications follow the same patterns, 
%
a fact that reflects the validity of the assumption that there is a common controller guiding the entire system. In other words, while the common controller information is masked by each modality (measurement/sensor), the common controller is extracted by \ac{AD}.
%
%
Importantly, note that in both applications, we can view the \ac{AD} procedure as a nonlinear filtering of the manifold underlying the EEG channel from nuisance factors by using the respective manifolds underlying the two respiratory signals. 
%

To further examine this model, we repeat the procedure and apply \ac{AD} to a third pair of measurements. The new pair consists of the same EEG channel (O2A1) and an EMG channel measuring muscle movements.
Figure~\ref{fig:sleep}(b) presents the leading eigenvectors of \ac{AD} applied to the first pair (O2A1 and THO) and the third pair (O2A1 and EMG).
In contrast to Figure~\ref{fig:sleep}(a), we now observe that the two leading eigenvectors resulting from the two applications of \ac{AD} share less similar patterns. For comparison purposes, we report the $\ell _2$ distance between the obtained eigenvectors: the distance between the two eigenvectors in Figure~\ref{fig:sleep}(a) is $0.8$ and the distance between the two eigenvectors in Figure~\ref{fig:sleep}(b) $1.2$.
According to our manifold filtering interpretation, this result indicates that the manifold filtering of the EEG channel via the respiratory signal and the manifold filtering via the EMG signal are different. This finding can be physiologically explained -- the controller common to the EEG signal and the EMG signal is different from the controller common to the EEG signal and the respiratory signal, since these sensors are monitoring different parts of our physiological system.
In summary, the results indicate that the respiratory signal (THO) and the EMG signal lie on different manifolds, which contain different physiological information with respect to the EEG signal, or more generally, with respect to the brain activity. 

To validate the consistency of the results and the above statement, in Figure~\ref{fig:sleep_b} we repeat the experiment and apply the filtering to a different EEG channel (O1A2), which is different from O2A1, and as expected we observe similar results.
In Figure~\ref{fig:sleep_b} as well, the two applications of the manifold filtering of the EEG channel through the two respiratory signals yield a similar representation, whereas the manifold filtering of the EEG channel through the EMG signal yields a different representation. This again implies the existence of a common manifold hosting the common controller underlying the two different respiratory signals associated with the respiratory system. In addition, this common controller is different from the common controller of the EEG and the EMG signals.
We report that the same procedure was applied to the same signal recordings from $10$ different subjects/cases and the results were consistent.

{
The application to sleep data demonstrates the potential of \ac{AD} in filtering. More results illustrating the advantage of \ac{AD} in obtaining signal representation that is well correlated with the sleep stage are presented in \cite{lederman2015icassp}. Further details, discussions, and the presentation of the full experimental study are beyond the scope of this paper and will appear in a future publication dedicated to this sleep application. 
}

\section{Extending the Common Manifold Representation}
\label{Section:Extension}

One of the key questions in manifold learning is how to extend the manifold representation to new data samples, once the manifold representation has been built from an initial sample set, without applying the entire, usually computationally demanding construction procedure \cite{coifman2006geometric,lafon2006data}. Since many of the manifold learning representations are given by the eigenvectors of a particular kernel, the Nystr\"{o}m extension is typically used to achieve this goal \cite{nystrom1929praktische,fowlkes2004spectral}. 
Consider data on a manifold $\MM$ with probability measure $\nu(x)$, and a pairwise affinity kernel $P$. Assume that the kernel has an \ac{EVD},
\begin{equation}\label{eq:evd}
	\psi_k (x) = \frac{1}{\lambda_k} \int _{\MM} P(x, x') \psi_k (x') d\nu(x'), \ x \in \MM, 
\end{equation}
where $\psi_k$ and $\lambda_k$ are the eigenfunctions and eigenvalues of $K$, respectively. 
We can see from \eqref{eq:evd} that one of the properties of the \ac{EVD} is that the eigenfunction at every point can be written as a (nontrivial) linear combination of the eigenfunction values. Nystr\"{o}m extension exploits this property; given a new data sample $\tilde{x}$, either on $\mathcal{M}$ or close to $\mathcal{M}$, each eigenfunction at that point is therefore approximated by
\begin{equation}\label{eq:nystrom}
	\psi _k (\tilde{x}) = \frac{1}{\lambda_k} \int _{\MM} P(\tilde{x}, x') \psi _k (x') d\nu (x').
\end{equation}
While the Nystr\"{o}m extension is usually applied in order to reduce the number of computations stemming from the \ac{EVD}, in the context of \ac{AD} it gives rise to an important benefit in addition to being more efficient computationally.
In the context of \ac{AD}, we are given two data sets on two observable manifolds, which are functions of realizations of triplets of hidden variables $(x,y,x)$ from $\mathcal{M} \times \mathcal{N}_1 \times \mathcal{N}_2$ with joint measure $\nu(x,y,z)$. Recall that we denote by $(x,y,z), (x',y',z'), (x'', y'', z'')$ three realizations of the hidden variables, by $s_1=s^{(1)}(x,y,z), s_1'=s^{(1)}(x',y',z'), s_1''=s^{(1)}(x'',y'',z'')$ their corresponding samples in the observable space $\mathcal{S}^{(1)}$, and by $s_2=s^{(2)}(x,y,z), s_2'=s^{(2)}(x',y',z'), s_2''=s^{(2)}(x'',y'',z'')$ their corresponding samples in the the observable space $\mathcal{S}^{(2)}$.
By \eqref{Definition:Pepsilon}, the \ac{AD} kernel is written as a composition of two kernels
\begin{align}\label{eq:AD_kernel2}
&P((x,y,z),(x'',y'',z''))\\
=&\,\int_{\MM\times\mathcal{N}_1\times\mathcal{N}_2}P^{(2)}((x,y,z),(x',y',z'))P^{(1)}((x',y',z'),(x'',y'',z'')) d\nu(x',y',z'),\nonumber
\end{align}
where, by \eqref{notation:Kh_def}, $P^{(1)}$ is computed based on the observations $s_1'$ and $s_1''$, and $P^{(2)}$ is computed based on the observations $s_2$ and $s_2'$.

Substituting \eqref{eq:AD_kernel2} into the Nystr\"{o}m extension \eqref{eq:nystrom} yields
\begin{align}\label{eq:alt_diff_nystrom}
	\psi _k (\tilde{x},\tilde{y},\tilde{z}) = \frac{1}{\lambda_k} \int \int_{\MM\times\mathcal{N}_1\times\mathcal{N}_2}&P^{(2)}((\tilde{x},\tilde{y},\tilde{z}),(x',y',z'))P^{(1)}((x',y',z'),(x'',y'',z'')) \\
	& \times d\nu(x',y',z') \psi _k (x'', y'', z'') d\nu (x'', y'', z'') \nonumber
\end{align}
Thus, in order to get an approximation of the eigenfunction of the \ac{AD} kernel at a new sample $(\tilde{x},\tilde{y},\tilde{z})$, we only need to compute the kernel values $P^{(2)} ((\tilde{x},\tilde{y},\tilde{z}),(x',y',z'))$, i.e., the kernel between the new sample $(\tilde{x},\tilde{y},\tilde{z})$ in the observable space $\mathcal{S}^{(2)}$ and all the existing samples $(x',y',z')$ in $\mathcal{S}^{(2)}$.
In other words, the extension only requires a new sample from {\em only one of the sensors}, which is an important benefit in many applications, in particular, when one sensor is more difficult to obtain than the other.

\begin{figure}[t]
    \centering
    \subfigure[]{
        \includegraphics[width=60mm,angle=0]{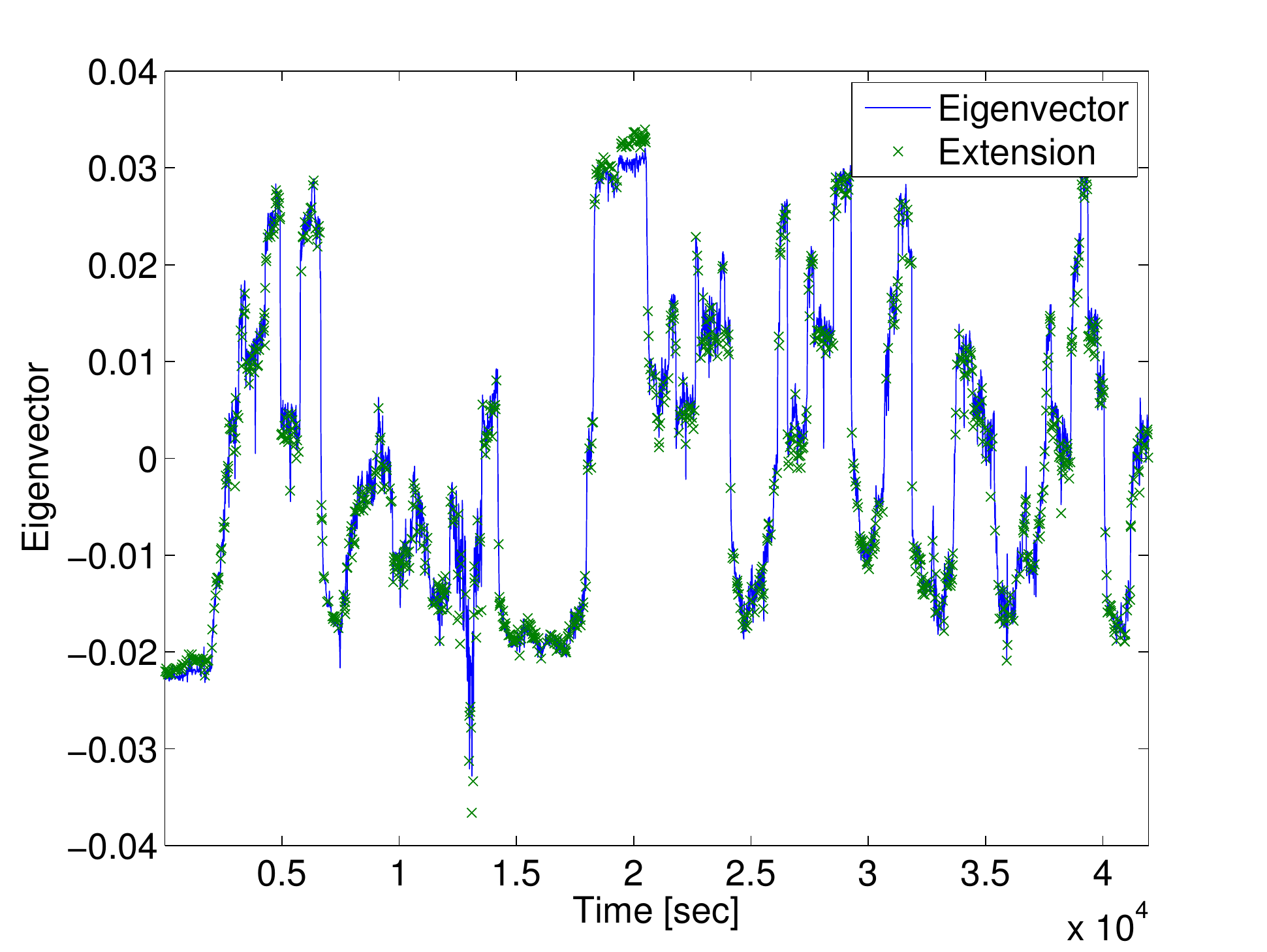}}
    \subfigure[]{
        \includegraphics[width=60mm,angle=0]{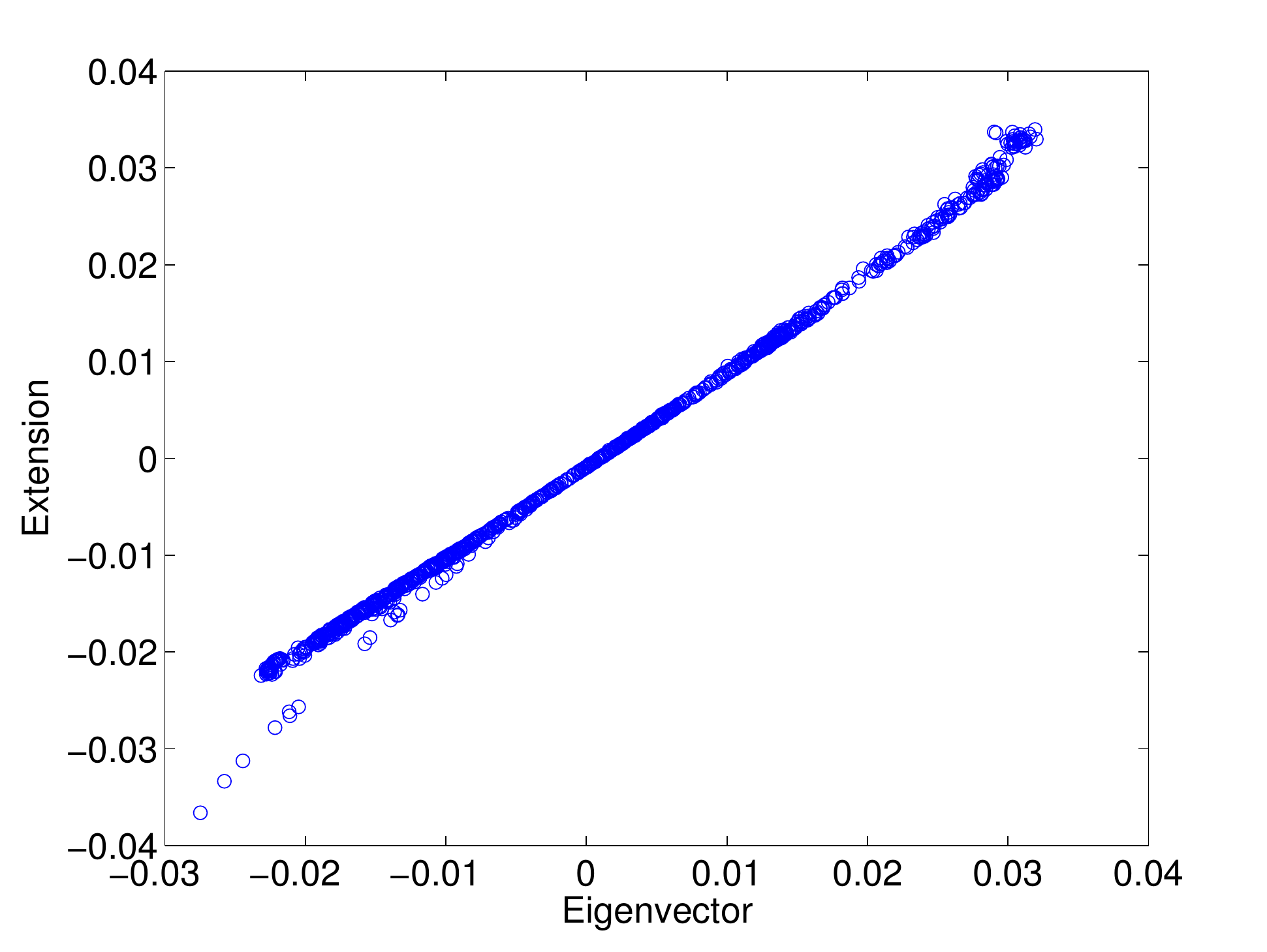}}
    \caption{The extension of the leading nontrivial eigenvector of alternating diffusion applied to an EEG recording (O2A1) and the EMG recording. In (a) we plot the true eigenvector and its extension as a function of time, and in (b) we plot the scatter plot of the true eigenvector samples vs. the extended eigenvector samples.}
    \label{fig:ext}
\end{figure}
We demonstrate the extension on the sleep application presented in Section \ref{Section:Sleep}.
We randomly take only $75\%$ of the samples from an EEG recording (O2A1) and an EMG recording. In the initialization stage, we construct the common manifold representation based on these samples. Then, in an extension stage, we recover the common manifold representation for the remaining $25\%$ from solely the EMG samples. In Figure \ref{fig:ext} we present both the true manifold representation and the extension we obtain. We observe that after the initial stage, which requires both measurements, we can accurately recover the common manifold representation from only one of the measurements; that is, we do not have to evaluate the extension for two kernels, but rather just for one. Specifically, in the sleep application, this result may have practical implications since EEG is considered the most informative recording, yet it is also difficult to collect and usually requires the help of a specialized technician. In addition, the filtering of the EEG recording via the EMG recording enables us to clean the EEG signal from artifacts and noise and to show high correspondence with the sleep stage. The extension result presented here implies that after a ``calibration" stage, in which both recordings are required, we can get access to the filtering result only based on the EMG recording, which is easier to collect.

\section{Conclusions}\label{Section:Conclusion}

In this paper, we introduced a common manifold model underlying sensor observations for the purpose of fusion multimodal sensor data. To study this problem, we proposed a method based on alternating diffusion and provided theoretical analysis under the common manifold model. Compared with traditional methods, this method is able to capture the nonlinear manifold structure in the sense that both the topological and geometrical structures are simultaneously preserved. Several applications were provided, demonstrating the power of the proposed method and the analytic tools.

One important aspect of the current framework is its extension to more than two sensors. Note that a straightforward extension of the proposed method to fuse data from multiple sensors discovers the common manifold underlying all the sensors. Despite being efficient and appropriate for certain purposes, in some sensor fusion applications it might erase substantial information. 
In addition, the notion of nonlinear manifold filtering, which is only introduced here, along with its filtering capabilities call for further analysis and more substantial mathematical foundations. For example, based on the results presented here, in a future work, we will present a filtering scheme that combines the common manifold representations obtained from all the possible pairs of sensors in a multimodal experiment. This scheme enables us to combine relevant information from multimodal sensors in a pure data-driven manner, while filtering nuisance phenomena measured only in single sensors, and thus, regarded as less important.

Another future directions will address applications and will focus on the following three issues. First, we have only shown a proof-of-concept to the potential association between the common manifold and physiological properties. A large scale study with statistical analysis is needed to confirm the findings. Second, {we may consider a natural generalization by taking the symmetrization into account. Based on the discussion in Section \ref{Section:ManifoldNuisance}, it is clear that the proposed \ac{AD} scheme could provide a more accurate information when the \ac{AD} ends in a sensor with the nuisance that can be well modeled by a manifold. The comparison of different schemes will be further explored. Third,} following the automatic annotation work in \cite{wu2015assess}, we will investigate the correspondence between the obtained representations of the common manifolds, and the sleep stage and we will devise improved automatic stage identification methods.

\section{Acknowledge}

The authors would like to thank Professor Ronald Coifman {and Dr. Roy Lederman} for fruitful discussions, {and Professor Yu-Lun Lo for sharing the sleep dataset}. 
Hau-tieng Wu acknowledges the support of Sloan Research Fellow FR-2015-65363. 
Ronen Talmon was supported by the European Union's Seventh Framework Programme (FP7) under Marie Curie Grant 630657 and by the Israel Science Foundation (grant no. 1490/16).

\bibliographystyle{plain}
\bibliography{ADonCommonManifold_RT}

\begin{thebibliography}{10}

\bibitem{bach2003kernel}
F.~R. Bach and M.~I. Jordan.
\newblock Kernel independent component analysis.
\newblock {\em The Journal of Machine Learning Research}, 3:1--48, 2003.

\bibitem{bach2004multiple}
F.~R. Bach, G.~R.G. Lanckriet, and M.~I. Jordan.
\newblock Multiple kernel learning, conic duality, and the smo algorithm.
\newblock In {\em Proceedings of the twenty-first international conference on
  Machine learning}, page~6. ACM, 2004.

\bibitem{xiang_bai_cotransduction_2012}
X.~Bai, B.~Wang, C.~Yao, W.~Liu, and Z.~Tu.
\newblock Co-transduction for shape retrieval.
\newblock {\em IEEE Transactions on Image Processing}, 21(5):2747--2757, 2012.

\bibitem{Belkin_Niyogi:2003}
M.~Belkin and P.~Niyogi.
\newblock {Laplacian Eigenmaps for Dimensionality Reduction and Data
  Representation}.
\newblock {\em Neural. Comput.}, 15(6):1373--1396, June 2003.

\bibitem{boots_two_manifold_2012}
B.~Boots and G.~J. Gordon.
\newblock Two-manifold problems with applications to nonlinear system
  identification.
\newblock In {\em Proc. 29th Intl. Conf. on Machine Learning ({ICML})}, 2012.

\bibitem{Coifman_Lafon:2006}
R.~R. Coifman and S.~Lafon.
\newblock Diffusion maps.
\newblock {\em Appl. Comput. Harmon. Anal.}, 21(1):5--30, 2006.

\bibitem{coifman2006geometric}
R.~R. Coifman and S.~Lafon.
\newblock Geometric harmonics: a novel tool for multiscale out-of-sample
  extension of empirical functions.
\newblock {\em Applied and Computational Harmonic Analysis}, 21(1):31--52,
  2006.

\bibitem{davenport2010joint}
M.~Davenport, C.~Hegde, M.~F. Duarte, and R.~G. Baraniuk.
\newblock Joint manifolds for data fusion.
\newblock {\em IEEE Transactions on Image Processing}, 19(10):2580--2594, 2010.

\bibitem{de2011forecasting}
A.~M {De Livera}, R.~J Hyndman, and R.~D Snyder.
\newblock Forecasting time series with complex seasonal patterns using
  exponential smoothing.
\newblock {\em Journal of the American Statistical Association},
  106(496):1513--1527, 2011.

\bibitem{de_sa_spectral_2005}
V.~R {de Sa}.
\newblock Spectral clustering with two views.
\newblock In {\em ICML workshop on learning with multiple views}, 2005.

\bibitem{de_sa_multi_view_2010}
V.~R. {de Sa}, P.~W. Gallagher, J.~M. Lewis, and V.~L. Malave.
\newblock Multi-view kernel construction.
\newblock {\em Machine Learning}, 79(1-2):47--71, May 2010.

\bibitem{doCarmo:1992}
M.P. {do Carmo} and F.~Flaherty.
\newblock {\em Riemannian Geometry}.
\newblock Birkhauser Boston, 1992.

\bibitem{Donoho_Gavish_Johnstone:2013}
D.~L. {Donoho}, M.~{Gavish}, and I.~M. {Johnstone}.
\newblock {Optimal Shrinkage of Eigenvalues in the Spiked Covariance Model}.
\newblock {\em ArXiv e-prints}, November 2013.

\bibitem{Donoho2003}
D.~L. Donoho and C.~Grimes.
\newblock Hessian eigenmaps: New locally linear embedding techniques for
  high-dimensional data.
\newblock {\em Proc. Nat. Acad. Sci.}, 100:5591--5596, 2003.

\bibitem{ElKaroui:2010a}
N.~{El Karoui}.
\newblock {On information plus noise kernel random matrices}.
\newblock {\em Ann. Stat.}, 38(5):3191--3216, 2010.

\bibitem{ElKaroui_Wu:2014}
N.~{El Karoui} and H.-T. {Wu}.
\newblock {Connection graph Laplacian methods can be made robust to noise}.
\newblock {\em Ann. Stat.}, 2014.
\newblock Accepted for Publication.

\bibitem{eynard2012multimodal}
D.~Eynard, K.~Glashoff, M.~M. Bronstein, and A.~M. Bronstein.
\newblock Multimodal diffusion geometry by joint diagonalization of laplacians.
\newblock {\em arXiv preprint arXiv:1209.2295}, 2012.

\bibitem{fowlkes2004spectral}
C.~Fowlkes, S.~Belongie, F.~Chung, and J.~Malik.
\newblock Spectral grouping using the {Nystrom} method.
\newblock {\em Pattern Analysis and Machine Intelligence, IEEE Transactions
  on}, 26(2):214--225, 2004.

\bibitem{gould2008forecasting}
P.~G. Gould, A.~B. Koehler, J.~K. Ord, R.~D. Snyder, R.~J. Hyndman, and
  F.~Vahid-Araghi.
\newblock Forecasting time series with multiple seasonal patterns.
\newblock {\em European Journal of Operational Research}, 191(1):207--222,
  2008.

\bibitem{harvey1993forecasting}
A.~Harvey and S.~J. Koopman.
\newblock Forecasting hourly electricity demand using time-varying splines.
\newblock {\em Journal of the American Statistical Association},
  88(424):1228--1236, 1993.

\bibitem{harvey1997modeling}
A.~Harvey, S.~J. Koopman, and M.~Riani.
\newblock The modeling and seasonal adjustment of weekly observations.
\newblock {\em Journal of Business \& Economic Statistics}, 15(3):354--368,
  1997.

\bibitem{hotelling_relations_1936}
H.~Hotelling.
\newblock Relations between two sets of variates.
\newblock {\em Biometrika}, 28(3/4):321, December 1936.

\bibitem{hsinchien_huang_affinity_2012}
H.-C. Huang, Y.-Y. Chuang, and C.-S. Chen.
\newblock Affinity aggregation for spectral clustering.
\newblock In {\em Computer Vision and Pattern Recognition (CVPR), 2012 IEEE
  Conference on}, pages 773--780. IEEE, 2012.

\bibitem{keller_audio_visual_2010}
Y.~Keller, R.~Coifman, S.~Lafon, and S.~W. Zucker.
\newblock Audio-visual group recognition using diffusion maps.
\newblock {\em {IEEE} Transactions on Signal Processing}, 58(1):403--413,
  January 2010.

\bibitem{kumar_cotraining_2011}
A.~Kumar and H.~Daum{\'e}.
\newblock A co-training approach for multi-view spectral clustering.
\newblock In {\em Proceedings of the 28th International Conference on Machine
  Learning (ICML-11)}, pages 393--400, 2011.

\bibitem{lafon2006data}
S.~Lafon, Y.~Keller, and R.~R. Coifman.
\newblock Data fusion and multicue data matching by diffusion maps.
\newblock {\em Pattern Analysis and Machine Intelligence, IEEE Transactions
  on}, 28(11):1784--1797, 2006.

\bibitem{lahat2015multimodal}
D.~Lahat, T.~Adali, and C.~Jutten.
\newblock Multimodal data fusion: An overview of methods, challenges, and
  prospects.
\newblock {\em Proceedings of the IEEE}, 103(9):1449--1477, 2015.

\bibitem{lai2000kernel}
P.~L. Lai and C.~Fyfe.
\newblock Kernel and nonlinear canonical correlation analysis.
\newblock {\em International Journal of Neural Systems}, 10(05):365--377, 2000.

\bibitem{lanckriet2004learning}
G.~R.G. Lanckriet, N.~Cristianini, P.~Bartlett, L.~{El Ghaoui}, and M.~I.
  Jordan.
\newblock Learning the kernel matrix with semidefinite programming.
\newblock {\em The Journal of Machine Learning Research}, 5:27--72, 2004.

\bibitem{lederman2015alternating}
R.~R. Lederman and R.~Talmon.
\newblock Learning the geometry of common latent variables using
  alternating-diffusion.
\newblock {\em Appl. Comp. Harmon. Anal.}, 2015.

\bibitem{lederman2015icassp}
R.~R. Lederman, R.~Talmon, H.-T. Wu, Y.-L. Lo, and R.~R. Coifman.
\newblock Alternating diffusion for common manifold learning with application
  to sleep stage assessment.
\newblock In {\em IEEE International Conference on Acoustics, Speech and Signal
  Processing (ICASSP)}, pages 5758--5762. IEEE, 2015.

\bibitem{Lee-Chiong:2008}
T.~Lee-Chiong.
\newblock {\em Sleep Medicine: Essentials and Review}.
\newblock Oxford, 2008.

\bibitem{lindenbaum2015learning}
O.~Lindenbaum, A.~Yeredor, and M.~Salhov.
\newblock Learning coupled embedding using multiview diffusion maps.
\newblock In {\em Proc. Latent Variable Analysis and Signal Separation}, pages
  127--134. Springer, 2015.

\bibitem{lindenbaum2015multiview}
O.~Lindenbaum, A.~Yeredor, M.~Salhov, and A.~Averbuch.
\newblock Multiview diffusion maps.
\newblock {\em arXiv preprint arXiv:1508.05550}, 2015.

\bibitem{luo_mixed_2012}
L.~Luo, W.~Jia, and C.~Zhang.
\newblock Mixed propagation for image retrieval.
\newblock In {\em IEEE Fourth International Conference on Multimedia
  Information Networking and Security (MINES)}, pages 237--240, 2012.

\bibitem{luo_shape_2013}
L.~Luo, C.~Shen, C.~Zhang, and A.~J. {van den Hengel}.
\newblock Shape similarity analysis by self-tuning locally constrained
  mixed-diffusion.
\newblock {\em IEEE Transactions on Multimedia}, 15(5):1174--1183, 2013.

\bibitem{2016arXiv160803628M}
N.~F. {Marshall} and M.~J. {Hirn}.
\newblock {Time Coupled Diffusion Maps}.
\newblock {\em ArXiv e-prints:1608.03628}, 2016.

\bibitem{michaeli2015nonparametric}
T.~Michaeli, W.~Wang, and K.~Livescu.
\newblock Nonparametric canonical correlation analysis.
\newblock {\em arXiv preprint arXiv:1511.04839}, 2015.

\bibitem{Nadler_Lafon_Coifman:2006}
B.~Nadler, S.~Lafon, R.~R. Coifman, and I.~G. Kevrekidis.
\newblock Diffusion maps, spectral clustering and reaction coordinates of
  dynamical systems.
\newblock {\em Appl. Comput. Harmon. Anal.}, 21(1):113--127, 2006.

\bibitem{Niyogi2006}
P.~Niyogi, S.~Smale, and S.~Weinberger.
\newblock {Finding the Homology of Submanifolds with High Confidence from
  Random Samples}.
\newblock {\em Discrete {\&} Computational Geometry}, 23(April 2003):1--23,
  2006.

\bibitem{nystrom1929praktische}
Evert~Johannes Nystr{\"o}m.
\newblock {\em {\"U}ber die praktische Aufl{\"o}sung von linearen
  Integralgleichungen mit Anwendungen auf Randwertaufgaben der
  Potentialtheorie}.
\newblock Akademische Buchhandlung, 1929.

\bibitem{Roweis2000}
S.~T. Roweis and L.~K. Saul.
\newblock Nonlinear dimensionality reduction by locally linear embedding.
\newblock {\em Science}, 260:2323--2326, 2000.

\bibitem{Singer_Wu:2012}
A.~Singer and H.-T. Wu.
\newblock Vector diffusion maps and the connection {Laplacian}.
\newblock {\em Comm. Pure Appl. Math.}, 65(8):1067--1144, 2012.

\bibitem{Singer_Wu:2017}
A.~Singer and H.-T. Wu.
\newblock Spectral convergence of the connection laplacian from random samples.
\newblock {\em Information and Inference: A Journal of the IMA}, 6(1):58--123,
  2017.

\bibitem{Takens:1981}
F.~Takens.
\newblock {Detecting strange attractors in turbulence}.
\newblock In David Rand and Lai-Sang Young, editors, {\em Dynamical Systems and
  Turbulence}, volume 898 of {\em Lecture Notes in Mathematics}, pages
  366--381. Springer Berlin Heidelberg, 1981.

\bibitem{taylor2003short}
J.~W. Taylor.
\newblock Short-term electricity demand forecasting using double seasonal
  exponential smoothing.
\newblock {\em Journal of the Operational Research Society}, 54(8):799--805,
  2003.

\bibitem{Tenenbaum2000}
J.~B. Tenenbaum, V.~de~Silva, and J.~C. Langford.
\newblock A global geometric framework for nonlinear dimensionality reduction.
\newblock {\em Science}, 260:2319--2323, 2000.

\bibitem{bo_wang_unsupervised_2012}
B.~Wang, J.~Jiang, W.~Wang, Z.-H. Zhou, and Z.~Tu.
\newblock Unsupervised metric fusion by cross diffusion.
\newblock In {\em Proc. IEEE Conference on Computer Vision and Pattern
  Recognition (CVPR)}, pages 2997--3004, 2012.

\bibitem{wu2015assess}
H.-T. Wu and Y.-L. Talmon, R.and~Lo.
\newblock Assess sleep stage by modern signal processing techniques.
\newblock {\em Biomedical Engineering, IEEE Transactions on}, 62(4):1159--1168,
  2015.

\bibitem{wu_unsupervised_2013}
Z.~Wu, Y.~Wang, R.~Shou, B.~Chen, and X.~Liu.
\newblock Unsupervised co-segmentation of 3d shapes via affinity aggregation
  spectral clustering.
\newblock {\em Computers \& Graphics}, 37(6):628--637, October 2013.

\bibitem{zhou2012fusion}
Y.~Zhou, X.~Bai, W.~Liu, and L.~J Latecki.
\newblock Fusion with diffusion for robust visual tracking.
\newblock In {\em Advances in Neural Information Processing Systems}, pages
  2978--2986, 2012.

\end{thebibliography}

\clearpage

\appendix

\setcounter{equation}{0}
\renewcommand{\theequation}{A.\arabic{equation}}
    \setcounter{lemma}{0}
    \renewcommand{\thelemma}{\Alph{section}\arabic{lemma}}

\section{Proof of Theorems \ref{Theorem:ConvolutionKernel} and \ref{Theorem:MainTheorem2}}\label{Section:Appendix:Proof}

Before proceeding, we need a discussion about the diffeomorphism $\Phi$. Denote $N_x\iota^{(i)}(\MM)$ to be the $(p-d)$-dim subspace of $\RR^p$ which contains all vectors perpendicular to $d\iota^{(i)}T_x\MM$, where $d\iota^{(i)}$ is the total differential of the map $\iota^{(i)}:M\to \RR^p$, and $d\iota^{(i)}T_x\MM:=\{d\iota^{(i)}(v)|\,v\in T_x\MM\}$; that is, $u^Tv=0$ for all $u\in N_x\iota^{(i)}(\MM)$ and $v\in d\iota^{(i)}T_x\MM$. As $\MM$ is compact and smoothly embedded in $\RR^p$, for $i=1,2$, we could find a tubular neighborhood $\text{Tub}_\tau (\iota^{(i)}(\MM))\subset \RR^p$ of $\iota^{(i)}(\MM)$, where $\tau>0$ is chosen so that $\text{Tub}_\tau (\iota^{(i)}(\MM))$ is homotopic to $\iota^{(i)}(\MM)$. Then $\Phi$ could be extended to $\text{Tub}_\tau (\iota^{(1)}(\MM))$ and becomes a diffeomorphism between $\text{Tub}_\tau (\iota^{(1)}(\MM))$ and $\text{Tub}_\tau (\iota^{(2)}(\MM))$, where we use the same notation to denote the extension. After extension, $\nabla\Phi$ could be defined on $\text{Tub}_\tau (\iota^{(1)}(\MM))$. For $x\in\MM$, $\nabla\Phi|_{\iota^{(1)}(x)}$ is a linear map from $T_{\iota^{(1)}(x)}\RR^p$ to $T_{\iota^{(2)}(x)}\RR^p$, which maps $d\iota^{(1)}T_x\MM$ to $d\iota^{(2)}T_{x}\MM$ and maps $N_x\iota^{(1)}(\MM)$ to $N_{x}\iota^{(2)}(\MM)$. Note that $\Pi^{(i)}_x(u,u)\in N_x\iota^{(i)}(\MM)$ for all $u\in T_x\MM$.

Recall the following two Lemmas. Below, we will adopt Einstein summation convention to simplify the notation.

\begin{lemma}\label{volexpansion}
Fix $i=1,2$. In the normal coordinate around $x\in\MM$, when $\|v\|_{g^{(i)}}\ll 1$, $v\in T_x\MM$, the Riemannian measure satisfies
\begin{align}
\ud V^{(i)}(\exp^{(i)}_xv)=\left(1-\frac{1}{6}\Ric^{(i)}_{kl}(x)v^kv^l+O(\|v\|^3)\right)\ud v^1\wedge \ud v^2\ldots\wedge \ud v^n.
\end{align}
In the polar coordinate $v=t\theta$, where $\theta\in T_x\MM$, $\|\theta\|_{g^{(i)}}=1$, $t>0$, we have
\begin{align}
\ud V^{(i)}(\exp^{(i)}_xt\theta)=\left(t^{d-1}+t^{d+1}\Ric^{(i)}(\theta,\theta)+O(t^{d+2})\right)\ud t\ud\theta.
\end{align}
\end{lemma}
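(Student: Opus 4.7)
The statement is a classical fact about the Taylor expansion of the Riemannian volume form in normal coordinates, and my plan would be to derive it directly from the expansion of the metric tensor along radial geodesics. Fixing $i \in \{1,2\}$, I would work on $(\MM, g^{(i)})$ and center a normal coordinate chart at $x$. The starting point is the classical expansion of the metric components in normal coordinates, obtained from solving the Jacobi equation along radial geodesics emanating from $x$:
\begin{align}
g^{(i)}_{kl}(v) = \delta_{kl} - \tfrac{1}{3}R^{(i)}_{kabl}(x)v^a v^b + O(\|v\|^3),
\end{align}
where $R^{(i)}$ denotes the Riemann curvature tensor of $g^{(i)}$ at $x$ and Einstein summation is in force.

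With this expansion in hand, I would compute $\sqrt{\det g^{(i)}(v)}$ to second order in $v$. Using $\det(I + A) = 1 + \operatorname{tr} A + O(\|A\|^2)$ with $A$ the order-$\|v\|^2$ correction to $\delta_{kl}$, and contracting via $R^{(i)}_{kabl}\delta^{kl} = \Ric^{(i)}_{ab}(x)$, one obtains $\det g^{(i)}(v) = 1 - \tfrac{1}{3}\Ric^{(i)}_{kl}(x)v^kv^l + O(\|v\|^3)$. A Taylor expansion of the square root then yields
\begin{align}
\sqrt{\det g^{(i)}(v)} = 1 - \tfrac{1}{6}\Ric^{(i)}_{kl}(x)v^kv^l + O(\|v\|^3),
\end{align}
and since the Riemannian volume element in normal coordinates is precisely $\sqrt{\det g^{(i)}(v)}\,\ud v^1\wedge\cdots\wedge\ud v^d$, the first displayed formula in the Lemma follows immediately.

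For the polar form, I would substitute $v = t\theta$ with $\|\theta\|_{g^{(i)}} = 1$ and $t > 0$. The standard change of variables gives $\ud v^1\wedge\cdots\wedge\ud v^d = t^{d-1}\,\ud t\,\ud\theta$, while the quadratic curvature term becomes $\Ric^{(i)}_{kl}(x)v^kv^l = t^2\,\Ric^{(i)}(\theta,\theta)$. Multiplying the scalar expansion by the radial Jacobian $t^{d-1}$ and absorbing cross-terms into $O(t^{d+2})$ reproduces the displayed polar expression, up to the precise coefficient/sign of the Ricci contribution (which in my derivation comes out as $-\tfrac{1}{6}t^{d+1}\Ric^{(i)}(\theta,\theta)$ and should be reconciled with the paper's sign convention).

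The only real obstacle is bookkeeping of constants and signs in the Jacobi-equation derivation of the metric expansion; after that, everything reduces to elementary algebraic manipulations of determinants and square roots. Since this lemma is stated as a ``Recall'' and is classical, in practice I would simply cite a standard reference (e.g., do Carmo or Chavel) for the metric expansion and then present the two-line computations above.
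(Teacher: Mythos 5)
Your derivation is correct and is exactly the classical argument the paper implicitly relies on: the paper's ``proof'' of this lemma is just a citation to Singer--Wu, and your route (metric expansion $g^{(i)}_{kl}(v)=\delta_{kl}-\tfrac13 R^{(i)}_{kabl}(x)v^av^b+O(\|v\|^3)$ from Jacobi fields, then $\det$, square root, and polar coordinates) is the standard content of that reference, so there is no methodological difference to speak of.

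One point worth making explicit: your flagged coefficient discrepancy is real and you have it the right way around. Substituting $v=t\theta$ into the Cartesian display gives
\begin{align}
\ud V^{(i)}(\exp^{(i)}_x t\theta)=\Big(t^{d-1}-\tfrac{1}{6}\,t^{d+1}\Ric^{(i)}(\theta,\theta)+O(t^{d+2})\Big)\ud t\,\ud\theta,
\end{align}
so the polar formula as displayed in the lemma (with coefficient $+1$ on the Ricci term) is inconsistent with the Cartesian one and should be read as carrying the factor $-\tfrac16$; the same constant is treated loosely later in the appendix (e.g., the factors $1\mp\Ric^{(1)}_{ij}u^iu^j$ appearing in the proofs of the convolution-kernel lemmas). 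This does not affect any downstream conclusion, since these curvature terms only enter the $O(\epsilon)$ corrections whose precise constants are never used, but your version is the one to keep.
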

\begin{proof}
See, for example, \cite{Singer_Wu:2012}.
\end{proof}

\begin{lemma}\label{relateexp}
Fix $x\in\MM$ and $y=\exp^{(1)}_x(v)$, where $v\in T_x\MM$ with $\|v\|_{g^{(1)}}\ll 1$. We have
\begin{equation}\label{relateexp1}
\iota^{(1)}(y)=\iota^{(1)}(x)+\ud \iota^{(1)}(v)+Q^{(1)}_2(v)+Q^{(1)}_3(v)+O(\|v\|_{g^{(i)}}^4),
\end{equation}
where $\Pi^{(1)}$ is the second fundamental form of $\iota^{(1)}$, $Q^{(1)}_2(v):=\frac{1}{2}\Pi^{(1)}(v,v)$ and  $Q^{(1)}_3(v):=\frac{1}{6}\nabla^{(1)}_v\Pi^{(1)}(v,v)$.
Further, for $z=\exp^{(1)}_x(u)$, where $u\in T_x\MM$ with $\|u\|_{g^{(1)}}\ll 1$, we have
\begin{align}
\|\iota^{(1)}(z)-\iota^{(1)}(y)\|=&\,\|u-v\|_{g^{(1)}}+\frac{\|Q^{(1)}_2(u)-Q^{(1)}_2(v)\|^2}{2\|u-v\|_{g^{(1)}}}\label{relateexp1Diff}\\
&+\frac{\langle\ud \iota^{(1)}(u-v),Q_3^{(1)}(u)-Q_3^{(1)}(v)\rangle}{\|u-v\|_{g^{(1)}}}+O(\|u\|_{g^{(1)}}^4,\|v\|_{g^{(1)}}^4).\nonumber
\end{align}
In particular, suppose $v=t\theta$, where $\|\theta\|_{g^{(1)}}=1$, we have
\begin{equation}\label{relateexpGeodesic1}
\|\iota^{(1)}(y)-\iota^{(1)}(x)\|=t-\frac{\|Q^{(1)}_2(\theta)\|^2}{6}t^3+O(t^4).
\end{equation}
\end{lemma}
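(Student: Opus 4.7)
The plan is a careful Taylor expansion of the embedded geodesic, followed by a chord-length computation. Fix $w\in T_x\MM$ with $\|w\|_{g^{(1)}}$ small and consider the curve $c(t):=\iota^{(1)}(\exp_x^{(1)}(tw))\subset\RR^p$. By the Gauss decomposition of the Euclidean ambient connection, $c''(t)=\ud\iota^{(1)}(\nabla^{(1)}_{\dot\gamma}\dot\gamma)+\Pi^{(1)}(\dot\gamma,\dot\gamma)$; since $\gamma(t):=\exp_x^{(1)}(tw)$ is a geodesic, the tangential term vanishes and $c''(0)=\Pi^{(1)}(w,w)=2Q^{(1)}_2(w)$. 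Differentiating once more in the Euclidean sense, $c'''(0)$ equals the Euclidean derivative of the vector-valued map $t\mapsto\Pi^{(1)}(\dot\gamma(t),\dot\gamma(t))$ at $t=0$, which I denote by $6Q^{(1)}_3(w)$; crucially this derivative retains \emph{both} the tangential (shape-operator) and normal (Codazzi) pieces of the ambient derivative of $\Pi^{(1)}$. Replacing $tw$ by a small $v\in T_x\MM$ and controlling the fourth-order remainder via the $C^4$ regularity of $\iota^{(1)}$ (Taylor's theorem with integral remainder on the compact $\MM$) yields \eqref{relateexp1}.

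For \eqref{relateexp1Diff}, subtract the two expansions to obtain $A:=\iota^{(1)}(z)-\iota^{(1)}(y)=\ud\iota^{(1)}(u-v)+[Q^{(1)}_2(u)-Q^{(1)}_2(v)]+[Q^{(1)}_3(u)-Q^{(1)}_3(v)]+O(\|u\|^4+\|v\|^4)$. Expand $\|A\|^2$ in $\RR^p$ using (i) the isometric property, $\|\ud\iota^{(1)}(u-v)\|^2=\|u-v\|_{g^{(1)}}^2$, and (ii) tangent--normal orthogonality, which annihilates the would-be order-3 cross term $\langle\ud\iota^{(1)}(u-v),Q^{(1)}_2(u)-Q^{(1)}_2(v)\rangle$ because $Q^{(1)}_2$ is normal-valued. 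The surviving contributions through order $4$ are $\|u-v\|_{g^{(1)}}^2+\|Q^{(1)}_2(u)-Q^{(1)}_2(v)\|^2+2\langle\ud\iota^{(1)}(u-v),Q^{(1)}_3(u)-Q^{(1)}_3(v)\rangle$. Factoring $\|u-v\|_{g^{(1)}}^2$ and applying $\sqrt{1+\eta}=1+\eta/2+O(\eta^2)$ with $\eta=O(\|u\|^2+\|v\|^2)$ recovers \eqref{relateexp1Diff}, the $\eta^2$ error being absorbed into $O(\|u\|^4,\|v\|^4)$.

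The special case \eqref{relateexpGeodesic1} follows by taking $u=0$, $v=t\theta$, $\|\theta\|_{g^{(1)}}=1$: bilinearity gives $Q^{(1)}_2(t\theta)=t^2Q^{(1)}_2(\theta)$ and $Q^{(1)}_3(t\theta)=t^3Q^{(1)}_3(\theta)$, so \eqref{relateexp1Diff} reduces to $t+\tfrac{t^3}{2}\|Q^{(1)}_2(\theta)\|^2+t^3\langle\ud\iota^{(1)}(\theta),Q^{(1)}_3(\theta)\rangle+O(t^4)$. To match the claimed $-\tfrac{t^3}{6}\|Q^{(1)}_2(\theta)\|^2$ I need the identity $\langle\ud\iota^{(1)}(\theta),Q^{(1)}_3(\theta)\rangle=-\tfrac{2}{3}\|Q^{(1)}_2(\theta)\|^2=-\tfrac{1}{6}\|\Pi^{(1)}(\theta,\theta)\|^2$, which I derive from the Weingarten decomposition of the ambient derivative of a normal field: only the tangential shape-operator piece $-\ud\iota^{(1)}(A_{\Pi^{(1)}(\theta,\theta)}(\theta))$ contributes to the inner product with $\ud\iota^{(1)}(\theta)$, and $\langle\theta,A_{\Pi^{(1)}(\theta,\theta)}(\theta)\rangle_{g^{(1)}}=\|\Pi^{(1)}(\theta,\theta)\|^2$ by the defining relation $\langle A_N X,Y\rangle_g=\langle N,\Pi^{(1)}(X,Y)\rangle$ of the shape operator. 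The principal obstacle is exactly this clarification of $Q^{(1)}_3$: reading $\nabla^{(1)}_v\Pi^{(1)}(v,v)$ as the intrinsic Codazzi derivative in the normal bundle would discard the shape-operator piece and flip the sign of the $t^3$ coefficient in \eqref{relateexpGeodesic1}; the correct interpretation, enforced by the isometric constraint, is the full $\RR^p$-derivative of $w\mapsto\Pi^{(1)}(w,w)$, and retaining its tangential part is what produces the $-\tfrac{1}{6}$ coefficient. Beyond this subtlety, everything is routine bookkeeping of Taylor expansions.
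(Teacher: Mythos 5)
Your argument is correct and follows essentially the same route as the paper: Taylor-expand the embedded geodesic (the paper cites \cite{Singer_Wu:2012} for \eqref{relateexp1}), expand the chord length directly, and use the identity $\langle \ud\iota^{(1)}(\theta),\nabla^{(1)}_\theta\Pi^{(1)}(\theta,\theta)\rangle=-\|\Pi^{(1)}(\theta,\theta)\|^2$, which the paper merely asserts and you correctly derive via the Weingarten decomposition (equivalently obtainable by differentiating $\|\tfrac{\ud}{\ud t}\iota^{(1)}(\gamma(t))\|^2=1$ twice along the unit-speed geodesic). Your observation that $\nabla^{(1)}_v\Pi^{(1)}(v,v)$ must be read as the full ambient derivative, retaining the tangential shape-operator part, is precisely the interpretation the paper's identity presupposes, so there is no gap.
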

\begin{proof}
See, for example, \cite{Singer_Wu:2012} for the proof of (\ref{relateexp1}). By (\ref{relateexp1}) and a direct expansion, we have (\ref{relateexp1Diff}). Note that (\ref{relateexpGeodesic1}) comes from (\ref{relateexp1}) since $\|\Pi^{(1)}(v,v)\|^2+\langle \ud \iota^{(1)}(v), \nabla^{(1)}_v\Pi^{(1)}(v,v)\rangle=0$.
\end{proof}

An immediate consequence of Lemma \ref{relateexp} is the distance between two points $x$ and $y=\exp^{(1)}_x(v)$, where $v\in T_x\MM$, when measured by the ambient metric; that is, $\|\iota^{(1)}(\exp^{(1)}_x(v))-\iota^{(1)}(x)\|$. We could call the distance $\|\iota^{(1)}(\exp^{(1)}_x(v))-\iota^{(1)}(x)\|$ the {\em ambient distance} between $x$ and $y$. Next, we discuss the difference between two metrics $g^{(1)}$ and $g^{(2)}$. Since $g^{(i)}$ is induced from $(\RR^p,\texttt{can})$ via $\iota^{(i)}$, and $\iota^{(i)}$, $i=1,2$, is a smooth embedding of $\MM$ into $\RR^p$, we could find a diffeomorphism $\Phi:\RR^p\to\RR^p$ so that $\Phi\circ \iota^{(1)}=\iota^{(2)}$. Hence, the metrics $g^{(1)}$ and $g^{(2)}$ are related by  
\begin{align}
\|u\|_{g^{(2)}}^2=\langle \nabla\Phi|_{\iota^{(1)}(x)}\ud \iota^{(1)}(u),\nabla\Phi|_{\iota^{(1)}(x)}\ud \iota^{(1)}(u)\rangle=\|\nabla\Phi|_{\iota^{(1)}(x)}\ud \iota^{(1)}(u)\|^2,
\end{align}
where $u\in T_x\MM$. In the next lemma, we evaluate the ambient distance of two points on $\MM$ measured by a different metric.
\begin{lemma}\label{Lemma:distortedDistance}
Fix $x\in\MM$. To simplify the notation, we ignore the subscription of $\nabla^{(1)}\Phi|_{\iota^{(1)}(x)}$ and use $\nabla\Phi$. Similar simplification holds for ${\nabla^{(1)}}^2\Phi$, $\ud \iota^{(1)}$, $\Pi^{(1)}$, etc. Suppose $y=\exp^{(1)}_xu$, where $u\in T_x\MM$ and $\|u\|_{g^{(1)}}$ is small enough. Then we have
\begin{align}
\iota^{(2)}(y)=\iota^{(2)}(x)+\nabla\Phi\ud\iota^{(1)}u+Q^{(2)}_2(u)+Q^{(2)}_3(u)+ O(\|u\|_{g^{(1)}}^4)
\end{align}
where $Q^{(2)}_2$ and $Q^{(2)}_3$ are quadratic and cubic polynomials respectively defined by
\begin{align}
Q^{(2)}_2(u):=&\frac{1}{2}\Pi^{(2)}(u,u),\\
Q^{(2)}_3(u):=&\frac{1}{6}\nabla\Phi\nabla^{(1)}_u\Pi^{(1)}(u,u)+ \nabla^2\Phi\left(u,\Pi^{(1)}(u,u)\right). \nonumber
\end{align}
Further, for $z=\exp^{(1)}_xv$, where $\|v\|_{g^{(1)}}$ is also small enough, we have
\begin{align}
\|\iota^{(2)}(y)-\iota^{(2)}(z)\|=\,& \|\nabla\Phi\ud\iota^{(1)}(u-v)\|+\frac{\left\langle \nabla\Phi\ud\iota^{(1)}(u-v),Q^{(2)}_2(u)-Q^{(2)}_2(v)\right\rangle}{\|\nabla\Phi\ud\iota^{(1)}(u-v)\|}\nonumber\\
&\,+\frac{2\left\langle \nabla\Phi\ud\iota^{(1)}(u-v), Q^{(2)}_3(u)-Q^{(2)}_3(v)\right\rangle+\|Q^{(2)}_2(u)-Q^{(2)}_2(v)\|^2}{\|\nabla\Phi\ud\iota^{(1)}(u-v)\|}\nonumber\\
&\,+O(\|u\|_{g^{(1)}}^4,\|v\|_{g^{(1)}}^4)\nonumber.
\end{align}
In particular, suppose $v=t\theta$, where $\|\theta\|_{g^{(1)}}=1$, we have
\begin{align}
\|\iota^{(2)}(y)-\iota^{(2)}(x)\|=\,& t\|\nabla\Phi\ud\iota^{(1)}(\theta)\|+t^2\frac{\left\langle \nabla\Phi\ud\iota^{(1)}(\theta),Q^{(2)}_2(\theta)\right\rangle}{\|\nabla\Phi\ud\iota^{(1)}(\theta)\|}\label{relateexpGeodesic2}\\
&\,+t^3\frac{2\left\langle \nabla\Phi\ud\iota^{(1)}(\theta), Q^{(2)}_3(\theta)\right\rangle+\|Q^{(2)}_2(\theta)\|^2}{\|\nabla\Phi\ud\iota^{(1)}(\theta)\|}+O(t^4)\nonumber.
\end{align}
\end{lemma}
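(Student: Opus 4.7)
The plan is to exploit the factorization $\iota^{(2)} = \Phi \circ \iota^{(1)}$ (valid on a tubular neighborhood by the discussion preceding Lemma \ref{volexpansion}), and combine the expansion of $\iota^{(1)}(\exp^{(1)}_x u)$ furnished by Lemma \ref{relateexp} with a Taylor expansion of $\Phi$ centered at $\iota^{(1)}(x)$ up to order three. First I would set
\[
w := \ud\iota^{(1)}(u) + \tfrac{1}{2}\Pi^{(1)}(u,u) + \tfrac{1}{6}\nabla^{(1)}_u\Pi^{(1)}(u,u) + O(\|u\|_{g^{(1)}}^4),
\]
so that by \eqref{relateexp1} one has $\iota^{(2)}(y) = \Phi(\iota^{(1)}(x)+w)$. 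Since $\|w\|=O(\|u\|_{g^{(1)}})$, expanding $\Phi$ to third order produces a remainder $O(\|u\|_{g^{(1)}}^4)$, matching the claimed accuracy.

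Next I would collect terms by order in $u$. The zero- and first-order contributions give $\iota^{(2)}(x)$ and $\nabla\Phi\,\ud\iota^{(1)}(u)$ directly. At second order the relevant pieces are $\tfrac{1}{2}\nabla\Phi(\Pi^{(1)}(u,u))$ from the linearization of $\Phi$ and $\tfrac{1}{2}\nabla^2\Phi(\ud\iota^{(1)}(u),\ud\iota^{(1)}(u))$ from the Hessian of $\Phi$; their sum equals $\tfrac{1}{2}\Pi^{(2)}(u,u)$ by the chain rule applied twice to $\iota^{(2)}=\Phi\circ\iota^{(1)}$, combined with the characterization of the second fundamental form as the normal component of the second covariant derivative of the embedding (which, in normal coordinates associated with $g^{(1)}$ centered at $x$, is itself the ambient second derivative of $\iota^{(i)}$ along the geodesic). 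At third order I would collect the three contributions $\tfrac{1}{6}\nabla\Phi(\nabla^{(1)}_u\Pi^{(1)}(u,u))$ (from the cubic term of $w$), the cross term $\nabla^2\Phi(\ud\iota^{(1)}(u),\tfrac{1}{2}\Pi^{(1)}(u,u))$ of $\tfrac{1}{2}\nabla^2\Phi(w,w)$, and $\tfrac{1}{6}\nabla^3\Phi(\ud\iota^{(1)}(u),\ud\iota^{(1)}(u),\ud\iota^{(1)}(u))$; these should repackage into the stated $Q^{(2)}_3(u)$ after identifying the tangential $\nabla^3\Phi$ contribution with what the chain rule forces on the third covariant derivative of $\iota^{(2)}$ along the geodesic.

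Once the pointwise expansion of $\iota^{(2)}(y)$ is in hand, the distance formula follows by exactly the same device used to pass from \eqref{relateexp1} to \eqref{relateexp1Diff}: I would write $\iota^{(2)}(y)-\iota^{(2)}(z) = A + B$ with $A=\nabla\Phi\,\ud\iota^{(1)}(u-v)$ and $B = (Q^{(2)}_2(u)-Q^{(2)}_2(v))+(Q^{(2)}_3(u)-Q^{(2)}_3(v))+O(\|u\|^4,\|v\|^4)$, then use
\[
\|A+B\| \;=\; \|A\|\sqrt{1+\tfrac{2\langle A,B\rangle+\|B\|^2}{\|A\|^2}} \;=\; \|A\|+\tfrac{\langle A,B\rangle}{\|A\|}+\tfrac{\|B\|^2-\langle A,B\rangle^2/\|A\|^2}{2\|A\|}+\cdots,
\]
keeping only terms of order at most three in $(u,v)$. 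The special case $u=t\theta$, $v=0$ then reduces the formula for $\|\iota^{(2)}(y)-\iota^{(2)}(x)\|$ to \eqref{relateexpGeodesic2}; note that, unlike in \eqref{relateexpGeodesic1}, no cancellation occurs at order $t^2$ because the leading vector $\nabla\Phi\,\ud\iota^{(1)}(\theta)$ is no longer orthogonal to $Q^{(2)}_2(\theta)=\tfrac{1}{2}\Pi^{(2)}(\theta,\theta)$ in general (the orthogonality that killed the $t^2$ term of Lemma \ref{relateexp} held only because $\Pi^{(1)}(\theta,\theta)$ is normal to $\ud\iota^{(1)}T_x\MM$, a property that is destroyed by pulling $\Pi^{(1)}$ through $\nabla\Phi$).

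The main obstacle I anticipate is the cubic-order geometric bookkeeping: writing $Q^{(2)}_3$ in the compact form stated requires carefully combining the pushforward of the cubic part of Lemma \ref{relateexp} through $\nabla\Phi$ with the mixed Hessian cross-term of $\Phi$, and identifying the residual $\nabla^3\Phi$ cubic-in-$\ud\iota^{(1)}(u)$ contribution through the iterated chain rule (it must recombine with the $\nabla\Phi$ term via the definition of $\nabla^{(2)}_{\cdot}\Pi^{(2)}$). Once these identities are in place the rest is a routine Taylor expansion and the square-root expansion above, essentially parallel to the proof of Lemma \ref{relateexp} as carried out in \cite{Singer_Wu:2012}.
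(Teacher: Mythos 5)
Your proposal follows essentially the same route as the paper's proof: compose the expansion of Lemma \ref{relateexp} with a third-order Taylor expansion of $\Phi$ at $\iota^{(1)}(x)$, identify the second-order piece via the chain-rule identity $\Pi^{(2)}(u,u)=\nabla\Phi\,\Pi^{(1)}(u,u)+\nabla^2\Phi(u,u)$, and then pass to the distance by expanding $\|A+B\|$ with $A=\nabla\Phi\,\ud\iota^{(1)}(u-v)$, using that $\|\nabla\Phi\,\ud\iota^{(1)}(\theta)\|$ is uniformly bounded away from zero. The only minor divergence is your extra bookkeeping of the cubic $\nabla^3\Phi$ contribution, which the paper's computation does not carry into $Q^{(2)}_3$; this does not change the approach, and your observation about the non-vanishing $t^2$ term matches the paper's closing remark.
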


\begin{proof}
By Lemma \ref{relateexp}, 
\begin{align}
&\iota^{(2)}(\exp^{(1)}_xu)=\Phi(\iota^{(1)}(\exp^{(1)}_xu))\label{proof:mainTheorem:Match:eq2}\\
=&\,\Phi\left(\iota^{(1)}(x)+\ud \iota^{(1)}u+\frac{1}{2}\Pi^{(1)}(u,u)+\frac{1}{6}\nabla^{(1)}_u\Pi^{(1)}(u,u)+O(\|u\|_{g^{(1)}}^4)\right)\nonumber\\
=&\,\iota^{(2)}(x)+\nabla\Phi\left(\ud\iota^{(1)}u+\frac{1}{2}\Pi^{(1)}(u,u)+\frac{1}{6}\nabla^{(1)}_u\Pi^{(1)}(u,u) + O(\|u\|_{g^{(1)}}^4)\right) \nonumber\\
&\,+ \frac{1}{2}\nabla^2\Phi\left(u,u\right)+ \nabla^2\Phi\left(u,\Pi^{(1)}(u,u)\right)+O(\|u\|^4)\nonumber\\
=&\,\iota^{(2)}(x)+\nabla\Phi\ud\iota^{(1)}u+Q^{(2)}_2(u)+Q^{(2)}_3(u)+ O(\|u\|_{g^{(1)}}^4),\nonumber
\end{align}
where $\Pi^{(2)}(u,u)=\nabla\Phi\Pi^{(1)}(u,u)+ \nabla^2\Phi\left(u,u\right)$ by the chain rule.
Thus, we have
\begin{align}
&\|\iota^{(2)}(\exp^{(1)}_xu)-\iota^{(2)}(\exp^{(1)}_xv)\|^2\\
=&\, \big\|\nabla\Phi\ud\iota^{(1)}(u-v)\big\|^2\nonumber\\
&\,+2\left\langle \nabla\Phi\ud\iota^{(1)}(u-v), Q^{(2)}_3(u)-Q^{(2)}_3(v)\right\rangle+\|Q^{(2)}_2(u)-Q^{(2)}_2(v)\|^2\nonumber\\
&\,+O(\|u\|_{g^{(1)}}^4\|u-v\|_{g^{(1)}},\|v\|_{g^{(1)}}^4\|u-v\|_{g^{(1)}})\nonumber.
\end{align}
Note that $\|\nabla\Phi\ud\iota^{(1)}(\theta)\|$ is uniformly bounded below from zero and above for any chosen $x\in\MM$. Thus, by taking the square root, we have (\ref{relateexpGeodesic2}).
\end{proof}

Note that in (\ref{relateexpGeodesic1}) the geodesic distance between $y$ and $x$, $\big\|\ud\iota^{(1)}(u)\big\|$, is different from the ambient distance with the error term of order $3$, while in (\ref{relateexpGeodesic2}), the error term is of order $2$. This difference comes from the diffeomorphism $\Phi$, in particular its Hessian $\nabla^2\Phi$ showing up in $Q^{(2)}_2(u)$.

\begin{lemma}\label{Lemma:KeyExpansion}
Fix a symmetric matrix $S\in \RR^{d\times d}$ and $w\in\RR^d$. Then we have the following integrations:
\begin{align}
&\int_{\RR^d}e^{-\|Su\|^2}e^{-\|u-w\|^2}\ud u=\frac{\pi^{d/2}}{\sqrt{\det(I+S^2)}}e^{-\|(I+S^2)^{-1/2}Sw\|^2}
\end{align}
\end{lemma}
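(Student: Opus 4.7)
The identity is a standard Gaussian integral, so my plan is to merge the two exponents into a single quadratic form, diagonalize (implicitly) via completion of the square, and apply the multivariate Gaussian normalization formula.

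First I would expand and combine the exponents. Using $S^T=S$, the sum of the two quadratics is
\begin{align*}
\|Su\|^2+\|u-w\|^2
&=u^T(I+S^2)u-2w^Tu+w^Tw.
\end{align*}
Set $M:=I+S^2$; since $S$ is symmetric and real, $M$ is symmetric positive definite, hence invertible with a well-defined positive-definite square root, and both $M^{-1}$ and $M^{-1/2}$ commute with $S$.

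Next I would complete the square. Writing
\[
u^TMu-2w^Tu=(u-M^{-1}w)^TM(u-M^{-1}w)-w^TM^{-1}w,
\]
the exponent becomes $(u-M^{-1}w)^TM(u-M^{-1}w)+w^T(I-M^{-1})w$. The algebraic identity $I-(I+S^2)^{-1}=(I+S^2)^{-1}S^2$, together with the commutativity noted above, gives
\[
w^T(I-M^{-1})w=w^TS(I+S^2)^{-1}Sw=\|(I+S^2)^{-1/2}Sw\|^2,
\]
which is precisely the constant that should appear in the stated answer.

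Finally, I would carry out the substitution $v=u-M^{-1}w$ (which has unit Jacobian) and invoke the standard formula $\int_{\RR^d}e^{-v^TMv}\,\ud v=\pi^{d/2}/\sqrt{\det M}$, valid for any symmetric positive-definite $M$. Multiplying by the $u$-independent factor $e^{-\|(I+S^2)^{-1/2}Sw\|^2}$ extracted in the previous step yields the claimed identity. I do not anticipate any real obstacle here: the only mild subtlety is verifying that $(I+S^2)^{-1/2}$ and $S$ commute so that the exponent can be written as a single norm-squared, which follows immediately because $S$ is symmetric and any scalar function of $S^2$ commutes with $S$.
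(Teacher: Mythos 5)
Your proof is correct, and it takes a different route from the paper: the paper disposes of this lemma in one sentence by viewing the integral as a convolution of two Gaussians and invoking the Fourier transform (the transform of a Gaussian is Gaussian, the transform of a convolution is the product, and inverting gives the stated formula), whereas you work entirely in the spatial domain by combining the exponents into a single quadratic form, completing the square, and applying $\int_{\RR^d}e^{-v^TMv}\,\ud v=\pi^{d/2}/\sqrt{\det M}$ with $M=I+S^2$. Your computation is sound throughout: $M$ is symmetric positive definite, the completion of the square is exact, the identity $I-M^{-1}=M^{-1}S^2$ together with the fact that $M^{-1}$ and $M^{-1/2}$ are functions of $S^2$ (hence commute with $S$) yields the constant $\|(I+S^2)^{-1/2}Sw\|^2$, and the unit-Jacobian shift finishes the argument. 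What each approach buys: yours is elementary and fully self-contained, and in fact supplies the details the paper omits; the paper's Fourier argument is conceptually shorter and extends more readily if one wanted analogous convolution identities for other kernels with explicit transforms, but as written it is only a proof sketch.
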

\begin{proof}
Note that these integrations could be viewed as a convolution of two Gaussian functions, so we could obtain the results by applying the Fourier transform. 
\end{proof}

We need the following technical lemmas to analyze the effective kernel $\eK$. For $h\geq 0$, denote 
\begin{align}
\tilde{B}^{(1)}_{h}(x):=\exp^{(1)}_x(B^{(1)}_{h}),
\end{align}
where $B^{(1)}_{h}=\{u\in T_x\MM|\,\|u\|_{g^{(1)}}\leq h\}\subset T_x\MM$ is a $d$-dim disk with the center $0$ and the radius $h$. 

\begin{lemma}\label{Lemma:ConvolutionKernelExtraPart}
Suppose Assumptions (A1)-(A4) hold, $F\in L^\infty(\MM)$ and $0<\gamma<1/2$. Then, when $\epsilon$ is small enough, for all pairs of $x,x''\in \MM$ so that $x''=\exp^{(1)}_xv$ and $\|v\|_{g^{(1)}}>2\epsilon^\gamma$, the following holds:
\begin{align}
\left|\int_{\MM} \tilde{K}^{(e_2)}_{\epsilon}(x,y)\tilde{K}^{(e_1)}_{\epsilon}(y,x'')F(y)\ud V^{(1)}(y)\right|=O(\epsilon^{d/2+3/2}).
\end{align}
\end{lemma}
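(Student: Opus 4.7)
The mechanism is the exponentially fast decay of the kernels together with a triangle-inequality partition of $\MM$: the hypothesis $\|v\|_{g^{(1)}}>2\epsilon^\gamma$ with $\gamma<1/2$ places the separation between $x$ and $x''$ on a scale much larger than $\sqrt\epsilon$, so one of the two kernel factors in the integrand is always super-polynomially small in $\epsilon$.

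First, I would pass from intrinsic distance to ambient distance. Because $\MM$ is compact and $\iota^{(1)}\in C^4$ (Assumption (A1)), the function $(x,y)\mapsto \|\iota^{(1)}(x)-\iota^{(1)}(y)\|/d_{g^{(1)}}(x,y)$ extends continuously to $\MM\times\MM$ with value $1$ on the diagonal, and so it is bounded below by some constant $c_*>0$ depending only on the embedding. Under (A3), for $\epsilon$ sufficiently small, $2\epsilon^\gamma$ lies below the injectivity radius, hence $d_{g^{(1)}}(x,x'')=\|v\|_{g^{(1)}}>2\epsilon^\gamma$, and the ambient separation satisfies $\|\iota^{(1)}(x)-\iota^{(1)}(x'')\|\geq 2c_*\epsilon^\gamma$.

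Next, apply the triangle inequality in $\RR^p$ to partition $\MM=A_1\cup A_2$, where
\[
A_1:=\{y\in\MM:\|\iota^{(1)}(x)-\iota^{(1)}(y)\|\geq c_*\epsilon^\gamma\},\qquad A_2:=\{y\in\MM:\|\iota^{(1)}(y)-\iota^{(1)}(x'')\|\geq c_*\epsilon^\gamma\}.
\]
On $A_2$, Assumption (A2) yields $\tilde{K}^{(1)}(\|\iota^{(1)}(y)-\iota^{(1)}(x'')\|/\sqrt\epsilon)\leq c_1\exp(-c_2 c_*^2\epsilon^{2\gamma-1})$, which, since $\gamma<1/2$, decays faster than any polynomial in $\epsilon$. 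The remaining factor obeys $\int_\MM\tilde{K}^{(2)}_\epsilon(x,y)\,dV^{(1)}(y)=O(\epsilon^{d/2})$ by a standard normal-coordinate change of variable together with Lemma~\ref{volexpansion}, where the discrepancy between $dV^{(1)}$ and $dV^{(2)}$ is absorbed into a uniformly bounded Radon--Nikodym derivative on compact $\MM$. The region $A_1$ is symmetric: using that the extended diffeomorphism $\Phi$ is bi-Lipschitz on $\mathrm{Tub}_\tau(\iota^{(1)}(\MM))$, the bound transfers to $\|\iota^{(2)}(x)-\iota^{(2)}(y)\|\gtrsim\epsilon^\gamma$, so $\tilde{K}^{(2)}_\epsilon(x,y)$ is super-polynomially small, while $\int_\MM\tilde{K}^{(1)}_\epsilon(y,x'')\,dV^{(1)}(y)=O(\epsilon^{d/2})$ by the same normal-coordinate argument.

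Pulling $\|F\|_\infty$ out and combining the two estimates gives
\[
\left|\int_\MM\tilde{K}^{(e_2)}_\epsilon(x,y)\tilde{K}^{(e_1)}_\epsilon(y,x'')F(y)\,dV^{(1)}(y)\right|\;\lesssim\;\|F\|_\infty\,\epsilon^{d/2}\exp(-c\,\epsilon^{2\gamma-1}),
\]
which is $O(\epsilon^{d/2+M})$ for every $M>0$, and in particular for $M=3/2$, as claimed. The hard part is purely bookkeeping: transferring uniformly between $d_{g^{(i)}}$, the two ambient distances $\|\iota^{(i)}(\cdot)-\iota^{(i)}(\cdot)\|$, and the two volume forms $dV^{(1)},dV^{(2)}$ so that the tail estimate is uniform in the pair $(x,x'')$. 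Once these equivalences are in place, the argument reduces to a short Gaussian tail bound.
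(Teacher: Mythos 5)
Your proof is correct and takes essentially the same route as the paper's: you split $\MM$ according to whether $y$ is $\epsilon^\gamma$-separated from $x''$ or from $x$, use the exponential decay (A2) of the corresponding kernel factor—transferring the separation to the $\iota^{(2)}$ picture via the nondegeneracy of $\Phi$—and bound the remaining factor crudely, which is exactly the paper's two-region argument with your ambient triangle-inequality cover $A_1\cup A_2$ and bi-Lipschitz bound on $\Phi$ playing the role of the paper's geodesic-ball split and lower bound on the singular values of $\nabla\Phi$. The one caveat (shared with the paper, which silently reads the hypothesis as $d_{g^{(1)}}(x,x'')>2\epsilon^\gamma$) is that your step ``$d_{g^{(1)}}(x,x'')=\|v\|_{g^{(1)}}$'' does not follow from (A3) when $\|v\|_{g^{(1)}}$ exceeds the injectivity radius, so the hypothesis must be interpreted as a lower bound on the geodesic distance, as the paper does.
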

\begin{proof}
By the assumption that $F\in L^\infty(\MM)$, we immediately have
\begin{align}
&\left|\int_{\MM} \tilde{K}^{(2)}_{\epsilon}(x,y)\tilde{K}^{(1)}_{\epsilon}(y,x'')F(x')\ud V^{(1)}(y) \right|\label{proof:lemma1:eq1}\\
\leq \,& \|F\|_{L^\infty}\left|\int_{\tilde{B}^{(1)}_{\epsilon^{\gamma}}(x)}\tilde{K}^{(2)}_{\epsilon}(x,y)\tilde{K}^{(1)}_{\epsilon}(y,x'')F(x')\ud V^{(1)}(y) \right|\nonumber\\
\quad&+\|F\|_{L^\infty}\left|\int_{\MM\backslash\tilde{B}^{(1)}_{\epsilon^{\gamma}}(x)}\tilde{K}^{(2)}_{\epsilon}(x,y)\tilde{K}^{(1)}_{\epsilon}(y,x'')F(x')\ud V^{(1)}(y) \right|.\nonumber
\end{align}
To bound the first integration in (\ref{proof:lemma1:eq1}), note that the assumption $d_1(x,x'')>2\epsilon^\gamma$ implies that $d_1(y,x'')\geq \epsilon^\gamma$ for all $y\in \tilde{B}^{(1)}_{\epsilon^{\gamma}}(x)$. 
Thus, by Lemma \ref{relateexp}, we know that when $\epsilon$ is small enough, $\|\iota^{(1)}(y)-\iota^{(1)}(x'')\|_{\RR^p}\geq (1+c)\epsilon^\gamma$, where $|c|<1$ depends on the second fundamental form of $\iota^{(1)}$. Thus, when $\epsilon$ is small enough, by the exponential decay assumption of $K^{(1)}$, we have
\begin{equation}
\tilde{K}^{(1)}\left(\frac{\|\iota^{(1)}(y)-\iota^{(1)}(x'')\|}{\sqrt{\epsilon}}\right)\leq c_2e^{-c_1(1+c)\epsilon^{2\gamma-1}}=O(\epsilon^{d/2+3/2})
\end{equation}
for all $y\in \tilde{B}^{(1)}_{\epsilon^{\gamma}}(x)$. Since $\tilde{K}^{(2)}\left(\frac{\|\iota^{(2)}(x)-\iota^{(2)}(y)\|}{\sqrt{\epsilon}}\right)\tilde{K}^{(1)}\left(\frac{\|\iota^{(1)}(y)-\iota^{(1)}(x'')\|}{\sqrt{\epsilon}}\right)$ is bounded by $1$, we have
\begin{align}
\left|\int_{\tilde{B}^{(1)}_{\epsilon^{\gamma}}(x)}\tilde{K}^{(2)}_{\epsilon}(x,y)\tilde{K}^{(1)}_{\epsilon}(y,x'')F(x') \ud V^{(1)}(y)\right|=\, O(\epsilon^{d/2+3/2}).
\end{align}
The second integration in (\ref{proof:lemma1:eq1}) could be bounded directly by taking Lemma \ref{Lemma:distortedDistance} and the fact that $\tilde{K}^{(1)}\left(\frac{\|\iota^{(1)}(y)-\iota^{(1)}(x'')\|}{\sqrt{\epsilon}}\right)\leq1$ into account. Denote $y=\exp^{(1)}_xv$ so that $\|v\|_{g^{(1)}}\geq \epsilon^{\gamma}$. 
Note that since $\Phi$ is a diffeomorphism, the smallest singular value of $\nabla\Phi|_{\iota^{(1)}(x)}$ is bounded from below, say $c_0>0$. Thus, when $\|v\|_{g^{(1)}}\geq \epsilon^\gamma$ and $\epsilon$ is small enough, we know 
\begin{align}
&\|\iota^{(2)}(x)-\iota^{(2)}(y)\|^2 \\
=&\,\|\nabla\Phi|_{\iota^{(1)}(x)}v\| ^2+2(\nabla\Phi|_{\iota^{(1)}(x)}v)^T \nabla^2\Phi|_{\iota^{(1)}(x)}(v,v)+O(\|v\|_{g^{(1)}}^4)\nonumber\\
\geq&\, c_0(1+c')\|v\|_{g^{(1)}}^2/2.\nonumber
\end{align} 
Hence, when $\epsilon$ is small enough, there exists $|c'|<1$ so that
\begin{align}
\tilde{K}^{(2)}\left(\frac{\|\iota^{(2)}(x)-\iota^{(2)}(y)\|}{\sqrt{\epsilon}}\right) \leq c_2e^{-c_1c_0(1+c')\epsilon^{2\gamma-1}/2}=O(\epsilon^{d/2+3/2}),
\end{align}
and hence
\begin{align}
&\left| \int_{\MM\backslash\tilde{B}_{1,\epsilon^{\gamma}}(x)}\tilde{K}^{(2)}_{\epsilon}(x,y)\tilde{K}^{(1)}_{\epsilon}(y,x'')F(x')\ud V^{(1)}(y) \right|=O(\epsilon^{d/2+3/2}),
\end{align}
which leads to the proof.
\end{proof}

\begin{lemma}\label{Lemma:ConvolutionKernel}
Suppose Assumptions (A1)-(A4) hold, $F\in C^3(\MM)$ and $0<\gamma<1/2$. Take a pair of $x,x''\in \MM$ so that $x''=\exp_x v$, where $v\in T_x\MM$ and $\|v\|_{g^{(1)}}\leq 2\epsilon^\gamma$. Fix normal coordinates around $x$ associated with $g^{(1)}$ and $g^{(2)}$, set \\$R_x=[\ud\exp^{(2)}_x|_0]^{-1}[\ud \iota^{(2)}]^{-1}\nabla\Phi[\ud\iota^{(1)}][\ud\exp^{(1)}_x|_0]:\RR^d\to \RR^d$ and by the \ac{SVD} $R_x=U_x\Lambda_xV_x^T$, where $U_x,V_x\in O(d)$. Then, when $\epsilon$ is small enough, the following holds:
\begin{align}
&\int_{\MM}\tilde{K}^{(2)}_{\epsilon}(x,y)\tilde{K}^{(1)}_{\epsilon}(y,x'')F(y)\ud V^{(1)}(y)\\
=&\,\epsilon^{d/2}\Big[F(x)A_0(v)+\epsilon A_{2}(F,v) +O(\epsilon^{3/2})\Big]\nonumber,
\end{align}
where 
\begin{align}
A_0(v):=\int_{\RR^d}\tilde{K}^{(2)}\left(\|R_xw\|\right) \tilde{K}^{(1)}\left(\|w-v/\sqrt{\epsilon}\|\right) \ud w,
\end{align}
$A_0(0)>0$, $A_0$ decays exponentially and for $i=1,2$, $A_{2}$ are defined in (\ref{Proof:KeyLemma:DefinitionH2eps}); both $A_0$ and $A_2$ decay exponentially as $\|v\|$ increases. 
Further, $A_0(-v)=A_0(v)$ and $A_{2}(F,-v)=A_{2}(F,v)$. 

When $K^{(1)}$ and $K^{(2)}$ are both Gaussian, that is, $K^{(1)}(t)=K^{(2)}(t)=e^{-t^2}/\sqrt{\pi}$, we have
\begin{align}
A_0(v)=\frac{\pi^{d/2}}{\sqrt{\det(I+\Lambda_x^2)}}e^{-\|(I+\Lambda_x^2)^{-1/2}\Lambda_xV_x^Tv\|^2/\epsilon}.\label{Lemma:Key:Hepsilon:BothAreGaussian}
\end{align}
\end{lemma}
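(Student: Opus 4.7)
The plan is a standard heat-kernel-style asymptotic expansion, adapted to the composition of two kernels sitting on the two different embeddings $\iota^{(1)}$ and $\iota^{(2)}$. First I would localize the integral to $\tilde{B}^{(1)}_{\epsilon^\gamma}(x)$: outside this ball, Lemma \ref{Lemma:distortedDistance} together with the uniform positive lower bound on the smallest singular value of $\nabla\Phi|_{\iota^{(1)}(x)}$ forces $\|\iota^{(2)}(x)-\iota^{(2)}(y)\|\gtrsim \epsilon^\gamma$, so assumption (A2) makes $\tilde{K}^{(2)}_\epsilon(x,y)$ smaller than any power of $\epsilon$, and since $\tilde{K}^{(1)}_\epsilon$ and $F$ are uniformly bounded the tail contributes $O(\epsilon^{d/2+3/2})$, exactly as in the proof of Lemma \ref{Lemma:ConvolutionKernelExtraPart}. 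Inside the ball I parametrize $y=\exp^{(1)}_x u$, use Lemma \ref{volexpansion} for $dV^{(1)}$, and rescale $u=\sqrt{\epsilon}w$ to get an integral over the dilating disk $B^{(1)}_{\epsilon^{\gamma-1/2}}(0)\subset T_x\MM$ with Jacobian $\epsilon^{d/2}$.

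The heart of the proof is the simultaneous expansion of the three $w$-dependent factors. Lemma \ref{relateexp} applied to the pair $(\sqrt{\epsilon}w,v)$ yields $\|\iota^{(1)}(y)-\iota^{(1)}(x'')\|/\sqrt{\epsilon}=\|w-v/\sqrt{\epsilon}\|+O(\epsilon)$, with no $\sqrt{\epsilon}$ correction because the second-fundamental-form terms there enter quadratically in $u$ and $v$. Lemma \ref{Lemma:distortedDistance} gives $\|\iota^{(2)}(x)-\iota^{(2)}(y)\|/\sqrt{\epsilon}=\|R_xw\|+\sqrt{\epsilon}\,a(w)+O(\epsilon)$ with the would-be $\sqrt{\epsilon}$ coefficient $a(w)\propto \langle \nabla\Phi\,d\iota^{(1)}(w),\Pi^{(2)}(w,w)\rangle/\|\nabla\Phi\,d\iota^{(1)}(w)\|$; this coefficient vanishes identically because $\nabla\Phi\,d\iota^{(1)}(w)=d\iota^{(2)}(w)$ is tangent to $\iota^{(2)}(\MM)$ whereas $\Pi^{(2)}(w,w)$ lies in its normal bundle — the key structural cancellation of the proof. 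Combined with Taylor expansions of the $C^2$ kernels $\tilde{K}^{(i)}$ around these leading arguments and of $F(\exp^{(1)}_x\sqrt{\epsilon}w)$ to second order, the integrand becomes a power series in $\sqrt{\epsilon}$ whose leading term is $F(x)\tilde{K}^{(2)}(\|R_xw\|)\tilde{K}^{(1)}(\|w-v/\sqrt{\epsilon}\|)$.

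Collecting orders, the $\epsilon^{d/2}$ leading piece is $F(x)A_0(v)$ once I replace the dilating disk by all of $\RR^d$ at exponentially small cost by (A2). The $\sqrt{\epsilon}$ correction now comes only from $\sqrt{\epsilon}\nabla^{(1)}F(x)\cdot w$, vanishes at $v=0$ by $w\mapsto -w$ parity of the radial integrand, and otherwise is exponentially decaying in $v/\sqrt{\epsilon}$; together with the genuine $\epsilon$-order contributions from the Hessian of $F$, the Ricci correction to $dV^{(1)}$, and the $O(\epsilon)$ remainders in the two kernel arguments (which bring in $\Pi^{(i)}$, $\nabla^{(1)}\Pi^{(1)}$ and $\nabla^2\Phi$), these pieces get packaged into the single $\epsilon$-dependent coefficient $\epsilon A_{2,\epsilon}(F,v)$ to be displayed as equation (\ref{Proof:KeyLemma:DefinitionH2eps}). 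The $O(1)$ size at $v=0$, exponential decay in $v/\sqrt{\epsilon}$, and the symmetry $A_{2,\epsilon}(F,-v)=A_{2,\epsilon}(F,v)$ all follow from the radial form of the kernels and the joint $(w,v)\mapsto(-w,-v)$ invariance of the leading integrand. The Gaussian special case is then just Lemma \ref{Lemma:KeyExpansion} applied with $S=R_x=U_x\Lambda_xV_x^T$: the integral depends on $R_x$ only through $R_x^TR_x=V_x\Lambda_x^2V_x^T$, so the orthogonal factors rearrange into the stated diagonal form in $\Lambda_x$.

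The main obstacle is the bookkeeping in the third paragraph: five separate sources of $\epsilon$-order corrections (both kernel arguments, both Taylor coefficients of $F$, and the volume element) must combine into one clean $A_{2,\epsilon}$, and the $\sqrt{\epsilon}$ piece from $\nabla^{(1)}F\cdot w$ must be correctly absorbed into this $\epsilon$-dependent coefficient without spoiling its $O(1)$ bound at $v=0$. The genuinely structural input is the tangent/normal orthogonality in the $\tilde{K}^{(2)}$ expansion; without it the expansion would carry an unavoidable $\epsilon^{d/2+1/2}$ term not of the stated form, and the rest is routine if lengthy Taylor algebra.
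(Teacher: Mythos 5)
Your overall skeleton (localization to $\tilde{B}^{(1)}_{\epsilon^{\gamma}}(x)$, expansion in normal coordinates, rescaling $u=\sqrt{\epsilon}w$, extension of the integration domain to $\RR^d$, and Lemma \ref{Lemma:KeyExpansion} for the Gaussian case) matches the paper's proof, but the step you yourself single out as the structural heart of the argument is wrong. The quadratic coefficient in the expansion of $\iota^{(2)}(\exp^{(1)}_x u)$ is not the normal-valued second fundamental form of $\iota^{(2)}$: it is the acceleration at $t=0$ of the curve $t\mapsto\iota^{(2)}(\exp^{(1)}_x(tu))$, namely $Q^{(2)}_2(u)=\frac{1}{2}\big(\nabla\Phi\,\Pi^{(1)}(u,u)+\nabla^{2}\Phi(\ud\iota^{(1)}u,\ud\iota^{(1)}u)\big)$, and since a $g^{(1)}$-geodesic is not a $g^{(2)}$-geodesic this vector has a tangential component coming from the Hessian of $\Phi$. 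Concretely, with $\gamma(t)=\exp^{(1)}_x(tu)$ one has $\langle\nabla\Phi\,\ud\iota^{(1)}(u),Q^{(2)}_2(u)\rangle=\frac{1}{4}\frac{\ud}{\ud t}\big|_{t=0}\,g^{(2)}(\dot\gamma,\dot\gamma)$, which is generically nonzero whenever $g^{(1)}\neq g^{(2)}$; only the $\nabla\Phi\,\Pi^{(1)}(u,u)$ part is killed by the tangent/normal splitting. This is exactly why the expansion (\ref{relateexpGeodesic2}) retains a second-order term, and the remark following Lemma \ref{Lemma:distortedDistance} says explicitly that this term survives because of $\nabla^{2}\Phi$. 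Since you state yourself that without this cancellation the expansion ``would carry an unavoidable $\epsilon^{d/2+1/2}$ term,'' your argument as written only covers the case where $\Phi$ is an isometry along $\MM$ ($\iota^{(1)}=\iota^{(2)}$), which is precisely the previously known setting, not the deformed one the lemma is about.

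What actually disposes of the $\epsilon^{1/2}$-order contributions is parity, not pointwise vanishing: the correction $\tilde{Q}^{(2)}_2(u)$ is an odd, degree-two homogeneous function of $u$, as is the $\nabla^{(1)}_u F(x)$ term, so under $(u,v)\mapsto(-u,-v)$ all $\epsilon^{1/2}$-order pieces assemble into coefficients that are antisymmetric in $v$ (the paper's $A_{1,\epsilon}(F,v)$ in (\ref{Proof:KeyLemma:DefinitionH1eps}) is of this type); these vanish at $v=0$ and are eliminated downstream, either when one integrates over $v$ against even quantities (Lemma \ref{KDE}) or because the relevant $F$ equals a constant up to $O(\epsilon)$ (proof of Theorem \ref{Theorem:ConvolutionKernel}). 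Relatedly, your plan to absorb the $\sqrt{\epsilon}\,\nabla^{(1)}F(x)\cdot w$ contribution into $\epsilon A_{2,\epsilon}(F,v)$ is inconsistent with the properties you then assert for $A_{2}$: for $\|v\|\sim\sqrt{\epsilon}$ that piece is genuinely of order $\sqrt{\epsilon}$ and antisymmetric in $v$, so it cannot be hidden inside an $O(1)$, even-in-$v$ coefficient multiplied by $\epsilon$; it has to be carried as a separate antisymmetric $\epsilon^{1/2}$ term and removed by the symmetry argument, which is how the paper proceeds.
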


This Lemma essentially says that the effective kernel associated with the \ac{AD} actually still enjoys the exponential decay property of the kernel functions we favor. Further, if the chosen kernels are both Gaussian, then the effective kernel is Gaussian. Note that due to the diffeomorphism, $H_\epsilon(v)$ is not isotropic, no matter if the kernels are Gaussian or not.

We remark more about the decomposition $R_x=U_x\Lambda_xV_x^T$. Note that $\ud\exp^{(i)}_x|_0:T_0T_x\MM\to T_x\MM$, where $i=1,2$, are unitary and we could view $T_0T_x\MM$ as $\RR^d$, so we would not distinguish $T_x\MM$ from $\RR^d$ under the chosen normal coordinates. Thus, we could apply the \ac{SVD} on $R_x$ and obtain $R_x=U_x\Lambda_xV_x^T$, where $U_x,V_x\in O(d)$ and $\Lambda_x$ is a diagonal matrix with non-zero eigenvalues. Clearly, by the diffeomorphism assumption, we know that the smallest eigenvalue of $\Lambda_x$ is uniformly bounded for $x\in\MM$ away from $0$ and from above. 

\begin{proof}
Since $\MM$ is compact and $K$ is positive, we have
\begin{align}
&\left|\int_{\MM}\tilde{K}^{(2)}_{\epsilon}(x,y)\tilde{K}^{(1)}_{\epsilon}(y,x'')F(y) \ud V^{(1)}(y)\right. \label{proof:lemma2:eq1}\\
&\left.\qquad\qquad-\int_{\tilde{B}^{(1)}_{\epsilon^{\gamma}}(x)}\tilde{K}^{(2)}_{\epsilon}(x,y)\tilde{K}^{(1)}_{\epsilon}(y,x'')F(y) \ud V^{(1)}(y)\right|\nonumber\\
\leq\,&\|F\|_{L^\infty}\left|\int_{\MM\backslash\tilde{B}^{(1)}_{\epsilon^{\gamma}}(x)}\tilde{K}^{(2)}_{\epsilon}(x,y)\tilde{K}^{(1)}_{\epsilon}(y,x'') \ud V^{(1)}(y)\right|\nonumber.
\end{align}
Since $K^{(1)}_{\epsilon}(y,x'')\leq 1$, by the same bound as that of the second integration of (\ref{proof:lemma1:eq1}), (\ref{proof:lemma2:eq1}) is bounded by $\epsilon^{d/2+3/2}$ when $\epsilon$ is small enough.
Next, we handle the main term we have interest: 
\begin{align}
&\int_{\tilde{B}^{(1)}_{\epsilon^{\gamma}}(x)}\tilde{K}^{(2)}_{\epsilon}(x,y)\tilde{K}^{(1)}_{\epsilon}(y,x'') F(y) \ud V^{(1)}(y)\label{Proof:KeyEstimation:1}\\
=\,&\int_{\tilde{B}^{(1)}_{\epsilon^{\gamma}}(x)}\tilde{K}^{(2)}\left(\frac{\|\iota^{(2)}(x)-\iota^{(2)}(y)\|}{\sqrt{\epsilon}}\right)\tilde{K}^{(1)}\left(\frac{\|\iota^{(1)}(y)-\iota^{(1)}(x'')\|}{\sqrt{\epsilon}}\right)F(y) \ud V^{(1)}(y).\nonumber
\end{align}

Suppose $y=\exp_x^{(1)}u$, where $u\in T_x\MM$ with $\|u\|_{g^{(1)}}\ll 1$ and $t\geq 0$. By (\ref{relateexpGeodesic2}) in Lemma \ref{Lemma:distortedDistance}, we have
\begin{align}
\|\iota^{(2)}(x)-\iota^{(2)}(y)\|=&\,  \|\nabla\Phi\ud\iota^{(1)}(u)\|+\tilde{Q}^{(2)}_3(u)+O(\|u\|^4),
\end{align}
where 
\begin{align}
&\tilde{Q}^{(2)}_3(u):=\frac{2\left\langle \nabla\Phi\ud\iota^{(1)}(u), Q^{(2)}_3(u)\right\rangle+\|Q^{(2)}_2(u)\|^2}{\|\nabla\Phi\ud\iota^{(1)}(u)\|}.
\end{align}
Note that $\tilde{Q}^{(2)}_2$ is an odd function and $\tilde{Q}^{(2)}_3$ is an even function.
Thus, by the Taylor expansion, the first kernel function in (\ref{Proof:KeyEstimation:1}) becomes
\begin{align}
&\tilde{K}^{(2)}\left(\frac{\|\iota^{(2)}(x)-\iota^{(2)}(y)\|}{\sqrt{\epsilon}}\right)\\
=\,&\tilde{K}^{(2)}\left(\frac{\|\nabla\Phi\ud\iota^{(1)}(u)\|}{\sqrt{\epsilon}}\right)+[\tilde{K}^{(2)}]'\left(\frac{\|\nabla\Phi\ud\iota^{(1)}(u)\|}{\sqrt{\epsilon}}\right)\frac{\tilde{Q}^{(2)}_3(u)}{\sqrt{\epsilon}}+O\left(\frac{\|u\|^5}{\epsilon}\right).\nonumber
\end{align}
Similarly, by applying (\ref{relateexp1Diff}) in Lemma \ref{relateexp}, when $x''=\exp^{(1)}_x(v)$, where $v\in T_x\MM$ with $\|v\|_{g^{(1)}}\ll1$, we have
\begin{align}
\|\iota^{(1)}(y)-\iota^{(1)}(x'')\|=\|\ud\iota^{(1)}(u-v)\|+\tilde{Q}^{(1)}_3(u,v)+O(\|u\|^4,\|v\|^4).
\end{align}
where
\begin{align}
\tilde{Q}^{(1)}_3(u,v)&=\frac{\|Q^{(1)}_2(u)-Q^{(1)}_2(v)\|^2+2\langle\ud \iota^{(1)}(u-v),Q_3^{(1)}(u)-Q_3^{(1)}(v)\rangle}{2\|\ud\iota^{(1)}(u-v)\|}.
\end{align}
Note that $\tilde{Q}^{(1)}_3(-u,-v)=\tilde{Q}^{(1)}_3(u,v)$.
Thus, by the Taylor expansion, the second kernel function in (\ref{Proof:KeyEstimation:1}) becomes
\begin{align}
\tilde{K}^{(1)}&\left(\frac{\|\iota^{(1)}(y)-\iota^{(1)}(x'')\|}{\sqrt{\epsilon}}\right)=\,\tilde{K}^{(1)}\left(\frac{\|\ud\iota^{(1)}(u-v)\|}{\sqrt{\epsilon}}\right)\\
&\qquad+[\tilde{K}^{(1)}]'\left(\frac{\|\ud\iota^{(1)}(u-v)\|}{\sqrt{\epsilon}}\right)\frac{\tilde{Q}^{(1)}_3(u,v)}{\sqrt{\epsilon}}+O\left(\frac{\|u\|^4}{\sqrt{\epsilon}},\frac{\|v\|^4}{\sqrt{\epsilon}}\right).\nonumber
\end{align}
By the same argument as that of Lemma \ref{Lemma:ConvolutionKernelExtraPart}, we could replace the integral domain, $B^{(1)}_{\epsilon^{\gamma}}\subset T_x\MM$, in (\ref{Proof:KeyEstimation:1}) by $T_x\MM$ with the error of order $O(\epsilon^{d/2+3/2})$, as $\epsilon$ is small enough. As a result, with the Taylor expansion of $F$, (\ref{Proof:KeyEstimation:1}) becomes
\begin{align}
&\int_{\RR^d}\Big[\tilde{K}^{(2)}\left(\frac{\|R_xu\|}{\sqrt{\epsilon}}\right)+[\tilde{K}^{(2)}]'\left(\frac{\|R_xu\|}{\sqrt{\epsilon}}\right)\frac{\tilde{Q}^{(2)}_3(u)}{\sqrt{\epsilon}}+O\left(\frac{\|u\|^5}{\epsilon}\right)\Big]\label{Proof:KeyTheorem:Expansion0}\\
&\quad\times \Big[\tilde{K}^{(1)}\left(\frac{\|u-v\|}{\sqrt{\epsilon}}\right)+[\tilde{K}^{(1)}]'\left(\frac{\|u-v\|}{\sqrt{\epsilon}}\right)\frac{\tilde{Q}^{(1)}_3(u,v)}{\sqrt{\epsilon}}+O\left(\frac{\|u\|^4}{\sqrt{\epsilon}},\frac{\|v\|^4}{\sqrt{\epsilon}}\right) \Big]\nonumber\\
&\quad\times  \big[F(x)+\nabla^{(1)}_u F(x)+\frac{{\nabla_{u,u}^{(1)}}^2F(x)}{2}+O(\|u\|^3)\big]\nonumber\\
&\quad\times \big[1-\Ric^{(1)}_{ij}(x)u^iu^j+O(\|u\|^3)\big]\ud u+O(\epsilon^{d/2+3/2})\nonumber\\
=\,&\epsilon^{d/2}\big[F(x)A_{0,\epsilon}(v)+\epsilon^{1/2}A_{1,\epsilon}(F,v)+\epsilon A_{2,\epsilon}(F,v)+O(\epsilon^{3/2})\big],\nonumber
\end{align}
where 
\begin{align}
A_{0,\epsilon}(v)&:=\epsilon^{-d/2}\int_{\RR^d}\tilde{K}^{(2)}\left(\frac{\|R_xu\|}{\sqrt{\epsilon}}\right) \tilde{K}^{(1)}\left(\frac{\|u-v\|}{\sqrt{\epsilon}}\right) \ud u\nonumber\\
A_{1,\epsilon}(F,v)&:=\epsilon^{-d/2-1/2}\int_{\RR^d}\tilde{K}^{(2)}\left(\frac{\|R_xu\|}{\sqrt{\epsilon}}\right)\tilde{K}^{(1)}\left(\frac{\|u-v\|}{\sqrt{\epsilon}}\right)\nabla^{(1)}_u F(x)\ud u\\
A_{2,\epsilon}(F,v)&:=\epsilon^{-d/2-1}[A_{21}(F,v)+A_{22}(F,v)+A_{23}(F,v)+A_{24}(F,v)]\nonumber
\end{align}
and
\begin{align}
A_{21}(F,v)&:=F(x)\int_{\RR^d} \Big[[\tilde{K}^{(2)}]'\left(\frac{\|R_xu\|}{\sqrt{\epsilon}}\right)\frac{\tilde{Q}^{(2)}_3(u)}{\sqrt{\epsilon}}\tilde{K}^{(1)}\left(\frac{\|u-v\|}{\sqrt{\epsilon}}\right) \ud u \nonumber\\
A_{22}(F,v)&:=F(x)\int_{\RR^d} \tilde{K}^{(2)}\left(\frac{\|R_xu\|}{\sqrt{\epsilon}}\right) [\tilde{K}^{(1)}]'\left(\frac{\|u-v\|}{\sqrt{\epsilon}}\right)\frac{\tilde{Q}^{(1)}_3(u,v)}{\sqrt{\epsilon}} \ud u\\
A_{23}(F,v)&:=\int_{\RR^d}\tilde{K}^{(2)}\left(\frac{\|R_xu\|}{\sqrt{\epsilon}}\right) \tilde{K}^{(1)}\left(\frac{\|u-v\|}{\sqrt{\epsilon}}\right)  {\nabla_{u,u}^{(1)}}^2F(x)\ud u\nonumber\\
A_{24}(F,v)&:=F(x)\int_{\RR^d}\tilde{K}^{(2)}\left(\frac{\|R_xu\|}{\sqrt{\epsilon}}\right) \tilde{K}^{(1)}\left(\frac{\|u-v\|}{\sqrt{\epsilon}}\right)  \Ric^{(1)}_{ij}(x)u^iu^j\ud u\nonumber.
\end{align}
Here, we sort the terms according to the order of $\epsilon$, and we claim that  $A_i(F,v)$ is of order $O(1)$, for $i=0,1,2$.
By a change of variable $w=u/\sqrt{\epsilon}$, we have
\begin{align}
A_{0,\epsilon}(v)=\int_{\RR^d}\tilde{K}^{(2)}\left(\|R_xw\|\right) \tilde{K}^{(1)}\left(\|w-v/\sqrt{\epsilon}\|\right) \ud w.
\end{align}
By the assumption of the kernels, $H(0)>0$ and $H$ decays exponentially.
Similarly, we have
\begin{align}
A_{1,\epsilon}(F,v)=&\int_{\RR^d}\tilde{K}^{(2)}\left(\|R_xu\|\right)\tilde{K}^{(1)}\left(\|u-v/\sqrt{\epsilon}\|\right)\nabla^{(1)}_u F(x)\ud u\label{Proof:KeyLemma:DefinitionH1eps}
\end{align}
and
\begin{align}
A_{2,\epsilon}(F,v)=&F(x)\int_{\RR^d} [\tilde{K}^{(2)}]'\left(\|R_xu\|\right) \tilde{K}^{(1)}\left(\|u-v/\sqrt{\epsilon}\|\right) \tilde{Q}^{(2)}_3(u)\ud u\label{Proof:KeyLemma:DefinitionH2eps}\\
&+F(x)\int_{\RR^d} \tilde{K}^{(2)}\left(\|R_xu\|\right) [\tilde{K}^{(1)}]'\left(\|u-v/\sqrt{\epsilon}\|\right)\tilde{Q}^{(1)}_3(u,v/\sqrt{\epsilon})\ud u\nonumber \\
&+\int_{\RR^d}\tilde{K}^{(2)}\left(\|R_xu\|\right) \tilde{K}^{(1)}\left(\|u-v/\sqrt{\epsilon}\|\right)  {\nabla_{u,u}^{(1)}}^2F(x)\ud u\nonumber\\
&+F(x)\int_{\RR^d}\tilde{K}^{(2)}\left(\|R_xu\|\right) \tilde{K}^{(1)}\left(\|u-v/\sqrt{\epsilon}\|\right)  \Ric^{(1)}_{ij}(x)u^iu^j\ud u\nonumber
\end{align} 
Note that since $\tilde{K}^{(1)}$ and $\tilde{K}^{(2)}$ both decay exponentially fast, we know that $A_{i,\epsilon}(F,v)$ also decay exponentially fast as $\|v\|$ increases. 

By the symmetric properties of $\tilde{Q}^{(1)}_3(u,v)$ and $\tilde{Q}^{(2)}_3(u)$, $A_{1,\epsilon}$ is anti-symmetric associated with $v$ and $A_{2,\epsilon}(F,v)$ is symmetric associated with $v$; that is, $A_{1,\epsilon}(F, -v)=-A_{1,\epsilon}(F, v)$ and $A_{2,\epsilon}(F, -v)=A_{2,\epsilon}(F, v)$.

Finally, when both kernels are Gaussian, that is, $\tilde{K}^{(1)}(t)=\tilde{K}^{(2)}(t)=e^{-t^2}/\sqrt{\pi}$, by the fact that $R_x=U_x\Lambda_xV_x^T$ and Lemma \ref{Lemma:KeyExpansion}, the leading order term in (\ref{Proof:KeyTheorem:Expansion0}) 
\begin{align}
A_{0,\epsilon}(v)\,&=\int_{\RR^d}e^{-\|\Lambda_x u\|^2}e^{-\|u-V_x^Tv/\sqrt{\epsilon}\|^2}\ud u\\
&\,=\frac{1}{\sqrt{\det(I+\Lambda_x^2)}}e^{-\|(I+\Lambda_x^2)^{-1/2}\Lambda_xV_x^Tv\|^2/\epsilon}.\nonumber
\end{align}
\end{proof}

With Lemma \ref{Lemma:ConvolutionKernelExtraPart} and Lemma \ref{Lemma:ConvolutionKernel}, we can finish the study of the effective kernel $\eK$. Before proving Theorem \ref{Theorem:ConvolutionKernel}, we mention that in the proof, we focus on how the \ac{AD} process behaves on $(\MM,g^{(1)})$ by converting most ``measurements'' on $(\MM,g^{(2)})$ back to $(\MM,g^{(1)})$. To do so, note that by Assumption 3.1 (A4), $\ud\mu_{\MM}=p_1\ud V^{(1)}=p_2\ud V^{(2)}$, (\ref{Definition:diffusionManifoldCaseT}) becomes
\begin{align}
Tf(x)&\,=\frac{\int_{\MM}\tilde{K}_\epsilon^{(e)}(x,x'')f(x'')p_1(x'')\ud V^{(1)}(x'')}{\int_{\MM}\tilde{K}_\epsilon^{(2)}(x,\bar{x})p_2(\bar{x})\ud V^{(2)}(\bar{x}) }\\
&\,=\frac{\int_{\MM}\tilde{K}_\epsilon^{(e)}(x,x'')f(x'')p_1(x'')\ud V^{(1)}(x'')}{\int_{\MM}\tilde{K}_\epsilon^{(2)}(x,\bar{x})p_1(\bar{x})\ud V^{(1)}(\bar{x}) } \nonumber
\end{align}
and (\ref{Definition:effectiveKernelManifoldCase}) becomes
\begin{align}
\tilde{K}_\epsilon^{(e)}(x,x'')&\,=\int_{\MM} 
\frac{\tilde{K}_\epsilon^{(2)}(x,x') \tilde{K}_\epsilon^{(1)}(x',x'')}{\int_{\MM}\tilde{K}_\epsilon^{(1)}(x',\bar{x})p_1(\bar{x})\ud V^{(1)}(\bar{x})}p_2(x')\ud V^{(2)}(x')\\
&\,=\int_{\MM} 
\frac{\tilde{K}_\epsilon^{(2)}(x,x') \tilde{K}_\epsilon^{(1)}(x',x'')}{\int_{\MM}\tilde{K}_\epsilon^{(1)}(x',\bar{x})p_1(\bar{x})\ud V^{(1)}(\bar{x})} p_1(x')\ud V^{(1)}(x').\label{Definition:effectiveKernelManifoldCase2}
\end{align}
As we will see shortly, although the \ac{AD} works on two different metrics, most quantities could be converted to $(\MM,g^{(1)})$, except the estimated distance among two points.

\begin{proof}[Proof of Theorem \ref{Theorem:ConvolutionKernel}]
To finish the proof, we study (\ref{Definition:effectiveKernelManifoldCase2}). We focus on $x''=\exp^{(1)}_xv$ so that $\|v\|_{g^{(1)}}\leq 2\epsilon^\gamma$. By the absolute continuity assumption of $\mu_\MM$, we have 
\begin{align}
\tilde{p}_{1,\epsilon}(x'):=\epsilon^{-d/2}\int_{\MM} \tilde{K}^{(1)}_\epsilon(x',x'')p_1(x'')\ud V^{(1)}(x''),
\end{align}
which by the assumption that $p_1\in C^4(\MM)$ and a direct expansion becomes
\begin{align}
\tilde{p}_{1,\epsilon}(x')=p_1(x')+\epsilon Q(x')+O(\epsilon^{2}),
\end{align}
where 
\begin{align}
Q(x'):= \frac{\mu^{(1)}_{2,0}}{2}\Big[s^{(1)}(x')+\frac{\int_{S^{d-1}}\|\Pi^{(1)}(\theta,\theta)\|\ud \theta}{24|S^{d-1}|} p_1(x')-\Delta^{(1)}p_1(x')\Big],
\end{align}
where $s^{(1)}(x')$ is the scalar curvature of $g^{(1)}$ at $x'$.
See, for example, \cite[Lemma B.10]{Singer_Wu:2012} for a proof. Thus, we have
\begin{align}
\frac{p_1(x')}{\tilde{p}_{1,\epsilon}(x')}=1-\epsilon Q(x')+O(\epsilon^{3/2}).\label{Proof:MainTheorem1:Formula1}
\end{align}
Since $g^{(1)}$ and $p_1$ are both $C^4$, we know that $Q\in C^2(\MM)$. Hence, by (\ref{Proof:MainTheorem1:Formula1}) we have
\begin{align}
A_{1,\epsilon}\Big(\frac{p_1}{\tilde{p}_{1,\epsilon}},v\Big)=A_{1,\epsilon}(1,v)+O(\epsilon)=O(\epsilon)
\end{align}
since $A_{1,\epsilon}(1,v)=0$, and
\begin{align}
A_{2,\epsilon}\Big(\frac{p_1}{\tilde{p}_{1,\epsilon}},v\Big)=A_{2,\epsilon}(1,v)+O(\epsilon).
\end{align}
By (\ref{Proof:MainTheorem1:Formula1}), the assumptions and Lemma \ref{Lemma:ConvolutionKernel}, the effective kernel becomes
\begin{align}
&\tilde{K}^{(e)}_\epsilon(x,x'')\\
=\,&\int_{\MM} \frac{\epsilon^{-d/2}\tilde{K}^{(2)}_\epsilon(x,x') \tilde{K}^{(1)}_\epsilon(x',x'')}{\epsilon^{-d/2}\int_{\MM} \tilde{K}^{(1)}_\epsilon(x',x'')p_1(x'')\ud V^{(1)}(x'')}p_1(x')\ud V^{(1)}(x')\nonumber\\
=\,&\epsilon^{-d/2}\int_{\MM} \tilde{K}^{(2)}_\epsilon(x,x') \tilde{K}^{(1)}_\epsilon(x',x'')\frac{p_1(x')}{\tilde{p}_{1,\epsilon}(x')}\ud V^{(1)}(x')\nonumber\\
=\,&A_{0,\epsilon}(v)+\epsilon(A_{2,\epsilon}(1,v)+Q(x)A_{0,\epsilon}(v))+O(\epsilon^{3/2})\nonumber,
\end{align}
\end{proof}

While the proof of Theorem \ref{Theorem:MainTheorem2} becomes standard once the effective kernel is known, we provide the proof for the sake of self-containedness and to show the net diffusion outcome on $(\MM,g^{(1)})$. We need the following Lemma, which describes the asymptotical behavior of a kernel not in the normalization form.
\begin{lemma}\label{KDE}
Suppose Assumption (A1)-(A4) hold. Fix $x\in \MM$ and pick $F \in C^3(\MM)$. Fix normal coordinates around $x$ associated with $g^{(1)}$ and $g^{(2)}$ so that $\{E_i\}_{i=1}^d\subset T_x\MM$ is orthonormal associated with $g^{(1)}$. Set $R_x=[\ud\exp^{(2)}_x|_0]^{-1}[\ud \iota^{(2)}]^{-1}\nabla\Phi[\ud\iota^{(1)}][\ud\exp^{(1)}_x|_0]$ and by the \ac{SVD} $R_x=U_x\Lambda_xV_x^T$, where $\Lambda_x=\text{diag}[\lambda_1,\ldots,\lambda_d]$.  Then, when $\epsilon$ is small enough, we have
\begin{align}
&\int_{\MM}\epsilon^{-d/2}\tilde{K}^{(e)}_\epsilon(x,x'')F(x'')\ud V^{(1)}(x'')=\frac{F(x)}{\det(\Lambda_x)}+\epsilon C_2(F)+O(\epsilon^{3/2})\label{Lemma:KeyLemma:KernelDiffusion}
\end{align}
where $C_2(F)$ is defined in (\ref{Proof:MainLemma:Definite:C2}). 
\end{lemma}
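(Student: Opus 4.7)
The plan is to feed the effective-kernel expansion of Theorem \ref{Theorem:ConvolutionKernel} into the integral and then expand $F$ together with the volume form around $x$ in normal coordinates associated with $g^{(1)}$, with a final rescaling $v=\sqrt{\epsilon}w$ that exposes the natural scale.

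First I would split the integral at the geodesic ball $\tilde{B}^{(1)}_{2\epsilon^\gamma}(x)$ for some fixed $\gamma\in(0,1/2)$. On the complement, the exponentially small kernel tails used in the proof of Lemma \ref{Lemma:ConvolutionKernelExtraPart} give $\tilde{K}^{(e)}_\epsilon(x,x'')=O(\epsilon^N)$ for every $N$, so after the prefactor $\epsilon^{-d/2}$ the exterior piece is absorbed into the $O(\epsilon^{3/2})$ remainder. On the interior I parametrize $x''=\exp^{(1)}_x(v)$, use Lemma \ref{volexpansion} to write $\ud V^{(1)}(\exp^{(1)}_xv)=[1-\tfrac{1}{6}\Ric^{(1)}_{kl}(x)v^kv^l+O(\|v\|^3)]\ud v$, Taylor expand $F(\exp^{(1)}_xv)=F(x)+\nabla^{(1)}_vF(x)+\tfrac{1}{2}{\nabla^{(1)}}^2_{v,v}F(x)+O(\|v\|^3)$, and invoke Theorem \ref{Theorem:ConvolutionKernel} to replace $\tilde{K}^{(e)}_\epsilon(x,x'')$ by $A_{0,\epsilon}(v)+\epsilon A_{2,\epsilon}(1,v)+O(\epsilon^{3/2})$.

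After the substitution $v=\sqrt{\epsilon}w$ (so that $\ud v=\epsilon^{d/2}\ud w$ exactly cancels the $\epsilon^{-d/2}$ in front), I would exploit that both $A_{0,\epsilon}(v)$ and $A_{2,\epsilon}(1,v)$ depend on $v$ only through $v/\sqrt{\epsilon}$; after rescaling they become the $\epsilon$-independent functions $B(w):=\int_{\RR^d}\tilde{K}^{(2)}(\|R_xu\|)\tilde{K}^{(1)}(\|u-w\|)\ud u$ and some $A_2^\star(w)$. The leading order then reads $F(x)\int_{\RR^d}B(w)\,\ud w$; by Fubini and $\mu^{(1)}_{0,0}=1$ from Assumption (A2), this reduces to $F(x)\int_{\RR^d}\tilde{K}^{(2)}(\|R_xu\|)\,\ud u$, and with $R_x=U_x\Lambda_xV_x^T$ the change of variable $y=R_xu$ converts it into $F(x)\mu^{(2)}_{0,0}/\det\Lambda_x=F(x)/\det\Lambda_x$. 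The $\sqrt{\epsilon}$-order contribution, which is linear in $w$ through $\nabla^{(1)}_wF(x)$, vanishes because $B(-w)=B(w)$ (verified by $u\mapsto -u$ in the defining integral). The order-$\epsilon$ correction $C_2(F)$ then collects three integrable pieces: (i) the Hessian term $\tfrac{1}{2}{\nabla^{(1)}}^2_{E_i,E_j}F(x)\int w^iw^jB(w)\,\ud w$, (ii) the subleading kernel term $F(x)\int A_2^\star(w)\,\ud w$, and (iii) the Ricci correction $-\tfrac{1}{6}\Ric^{(1)}_{kl}(x)F(x)\int w^kw^lB(w)\,\ud w$ from the volume form. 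All remaining contributions are $O(\epsilon^{3/2})$ thanks to the exponential decay of $B$ and $A_2^\star$.

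The principal obstacle I anticipate is bookkeeping: confirming that no hidden $\sqrt{\epsilon}$-order term survives (this uses the parity of both $B$ and $A_{2,\epsilon}(1,\cdot)$, the latter already noted in Theorem \ref{Theorem:ConvolutionKernel}), and verifying uniformly in $x$ that the exponential tails of $\tilde{K}^{(1)}$ and $\tilde{K}^{(2)}$ are strong enough to push below $O(\epsilon^{3/2})$ the errors coming from truncating at radius $\epsilon^\gamma$, from extending the rescaled integration domain to all of $\RR^d$, and from the Taylor remainders of $F$ and of the volume form.
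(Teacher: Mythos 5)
Your proposal is correct and follows essentially the same route as the paper's proof: truncation to a ball of radius $O(\epsilon^\gamma)$ handled by the tail estimate of Lemma \ref{Lemma:ConvolutionKernelExtraPart}, substitution of the effective-kernel expansion from Theorem \ref{Theorem:ConvolutionKernel}, Taylor expansion of $F$ and of the volume form in $g^{(1)}$-normal coordinates, vanishing of the $\sqrt{\epsilon}$-order term by parity, and evaluation of the zeroth-order moment via Fubini and the change of variable through $R_x=U_x\Lambda_xV_x^T$ to get $F(x)/\det(\Lambda_x)$. One bookkeeping caveat: the paper's proof carries along the density-correction term $Q(x)A_{0,\epsilon}(v)$ (present in the proof of Theorem \ref{Theorem:ConvolutionKernel}, though not displayed in its statement), so its $C_2(F)$ in (\ref{Proof:MainLemma:Definite:C2}) contains $F(x)\int_{\RR^d}\big(A_{2,1}(v)+Q(x)A_{0,1}(v)\big)\ud v$ rather than only your term $F(x)\int A_2^\star(w)\,\ud w$; this unsimplified piece cancels later in Theorem \ref{Theorem:MainTheorem2}, but you should include it if you want $C_2(F)$ to match the cited definition exactly.
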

\begin{proof}
We focus on $x''=\exp^{(1)}_xv$ so that $\|v\|_{g^{(1)}}\leq 2\epsilon^\gamma$. 
Note that by Lemma (\ref{Lemma:ConvolutionKernelExtraPart}), we will ignore the integration over $\MM\backslash\tilde{B}_{\epsilon^{\gamma}}(x)$ with the error of order $\epsilon^{d/2+3/2}$. 
By the exponential decay of $A_{0,\epsilon}(v)$ and $A_{2,\epsilon}(v)$ and the finite volume assumption of $\MM$, when $\epsilon$ is small enough, by Theorem \ref{Theorem:ConvolutionKernel}, (\ref{Lemma:KeyLemma:KernelDiffusion}) becomes
\begin{align}
&\int_{\MM}\epsilon^{-d/2}\tilde{K}^{(e)}_\epsilon(x,x'')F(x'')\ud V^{(1)}(x'')\label{Proof:Lemma:Numerator1}\\
=&\,\int_{\tilde{B}_{\epsilon^{\gamma}}(x)} \epsilon^{-d/2}\tilde{K}^{(e)}_\epsilon(x,x'')F(x'')\ud V^{(1)}(x'')+O(\epsilon^{3/2})\nonumber\\
=&\,\epsilon^{-d/2}\int_{B_{\epsilon^{\gamma}}} \left[A_{0,\epsilon}(v)+\epsilon(A_{2,\epsilon}(1,v)+Q(x)A_{0,\epsilon}(v))\right] \nonumber\\
&\,\qquad\times \big[F(x)+\nabla^{(1)}_v F(x)+\frac{1}{2}{\nabla_{v,v}^{(1)}}^2F(x)) \big]\nonumber\\
&\,\qquad\times \big[1+\Ric^{(1)}_{ij}v^iv^j\big]\ud v+O(\epsilon^{3/2}).\nonumber
\end{align}
Expand the multiplication and sort terms in the increasing order, we have to evaluate the following integrations to obtain (\ref{Proof:Lemma:Numerator1}):
\begin{align} 
&\epsilon^{-d/2}F(x)\int_{B_{\epsilon^{\gamma}}}A_{0,\epsilon}(v)\ud v\label{Proof:Lemma:Numerator:expansion1}\\
&\epsilon^{-d/2+1/2}\int_{B_{\epsilon^{\gamma}}} A_{0,\epsilon}(v)\nabla^{(1)}_v F(x) \ud  v\label{Proof:Lemma:Numerator:expansion3}\\
&\epsilon^{-d/2}\int_{B_{\epsilon^{\gamma}}} A_{0,\epsilon}(v)\big(\frac{1}{2}{\nabla_{v,v}^{(1)}}^2F(x)+F(x)\Ric^{(1)}_{ij}v^iv^j\big) \ud  v\label{Proof:Lemma:Numerator:expansion7}\\
&\qquad+\epsilon^{-d/2+1} F(x)\int_{B_{\epsilon^{\gamma}}} (A_{2,\epsilon}(v)+Q(x)A_{0,\epsilon}(v)) \ud  v.\nonumber
\end{align}
Here, (\ref{Proof:Lemma:Numerator:expansion1}) is the $0$-th order term, (\ref{Proof:Lemma:Numerator:expansion3}) is the $1$-st order term and (\ref{Proof:Lemma:Numerator:expansion7}) is the $2$-nd order term. By the same argument as the above, we could replace the integral domain $B_{\epsilon^{\gamma}}$ by $\RR^d$ in (\ref{Proof:Lemma:Numerator:expansion1}), (\ref{Proof:Lemma:Numerator:expansion3}) and (\ref{Proof:Lemma:Numerator:expansion7}), with a higher order error. By plugging $A_0(v)$ and changing the integration order, (\ref{Proof:Lemma:Numerator:expansion1}) becomes
\begin{align}
F(x)\int_{\RR^d}\tilde{K}^{(2)}\left(\|R_xw\|\right) \Big[ \int_{\RR^d}\tilde{K}^{(1)}\left(\|w-v\|\right)\ud v\Big] \ud w+O(\epsilon^{3/2}),
\end{align}
which by a direct calculation becomes 
\begin{equation}
\frac{F(x)}{\det(\Lambda_x)}+O(\epsilon^{3/2}).
\label{Proof:MainLemma:Definite:C0}
\end{equation}
Here, we use the assumption that $\mu^{(1)}_{0,0}=\mu^{(2)}_{0,0}=1$, change of variable of $u=R_xw$ and the \ac{SVD} $R_x=U_x\Lambda_xV_x^T$.
The first order term (\ref{Proof:Lemma:Numerator:expansion3}) become $0$ due to the anti-symmetry of $\nabla^{(1)}_v F(x)$ associated with $v$. 

To evaluate (\ref{Proof:Lemma:Numerator:expansion7}), note that by the change of variable we have
\begin{align}
&\epsilon^{-d/2}\int_{B_{\epsilon^{\gamma}}} A_{0,\epsilon}(v)\big(\frac{1}{2}{\nabla_{v,v}^{(1)}}^2F(x)+F(x)\Ric^{(1)}_{ij}v^iv^j\big) \ud  v\label{Proof:MainLemma:Numerator:expansion7:Part1}
\\
=\,&\epsilon\int_{\RR^d} A_{0,1}(v)\big(\frac{1}{2}{\nabla_{v,v}^{(1)}}^2F(x)+F(x)\Ric^{(1)}_{ij}v^iv^j\big) \ud  v+O(\epsilon^{3/2})\nonumber
\end{align}
and
\begin{align}
&\epsilon^{-d/2+1} F(x)\int_{B_{\epsilon^{\gamma}}}(A_{2,\epsilon}(v)+Q(x)A_{0,\epsilon}(v))\ud  v\label{Proof:MainLemma:Numerator:expansion7:Part2}\\
=&\,\epsilon F(x)\int_{\RR^d}(A_{2,1}(v)+Q(x)A_{0,1}(v))\ud  v+O(\epsilon^{3/2}).\nonumber
\end{align}
By noting that ${\nabla_{v,v}^{(1)}}^2F(x)=v^T{\nabla^{(1)}}^2F(x)v$, to simplify (\ref{Proof:MainLemma:Numerator:expansion7:Part1}), we denote $S_x:=\frac{1}{2}{\nabla^{(1)}}^2F(x)+F(x)\Ric^{(1)}(x)$, which is a symmetric matrix. By a change of variable $u=w-v$, we have
\begin{align}
&\int_{\RR^d} A_{0,1}(v)\big(\frac{1}{2}{\nabla_{v,v}^{(1)}}^2F(x)+F(x)\Ric^{(1)}_{ij}v^iv^j\big) \ud  v\\
=\,&\int_{\RR^d}\tilde{K}^{(2)}\left(\|R_xw\|\right) \Big[\int_{\RR^d} \tilde{K}^{(1)}\left(\|w-v\|\right)v^TS_xv \ud  v\Big] \ud w\nonumber\\
=\,&\int_{\RR^d}\tilde{K}^{(2)}\left(\|R_xw\|\right) \Big[\int_{\RR^d} \tilde{K}^{(1)}\left(\|u\|\right)(w-u)^TS_x(w-u) \ud  u\Big] \ud w\nonumber\\
=\,&\int_{\RR^d}\tilde{K}^{(2)}\left(\|R_xw\|\right) \Big[\int_{\RR^d} \tilde{K}^{(1)}\left(\|u\|\right)w^TS_xw \ud  u\Big] \ud w\nonumber\\
&\,\qquad+\int_{\RR^d}\tilde{K}^{(2)}\left(\|R_xw\|\right) \Big[\int_{\RR^d} \tilde{K}^{(1)}\left(\|u\|\right)u^TS_xu\ud  u\Big] \ud w\nonumber\\
=\,&\frac{1}{d}\int_{\RR^d}\tilde{K}^{(2)}\left(\|R_xw\|\right) w^TS_xw\ud w+\frac{1}{d\det(\Lambda_x)}\int_{\RR^d} \tilde{K}^{(1)}\left(\|u\|\right)u^TS_xu\ud  u\nonumber
\end{align}
where the third equality holds since the crossover term 
\begin{align}
\iint\tilde{K}^{(2)}\left(\|R_xw\|\right) \tilde{K}^{(1)}\left(\|u\|\right)u^TSw \ud  u\ud w=0
\end{align}
due to the symmetry of $S^{d-1}$, and the last equality holds due to $\int_{\RR^d}\tilde{K}^{(2)}\left(\|R_xw\|\right)\ud w =\frac{1}{\det(\Lambda_x)}$. Note that 
\begin{align}
&\int_{\RR^d}\tilde{K}^{(2)}\left(\|R_xw\|\right) w^TSw\ud w\\
=&\,\frac{\mu^{(2)}_{2,0}}{d\det(\Lambda_x)}\text{tr}(\Lambda_xV^T_xS_xV_x\Lambda_x)=\frac{\mu^{(2)}_{2,0}}{d\det(\Lambda_x)}\text{tr}(V_x\Lambda_xV^T_xS_xV_x\Lambda_xV_x^T)\nonumber\\
=&\,\frac{\mu^{(2)}_{2,0}}{d\det(\Lambda_x)}\Big(\frac{1}{2}\sum_{i=1}^d\lambda_i\big[{\nabla^{(1)}}^2_{E_i,E_i}F(x)+\Ric^{(1)}_{ii}(x)F(x)\big]\Big)\nonumber
\end{align}
and
\begin{align}
\int_{\RR^d} \tilde{K}^{(1)}\left(\|u\|\right)u^TSu\ud  u=\frac{\mu^{(1)}_{2,0}}{d}\big(\frac{1}{2}\Delta^{(1)}F(x)+s^{(1)}(x)F(x)\big), 
\end{align}
since $\Delta^{(1)}F(x)=\sum_{i=1}^d{\nabla^{(1)}}^2_{E_i,E_i}F(x)$ and $\sum_{i=1}^d\Ric^{(1)}_{ii}(x)=s^{(1)}(x)$.
Hence, we have obtained a simplification of (\ref{Proof:MainLemma:Numerator:expansion7:Part1}).

As a result, by denoting
\begin{align}
C_2(F):=&\frac{\mu^{(2)}_{2,0}\big(\frac{1}{2}\sum_{i=1}^d\lambda_i\big[{\nabla^{(1)}}^2_{E_i,E_i}F(x)+\Ric^{(1)}_{ii}(x)F(x)\big]\big)}{d^2\det(\Lambda_x)}\label{Proof:MainLemma:Definite:C2}\\
&+\frac{\mu^{(1)}_{2,0}\big(\frac{1}{2}\Delta^{(1)}F(x)+s^{(1)}(x)F(x)\big)}{d^2\det(\Lambda_x)}\nonumber\\
&+ F(x)\int_{\RR^d}(A_{2,1}(v)+Q(x)A_{0,1}(v))\ud  v\nonumber,
\end{align}
we conclude that (\ref{Proof:Lemma:Numerator:expansion7}) becomes $\epsilon C_2(F)+O(\epsilon^{3/2})$. Note that we do not simplify $ F(x)\int_{\RR^d}(A_{2,1}(v)+Q(x)A_{0,1}(v))\ud  v$ since this term will be eliminated eventually.

\end{proof}

With the above, we could finish the proof of Theorem \ref{Theorem:MainTheorem2}. We mention that we could see from the proof that the net outcome of the \ac{AD}, while two metrics are involved, eventually could be viewed as a ``deformed'' diffusion solely with the single metric $g^{(1)}$.

\begin{proof}[Proof of Theorem \ref{Theorem:MainTheorem2}]
The proof is based on Lemma \ref{KDE}. The numerator of $D^{(e)}f(x)$ is exactly the same as that stated in Lemma \ref{KDE}; that is,
\begin{align}
&\int_{\MM}\epsilon^{-d/2}\tilde{K}^{(e)}_\epsilon(x,x'')f(x'')p_1(x'')\ud V^{(1)}(x'')\\
=\,&\frac{p_1(x)f(x)}{\det(\Lambda_x)}+\epsilon C_2(fp_1)+O(\epsilon^{3/2}).\nonumber
\end{align}
The denominator of $D^{(e)}f(x)$ is evaluated by plugging $f=1$ into Lemma \ref{KDE}; that is, we have
\begin{align}
\int_{\MM}\epsilon^{-d/2}\tilde{K}^{(e)}_\epsilon(x,x'')p_1(x'')\ud V^{(1)}(x'')=\frac{p_1(x)}{\det(\Lambda_x)}+\epsilon C_2(p_1)+O(\epsilon^{3/2}).
\end{align}
Finally, by the binomial expansion of the denominator, we obtain the result; that is,
\begin{align}
&\frac{\int_{\MM}\tilde{K}^{(e)}_\epsilon(x,x'')f(x'')p_1(x'')\ud V^{(1)}(x'')}{\int_{\MM}\tilde{K}^{(e)}_\epsilon(x,x'')p_1(x'')\ud V^{(1)}(x'')}\\
=\,& f(x)+\epsilon\frac{\det(\Lambda_x)}{p_1(x)}\Big[C_2(fp_1)-f(x)C_2(p_1)\Big]+O(\epsilon^{3/2})\nonumber.
\end{align}
By a direct expansion, we have
\begin{align}
C_2(fp_1)-f(x)C_2(p_1):=&\frac{\mu^{(2)}_{2,0}\sum_{i=1}^d\lambda_i\big[p_1(x){\nabla^{(1)}}^2_{E_i,E_i}f(x)+2\nabla^{(1)}_{E_i}f(x)\nabla^{(1)}_{E_i}p_1(x)\big]}{2d^2\det(\Lambda_x)}\nonumber\\
&+\frac{\mu^{(1)}_{2,0}[p_1(x)\Delta^{(1)}f(x)+2\nabla^{(1)} f(x)\nabla^{(1)} p_1(x)]}{2d^2\det(\Lambda_x)}
\end{align}
and hence the conclusion
\begin{align}
&\frac{\int_{\MM}\tilde{K}^{(e)}_\epsilon(x,x'')f(x'')p_1(x'')\ud V^{(1)}(x'')}{\int_{\MM}\tilde{K}^{(e)}_\epsilon(x,x'')p_1(x'')\ud V^{(1)}(x'')}\\
=\,& f(x)+\frac{\epsilon \mu^{(2)}_{2,0}}{2d^2}\sum_{i=1}^d\lambda_i\big[{\nabla^{(1)}}^2_{E_i,E_i}f(x)+\frac{2\nabla^{(1)}_{E_i}f(x)\nabla^{(1)}_{E_i}p_1(x)}{p_1(x)}\big] \nonumber\\
&\qquad+\frac{\epsilon \mu^{(1)}_{2,0}}{2d^2}\Big[\Delta^{(1)}f(x)+\frac{2\nabla^{(1)} f(x)\nabla^{(1)} p_1(x)}{p_1(x)}\Big]+O(\epsilon^{3/2})\nonumber.
\end{align}
\end{proof}

Finally, we show the proof of Corollary \ref{Theorem:MainCorollary}.
\begin{proof}[Proof of Corollary \ref{Theorem:MainCorollary}]
Combining (\ref{Definition:AD:ExpressionForAnalysisOnM}), Theorem \ref{Theorem:MainTheorem2}, and the assumption of $\nu_{\MM}$, for every fixed $(y,z)$, we have
\begin{align}
Df(x)=&\int_{\MM}\Big[\int_{\MM} P^{(\mathcal{N}_2,z)}_\epsilon(x,x') P^{(\mathcal{N}_1,y)}_\epsilon(x',x'')\ud\nu_{\MM}(x')\Big] \mathcal{E}f(x'')\ud\nu_{\MM}(x'')\\
=&\,\mathcal{E}f(x)+\frac{\epsilon \mu^{(2)}_{2,0}(z)}{2d^2}\Big[\sum_{i=1}^d\lambda_i\big({\nabla^{(1)}}^2_{E_i,E_i}\mathcal{E}f(x)+\frac{2\nabla^{(1)}_{E_i}\mathcal{E}f(x)\nabla^{(1)}_{E_i}p_1(x)}{p_1(x)}\big)  \Big] \nonumber\\
&+\frac{\epsilon \mu^{(1)}_{2,0}(z)}{2d^2}\Big[\Delta^{(1)}\mathcal{E}f(x)+\frac{2\nabla^{(1)} \mathcal{E}f(x)\nabla^{(1)} p_1(x)}{p_1(x)}\Big]+O(\epsilon^{3/2}),\nonumber
\end{align}
where 
\begin{align}
\mu^{(i)}_{2,0}(z)\,=\int_{\mathbb{R}^d}\|y\|^2\tilde{P}^{(O_i,z)}(\|y\|)\ud y.
\end{align}
By the commutativity of $\mathcal{E}$ and $D$, applying the effective operator $\mathcal{E}$ bilaterally yields
\begin{align}
\mathcal{E}Df(x)=\,&\mathcal{E}f(x)+\frac{\epsilon C_2(x)}{2d^2}\Big[\Delta^{(2)}\mathcal{E}f(x)+\frac{2\nabla^{(2)} \mathcal{E}f(x)\nabla^{(2)} p_1(x)}{p_1(x)}\Big] \\
&+\frac{\epsilon C_1(x)}{2d^2}\Big[\Delta^{(1)}\mathcal{E}f(x)+\frac{2\nabla^{(1)} \mathcal{E}f(x)\nabla^{(1)} p_1(x)}{p_1(x)}\Big]+O(\epsilon^{3/2}),\nonumber
\end{align}
where $C_i(x)=\int_{\mathcal{N}_i} \mu^{(i)}_{2,0}(z)\ud\nu_{\mathcal{N}_i|\MM}(z|x)$. To finish the proof, we show that the dependence of $C_i(x)$ on $x$ appears on the higher order term. Denote $\tilde{B}^{(1)}_{h}(x):=\exp^{(1)}_x(B^{(1)}_{h})$,
where $B^{(1)}_{h}=\{u\in T_x\MM|\,\|u\|_{g^{(1)}}\leq h\}\subset T_x\MM$ is a $d$-dim disk with the center $0$ and the radius $h>0$.
Thus, by the same arguments as above, we have
\begin{align}
&\int_{\mathcal{N}_i} \mu^{(i)}_{2,0}(z)\ud\nu_{\mathcal{N}_i|\MM}(z|x)\\
=\,&\int_{\mathcal{N}_i} \int_{\mathbb{R}^d}\|y\|^2\tilde{P}^{(O_i,z)}(\|y\|)\ud y\ud\nu_{\mathcal{N}_i|\MM}(z|x)\nonumber\\
=\,&\int_{\mathcal{N}_i}\int_{\mathcal{N}_i} \int_{\tilde{B}_x(\epsilon^{1/2})}d^2_{g^{(i)}}(y,x)\tilde{P}^{(i)}\left(\sqrt{d^2_{g^{(i)}}(y,x)+d^2_{\mathcal{N}_i}(z,z')}\right)\ud V(y)\ud\nu_{\mathcal{N}_i|\MM}(z'|x)\ud\nu_{\mathcal{N}_i|\MM}(z|x)+O(\epsilon)\nonumber\\
=\,& \int_{\tilde{B}_x(\epsilon^{1/4})}d^2_{g^{(i)}}(y,x) \Big[\int_{\mathcal{N}_i}\int_{\mathcal{N}_i}\tilde{P}^{(i)}\left(\sqrt{d^2_{g^{(i)}}(y,x)+d^2_{\mathcal{N}_i}(z,z')}\right)\ud\nu_{\mathcal{N}_i|\MM}(z'|x)\ud\nu_{\mathcal{N}_i|\MM}(z|x)\Big]\ud V(y)+O(\epsilon)\nonumber.
\end{align}
By denoting 
\begin{align}
\tilde{\tilde{P}}^{(i)}\left(d_{g^{(i)}}(y,x)\right):=\int_{\mathcal{N}_i}\int_{\mathcal{N}_i}\tilde{P}^{(i)}\left(\sqrt{d^2_{g^{(i)}}(y,x)+d^2_{\mathcal{N}_i}(z,z')}\right)\ud\nu_{\mathcal{N}_i|\MM}(z'|x)\ud\nu_{\mathcal{N}_i|\MM}(z|x)\nonumber,
\end{align}
we have
\begin{align}
&\int_{\mathcal{N}_i} \mu^{(i)}_{2,0}(z)\ud\nu_{\mathcal{N}_i|\MM}(z|x)\\
=\,&\int_{\tilde{B}_x(\epsilon^{1/4})}d^2_{g^{(i)}}(y,x) \tilde{\tilde{P}}^{(i)}\left(d_{g^{(i)}}(y,x)\right)\ud V(y)+O(\epsilon)\nonumber\\
=\,&\int_{B_x(\epsilon^{1/4})}\|y\|^2 \tilde{\tilde{P}}^{(i)}\left(\|y\|\right)\ud y+O(\epsilon)\nonumber\\
=\,&\int_{\RR^d}\|y\|^2 \tilde{\tilde{P}}^{(i)}\left(\|y\|\right)\ud y+O(\epsilon)\nonumber.
\end{align}
By denoting $C_i:=\int_{\RR^d}\|y\|^2 \tilde{\tilde{P}}^{(i)}\left(\|y\|\right)\ud y$, we hence obtain the conclusion.
\end{proof}

\end{document}